\newtheorem{theorem}{Theorem}
\newtheorem{corollary}{Corollary}
\newtheorem{lemma}{Lemma}
\theoremstyle{definition}
\newtheorem{definition}{Definition}
\theoremstyle{remark}
\newtheorem{remark}{Remark}[section]
\newcommand{\iid}{\stackrel{\mathrm{i.i.d.}}{\sim}}
\newcommand{\amin}[1]{\underset{#1}{\operatorname{argmin~}}} 
\newcommand{\inP}{\stackrel{p}{\to}}
\newcommand{\reals}{\mathbb{R}}
\def\spacingset#1{\renewcommand{\baselinestretch}%
{#1}\small\normalsize} \spacingset{1}
  \title{\bf Inference After Selecting Plausibly Valid Instruments with Application to Mendelian Randomization}
\author[1]{Nan Bi}
\author[2]{Hyunseung Kang} 
\author[1]{Jonathan Taylor}
\affil[1]{Department of Statistics, Stanford University}
\affil[2]{Department of Statistics, University of Wisconsin-Madison}
\date{}                     
\begin{document}
\maketitle

\bigskip
\begin{abstract}
Mendelian randomization (MR) is a popular method in genetic epidemiology to estimate the effect of an exposure on an outcome by using genetic instruments. These instruments are often selected from a combination of prior knowledge from genome wide association studies (GWAS) and data-driven instrument selection procedures or tests. Unfortunately, when testing for the exposure effect, the instrument selection process done a priori is not accounted for. This paper studies and highlights the bias resulting from not accounting for the instrument selection process by focusing on a recent data-driven instrument selection procedure, sisVIVE, as an example. We introduce a conditional inference approach that conditions on the instrument selection done a priori and leverage recent advances in selective inference to derive conditional null distributions of popular test statistics for the exposure effect in MR. The null distributions can be characterized with individual-level or summary-level data in MR. We show that our conditional confidence intervals derived from conditional null distributions attain the desired nominal level while typical confidence intervals computed in MR do not. We conclude by reanalyzing the effect of BMI on diastolic blood pressure using summary-level data from the UKBiobank that accounts for instrument selection.
\end{abstract}

\noindent%
{\it Keywords:} Anderson-Rubin test, selective inference, sisVIVE, summary data, two-stage least squares

\bigskip
\spacingset{1.5} 

\section{Introduction} \label{sec:intro}

\subsection{Motivation: Mendelian Randomization and Selecting Instruments} 

Mendelian randomization (MR) \citep{davey_smith_mendelian_2003, davey_smith_mendelian_2004,burgess_mendelian_2015} has become a popular tool to estimate the effect of an exposure or a treatment on an outcome. In a nutshell, MR utilizes instrumental variables (IV), a popular method in causal inference, epidemiology, and economics \citep{angrist_instrumental_2001,hernan_instruments_2006,baiocchi_instrumental_2014} and large genome wide association studies (GWAS) to find genetic variants, typically single nucleotide polymorphisms (SNPs) and called genetic instruments, that satisfy the three core assumptions: 
\begin{itemize}
\item[(A1)] the instruments are associated with the treatment
\item[(A2)] the instruments have no direct effect on the outcome conditional on the treatment value
\item[(A3)] the instruments are unconfounded;
\end{itemize}
See Section \ref{sec:def} for formal mathematical definitions in our model. If the magnitude of the association in (A1) is large, instruments are said to be strong and if the magnitude is small, instruments are said to be weak. If instruments satisfy (A2) and (A3), they are referred to as valid instruments \citep{murray_avoiding_2006}. 

A standard MR analysis starts by selecting SNPs/instruments that satisfy the three assumptions, especially the first two assumptions (A1) and (A2); (A3) is generally assumed to hold in MR studies \citep{burgess_mendelian_2015}. Selecting instruments that satisfy (A1) is typically based on prior GWAS that show strong links between the SNPs and the exposure of interest. Selecting instruments that satisfy (A2) requires SNPs to only affect the outcome via the exposure; SNPs must not have pleiotropic effects or affect other variables which affect the outcome \citep{solovieff_pleiotropy_2013}. Unfortunately, finding non-pleiotropic instruments is a challenge in MR, especially if the biomarker and/or the outcome are complex traits \citep{little_mendelian_2003, thomas_commentary_2004, brennan_commentary_2004, lawlor_mendelian_2008}, and many MR investigators use a combination of subject-matter expertise, prior GWAS, and ad-hoc association tests, to assess (A2). More recently, data-driven methods have been proposed to select instruments that plausibly satisfy (A2) under the assumption of a linear MR model \citep{andrews_consistent_1999, kang_instrumental_2016,windmeijer2016use,guo2016causalci}; see Section \ref{sec:def} for details. Once instruments are selected to plausibly satisfy (A1) to (A3), the effect of the exposure on the outcome is estimated and tested. The two most common inferential goals are  (i) testing the null hypothesis of no effect $H_0: \beta^* = 0$ where $\beta^*$ is the parameter for the effect of the exposure on the outcome and (ii) constructing a $95\%$ confidence interval of $\beta^*$.


As a concrete example of a standard MR analysis, \citet{vimaleswaran_causal_2013} studied the causal effect of obesity, as measured by the body mass index (BMI), on serum vitamin D levels. To satisfy (A1),  the authors selected 12 SNPs strongly associated to BMI based on prior GWAS by \citet{loos_common_2008}, \citet{thorleifsson_genome-wide_2008}, \citet{li_cumulative_2009}, and \citet{speliotes_association_2010}. They also tested the 12 SNPs for violations of (A2) by conducting an association test between each SNP and the outcome; see Table S6 of \citet{vimaleswaran_causal_2013}. 
After these checks, the 12 SNPs were then used to infer BMI's effect on vitamin D by using a variant of two-stage least squares (TSLS), a popular method in IV; see \citet{lawlor_mendelian_2008} and Section \ref{sec:point_est} for details on TSLS in MR settings. They reported a p-value of $0.005$ for the null hypothesis of no effect of BMI on serum vitamin D levels with a 95\% confidence interval of $(-0.71, -0.13)$ and concluded that BMI is negatively associated with serum vitamin D levels.

\subsection{Inferring Treatment Effects After Selection} 
\label{sec:intro_condinf}
Unfortunately, the standard approach of inferring treatment effects and deriving p-values for the null of no treatment effect neglect the instrument selection process done a priori. In MR, the SNPs/instruments were selected (or discarded) from a large pool of potential SNPs from GWAS whereby the ``best'' instruments were selected and the ``worst'' instruments were discarded based on a combination of subjective-matter expertise and quantitative analysis. Also, in many MR studies, including \citet{vimaleswaran_causal_2013}, the same data is used to select the instruments as well as to test the exposure effect. Yet, when testing this effect by computing the p-value or the confidence interval, the statistical analysis simply assumes that the selected instruments are given ``as is'' and the prior selection process is neglected. Formally speaking,  the p-value or the confidence interval for the exposure effect is not conditional on the event that the ``best'' instruments were selected into the analysis.

For example, going back to  \citet{vimaleswaran_causal_2013}, the final p-value $0.005$ of the exposure effect and the 95\% confidence interval were computed under the assumption that the 12 selected SNPs were drawn from a population of exactly 12 instruments. In reality, the 12 SNPs are the ``best'' SNPs from a larger pool of SNPs from GWAS and they  were carefully selected for IV strength and validity, all in order to obtain the ``best'' exposure effect. Consequently, when conditioning on the selection process, the authors'  significant p-value may over-represent the true statistical significance of the treatment effect when none actually exist and increase Type I error. Similarly, the confidence intervals may be optimistically too short.

\subsection{Prior Work} 
Informally, the problem above goes by many names such as the ``file drawer effect'' \citep{fithian_optimal_inference_2014,taylor_statistical_2015} or ``p-hacking'' \citep{simmons_false_2011,gelman_garden_2013}, where researchers only report the final statistical analysis and ignore prior analysis, such as variable selection, that led up to the final analysis. Some recent work have framed the problem as {\emph{selective inference}} and showed that ignoring variable selection when testing a hypothesis can lead to inflated Type I errors and biased confidence intervals \citep{fithian_optimal_inference_2014,lee_marginal_screening_2014,taylor_forward_stepwise_2014,tian_magic_2016,bi_inferactive_2017}. In Section \ref{sec:simulation}, we also show that this phenomena holds in MR, where the typical confidence interval from MR ignoring instrument selection can have coverage much lower than the nominal level in many cases. 

To the best of our knowledge, none of the work in MR considered testing the treatment effect that takes into account instrument selection. Much of the recent work in MR has focused on estimation when some instruments may violate the IV assumptions \citep{bowden_mendelian_2015,kang2015simple,bowden_consistent_2016, burgess_robust_2016,windmeijer2016use,kang_instrumental_2016,guo2016causalci,hartwig2017robust}. 
Works by \citet{small_sensitivity_2007} and \citet{conley_plausibly_2012} discuss sensitivity analysis as a way to conduct inference when the selected instruments are invalid. However, none of these work addressed the issue of calibrating inference of the exposure effect after instrument selection has taken place.

\subsection{Our Contribution}
In this paper, we propose a method to generate \emph{honest} p-values and confidence intervals for the treatment effect. Our method is honest in that it accounts for the instrument selection process when computing statistical significance of the exposure effect. We focus on an instrument selection procedure by \citet{kang_instrumental_2016} called sisVIVE, which operationalized the MR investigator's instrument selection process by choosing $s$ instruments from a pool of $L$ candidate instruments based on a procedure similar to the Lasso \citep{tibshirani_regression_1996}; sisVIVE was further analyzed by \citet{windmeijer2016use} for its selection properties. We propose a flexible and general sampling method to generate the conditional null distribution (i.e. conditional on selecting the instrument(s)) of pre-existing test statistics $T$ for the exposure effect $\beta^*$. We provide two examples of our sampling method by deriving the conditional distribution of two popular test statistics, the aforementioned TSLS, which is the most popular test statistic in MR, and the Anderson-Rubin test statistic \citep{anderson_estimation_1949}, which is fully robust against weak instruments \citep{staiger_instrumental_1997}. We show that our method controls the conditional Type I error of a test $T$ accounting for selection and provides conditional confidence intervals for the treatment effect that takes into account instrument selection; the conditional Type I error is also referred to as \emph{selective Type I error} \citep{fithian_optimal_inference_2014}. Also, as long as sisVIVE selects instruments which contain all invalid instruments with high probability, our conditional method also controls the usual marginal Type I error and the conditional confidence interval has the usual marginal coverage; here, marginal Type I error and marginal coverage refers to the usual inferential properties that do not condition on a set of instruments. We also extend our sampling approach to work with summary-level data, a popular data format in MR where only summary statistics from GWAS are available to study the exposure effect \citep{burgess2013mendelian,pierce2013efficient}.


We recognize that in practice, MR investigator may use a combination of data-driven selection algorithms \citep{andrews_consistent_1999,andrews_consistent_2001,windmeijer2016use,guo2016causalci} including sisVIVE and/or subjective knowledge to select the final $s$ instruments that are plausibly valid. We focus on sisVIVE because unlike subjective selection of instruments, sisVIVE formalizes the instrument selection process and as such, is tractable to quantitatively characterize the selection effects on inference; in contrast, the effect of selection based on subject-matter knowledge is difficult to formalize since it is investigator-dependent and different investigators may use a wide variety of methods for instruments selection. Also, some instrument selection methods are computationally expensive \citep{andrews_consistent_1999,andrews_consistent_2001}. \citet{bi_inferactive_2017} describes a general interactive data analysis framework that might encompass these possibilities.

Despite the caveats, we believe our method is a more honest alternative than the traditional practice of simply ignoring how instrument selection occurred. By studying a particular example of instrument selection, we hope to bring attention to this problem to the MR community. We also point out ways to extend our framework beyond sisVIVE. 
In particular, our framework is applicable as long as the instrument selection process is expressed in terms of a convex program, and the test statistic for the exposure is asymptotically Normal.

\section{Instrumental Variables Model} \label{sec:model}

\subsection{Notation}
For each individual $i=1,\ldots,n$, let $Y_i \in \reals$ be the outcome, $D_i \in \reals$ be the treatment/endogenous variable, and $Z_{i} \in \reals^L$ be the $L$ candidate instruments. Let $Y = (Y_1,\ldots,Y_n)$, $D = (D_1,\ldots,D_n)$, and $Z \in \reals^{n \times L}$ of instruments. For any full rank matrix $Z \in \reals^{n \times L}$, let $P_{Z} = Z(Z^\intercal Z)^{-1} Z^\intercal$ be the orthogonal projection matrix onto the columns of $Z$ and $P_{Z^\perp} = I - P_{Z}$ be the residual orthogonal projection matrix. For any real vector $v$, let ${\rm sign}(v)$ denote the vector of the signs of $v$. For any set $E \subseteq \{1,\ldots,L\}$, let $|E|$ be the cardinality of the set $E$ and $E^C$ be the complement. Let $Z_{E}$ and $Z_{-E}$ be the  $n$ by $|E|$ and $n$ by $L - |E|$ matrices, respectively, where the columns consist of the elements in the sets $E$ and $\{1,\ldots,L\} \setminus E$, respectively. Let ${\rm det}(\cdot)$ denote the determinant of a matrix. Let ${\mathbb{I}(\cdot)}$ be the indicator function. Let $\textbf{1}$ denote an all-one vector of dimension implied from context. We adopt the usual big-O and small-O notations, $O(\cdot)$ and $o(\cdot)$, to denote the order of a function.

\subsection{Review: Model and Definition of IV} \label{sec:def}
We follow the MR literature, specifically \citet{burgess_robust_2016} and \citet{bowden_framework_2017}, and consider an additive, linear, constant effects structural model that governs the distribution of the observed outcome, exposure, and instruments $(Y_i,D_i,Z_i)$: 
 \begin{equation}
 \begin{aligned} \label{eq:model}
Y_i &=  Z_{i}^\intercal \alpha^* +  D_i \beta^* + \delta_i \\
D_i &=  Z_{i}^\intercal \gamma^* + \xi_{i2}  \\
(\delta_i, \xi_{i2} \mid Z_{i}) &\iid H(0, \Sigma^*), \quad{} Z_{i} \iid F 
\end{aligned}
\end{equation}
The unknown finite-dimensional parameters in the model are $\alpha^*$, $\beta^*$, $\gamma^*$, and $\Sigma^*$. The instruments $Z_{i.}$ for individual $i$ is generated from a distribution $F$ where $E(Z_{i})$ and $E(Z_{i} Z_{i}^\intercal)$ exist and $E(Z_{i} Z_{i}^\intercal)$ is positive definite. The distribution of the errors $H$ is any bivariate distribution with mean zero and covariance $\Sigma^*$ that does not depend on $Z_{i}$; typically, $H$ is assumed to be bivariate Normal. Also, without loss of generality, we assume that $Y$, $D$, and $Z$ have been centered to mean zero, which allows us to remove the intercept terms in equation \eqref{eq:model}. We can also incorporate exogenous covariates $X$ as a linear additive term to model \eqref{eq:model} and project $X$ out to arrive back at the structural model in \eqref{eq:model}; see chapters 2.4 and 2.5 of \citet{davidson_estimation_1993} or equation 2.3 of \citet{andrews_optimal_2006} for examples.

The target parameter for inference is $\beta^*$, which represents the effect of the exposure on the outcome. Parameter $\alpha^*$ in model \eqref{eq:model} represents a combination of the instruments' direct effect on the outcome as well as the confounding effect due to unmeasured confounders of the exposure-outcome relationship; see 
Section 2 of \citet{bowden_framework_2017} for detailed interpretations of these parameters. If all instruments are valid and therefore satisfy (A2) and (A3), we have $\alpha^* = 0$ and equation \eqref{eq:model} would reduce to the usual instrumental variables model. 
If some instruments are invalid, $\alpha^* \neq 0$ and (A2) or (A3) would be violated. In short, the parameter $\alpha^*$ can be used to define assumptions (A2) and (A3), which we state below.
\begin{definition} \label{def:pleiotropicIV}
Suppose we have $L$ candidate instruments along with model \eqref{eq:model}. We say that instrument $j = 1,\ldots,L$ satisfies both (A2) and (A3), or is valid, if $\alpha_j^* = 0$ and is invalid if $\alpha_j^* \neq 0$. 
\end{definition}
\noindent Definition \ref{def:pleiotropicIV} is identical to MR's definition of invalid and valid instruments \citep{bowden_consistent_2016,burgess_robust_2016}. Definition \ref{def:pleiotropicIV} also matches the definition of a valid instrument in \citet{holland_causal_1988} and \citet{angrist_identification_1996} if there is a single instrument (i.e $L = 1$). When there are several instruments (i.e. $L > 1$), Definition \ref{def:pleiotropicIV} can be viewed as a generalization of the definitions in \citet{holland_causal_1988} and \citet{angrist_identification_1996}. 

Parameter $\gamma^*$ in model \eqref{eq:model} represents the instruments' associations to the exposure and we can use its support to formalize (A1) as follows. 
\begin{definition} \label{def:strongIV}
Suppose we have $L$ candidate instruments along with model \eqref{eq:model}. We say that instrument $j = 1,\ldots,L$ satisfies (A1), or is a non-redundant IV, if $\gamma_j^* \neq 0$.
\end{definition}
\noindent In MR, all instruments are generally assumed to be marginally associated with the treatment, i.e. $\gamma_j^* \neq 0$ for all $j$. Also, when there is only one candidate instrument (i.e. $L =1$), Definition \ref{def:strongIV} of (A1) is a special case of the more general definition of (A1) in \citet{angrist_identification_1996}. 

\subsection{Review: Point Estimators and Test Statistics} \label{sec:point_est}
We review point estimators of the model parameters and test statistics for $\beta^*$. Additional details can be found in the supplementary materials where we enumerate the exact technical conditions and derive their statistical properties.

Consider any set $E \subseteq \{1,\ldots,L\}$ of the plausibly invalid instruments where $E^* \subseteq E$ and $\gamma_{E^C}^* \neq 0$. The most popular and consistent estimator for $\beta^*$ is the two stage least squares (TSLS) estimator
\begin{equation} \label{eq:tsls_est} 
\widehat{\beta}_{\rm TSLS} = \frac{D^\intercal (P_Z - P_{Z_E})Y}{D^\intercal (P_Z - P_{Z_E}) D}
\end{equation}
For inference of $\beta^*$, we need a consistent estimator of the covariance $\Sigma^*$ in model \eqref{eq:model}. One such estimator is by plugging in the estimator $\widehat{\beta}_{\rm TSLS}$ into model \eqref{eq:model} and residualizing
\begin{align*}
\widehat{\Sigma}(\widehat{\beta}_{\rm TSLS}) &=  \frac{1}{n} \begin{pmatrix} Y - D\widehat{\beta}_{\rm TSLS}  & D \end{pmatrix}^\intercal  P_{Z^\perp} \begin{pmatrix} Y - D\widehat{\beta}_{\rm TSLS} & D \end{pmatrix},\quad{}
\begin{pmatrix} Y - D\widehat{\beta}_{\rm TSLS} & D \end{pmatrix} \in \reals^{n \times 2}
\end{align*}
Alternatively,  we can construct a consistent estimator of $\Sigma^*$ under the null hypothesis $H_0: \beta^* = \beta_0$ by replacing $\widehat{\beta}_{\rm TSLS}$ with the null value $\beta_0$:
\begin{equation} \label{eq:sigma_est} 
\widehat{\Sigma}(\beta_0) =  \frac{1}{n} \begin{pmatrix} Y - D\beta_0  & D \end{pmatrix}^\intercal  P_{Z^\perp} \begin{pmatrix} Y - D\beta_0 & D \end{pmatrix}
\end{equation}

For testing the null hypothesis $H_0: \beta^* = \beta_0$, we review two popular tests. The first is based on the two-stage least squares (TSLS) and is defined as
\begin{equation} \label{eq:tsls_test}
T_{\rm TSLS} = \frac{D^\intercal (P_Z - P_{Z_E}) (Y - D\beta_0) }{\sqrt{\widehat{\Sigma}_{11}} \sqrt{D^\intercal (P_Z - P_{Z_E}) D}}
\end{equation}
where $\widehat{\Sigma}_{11}$ is a consistent estimator of $\Sigma_{11}^*$, a component of $\Sigma^*$. Under standard asymptotic arguments (c.f. Section 5.2.2 of \cite{wooldridge_econometrics_2010}), $T_{\rm TSLS}$ converges to a standard Normal under $H_0$. The second test statistic is the Anderson-Rubin (AR) test \citep{anderson_estimation_1949} and is defined as
 \begin{align} \label{eq:AR_test}
T_{\rm AR} &=\frac{(Y - D\beta_0)^\intercal (P_Z - P_{Z_E}) (Y - D\beta_0) / (L - |E|)}{(Y - D\beta_0)^\intercal P_{Z^\perp} (Y - D\beta_0)/(n - L)}
\end{align}
Under $H_0$ and if the errors in model \eqref{eq:model} are a bivariate Normal, $T_{\rm AR}$ follows an F distribution with degrees of freedom $L - |E|$ and $n - L$. A unique feature about the Anderson-Rubin test is that if the instruments are weak, it still provides valid Type I error control whereas the TSLS test does not \citep{staiger_instrumental_1997}.

\subsection{Review: Instrument Selection with sisVIVE}
\citet{kang_instrumental_2016} proposed a data-driven method to select valid instruments from a candidate set of $L$ instruments under model \eqref{eq:model}. The algorithm, called sisVIVE, is a modification of the Lasso where we take model \eqref{eq:model} and penalize the parameter $\alpha$, whose support defines valid and invalid instruments in Definition \ref{def:pleiotropicIV}.
\begin{equation} \label{eq:sisvive}
\widehat{\alpha}, \widehat{\beta} \in \amin{\alpha,\beta} \frac{1}{2} ||P_Z (Y - Z\alpha - D\beta)||_2^2 + \lambda ||\alpha||_1, \quad 0 < \lambda
\end{equation}
Under some conditions, \citet{kang_instrumental_2016} showed that the estimate $\widehat{\beta}$ is consistent for $\beta^*$. \citet{windmeijer2016use} studied the instrument selection property by analyzing the support of $\hat{\alpha}$ and provided conditions where sisVIVE consistently selects valid instruments. sisVIVE has been used by MR investigators, such as \citet{allard_mendelian_2015} and \citet{bao2019assessing}, who selected instruments based on sisVIVE and conducted further analysis of $\beta^*$ with the selected instruments.
Unfortunately, both papers did not adjust their test of $\beta^*$ to account for instrument selection from sisVIVE. Our goal is to address the inferential issues after instrument selection as formalized in the next section.

\subsection{Statistical Goal: Pivotal Conditional Inference} \label{sec:goal}
Let $E = \rm supp (\widehat{\alpha})$ be the set of selected invalid instruments from sisVIVE. In MR, the traditional inferential goals are to test the null hypothesis of the exposure effect $H_0: \beta^* = \beta_0$ for some $\beta_0$ where the Type I error is under a pre-specified $\pi, 0 < \pi < 1,$ and to construct a corresponding $1-\pi$ confidence interval for $\beta^*$. Typically, this is done by using a test statistic $T$ and deriving a pivotal null distribution of $T$ that does not depend on unknown parameters. 
\begin{equation} \label{eq:null_marg}
P_{H_0: \beta^* = \beta_0} (T \geq t), \quad{} t \in \reals 
\end{equation}
However, as mentioned in Section \ref{sec:intro_condinf}, the null distribution in equation \eqref{eq:null_marg} fails to recognize two things. First, the selected set of invalid instruments $E$ is random because it is a function of the data $Y$, $D$, and $Z$, which in turn was randomly sampled. But, the traditional derivation of the null distribution assumes $E$ to be pre-specified and fixed. Second, the null distribution does not reflect that instrument selection has taken place. In fact, a statistical quantity that addresses these two concerns and consequently, is a better reflection of a typical MR analysis would be a conditional null
\[
P_{H_0: \beta^* = \beta_0} (T \geq t\mid E \text{ was selected}) 
\]
For sisVIVE, a variant of the above is a finer conditioning event
\begin{equation} \label{eq:null_cond}
P_{H_0: \beta^* = \beta_0} (T \geq t \mid{\rm supp}(\alpha) = E, {\rm sign}(\alpha_{E}) = \widehat{s}_E), \quad{} t \in \reals 
\end{equation}
Here, $\alpha$ is the optimization variable in \eqref{eq:sisvive}, $E$ is the observed selected set of invalid instruments and $\widehat{s}_{E}$ is the observed sign of $\widehat{\alpha}_E$, the latter two from sisVIVE \eqref{eq:sisvive}. The condition ${\rm supp}(\alpha) = E$ represents selecting $E$ instruments from sisVIVE and consequently, restricting the support of $\alpha$ in the model. The condition ${\rm sign}(\alpha_{E}) = \widehat{s}_E$ represents observing the estimated signs of the direct effects from selected instruments. For example, among a pool of $L=100$ instruments, if the first ten were selected from sisVIVE, $E = \{1,\ldots,10\}$ and  ${\rm sign}(\alpha_{E})$ would be set to the signs of the first 10 instruments estimated from sisVIVE. Unlike the marginal null in equation \eqref{eq:null_marg}, the conditional null in equation \eqref{eq:null_cond} acknowledges that the invalid instruments $E$ were selected from a pool of $L$ instruments, in this case with sisVIVE, and we observed which direction the invalid instruments affected the outcome before testing the exposure effect $\beta^*$.  
Practically speaking, equation \eqref{eq:null_cond} resembles the MR analysis in \citet{allard_mendelian_2015} where the authors tested $H_0$ after selecting instruments using sisVIVE. and is a more true-to-life representation of an MR analysis for the exposure effect than the traditional null distribution \eqref{eq:null_marg} that ignores instrument selection. Also, while we do not explore it in this paper, one can also study a modified version of \eqref{eq:null_cond} without conditioning on the signs by taking unions of different sign patterns and integrating it out from \eqref{eq:null_cond}; see \citet{lee_exact_lasso_2016} for details.

If the conditional distribution in equation \eqref{eq:null_cond} is pivotal, it allows us to construct tests and confidence intervals that control the conditional Type I error at level $\pi$ 
\[
P_{H_0: \beta^* = \beta_0} (\text{reject } H_0 \mid{\rm supp}(\alpha) = E,  {\rm sign}(\alpha_{E}) = \widehat{s}_E) \leq \pi
\]
and conditional coverage of $\beta^*$ at $1-\pi$ \citep{fithian_optimal_inference_2014,bi_inferactive_2017}. The conditional Type I error, unlike the traditional Type I error, i.e.  $P_{H_0: \beta^* = \beta_0} (\text{reject } H_0)$, takes into account instrument selection that was done a priori. Also, as long as $E^C$ only contains valid instruments, any test that controls the conditional type I error at level $\pi$ and achieves nominal conditional coverage will control the usual Type I error at level $\pi$ and achieve the usual $1-\pi$ coverage. 

Since the conditional null distribution plays a key role in achieving our inferential goals, the rest of the paper is devoted to charactering it. 

\subsection{Why Not Sample Splitting?} \label{sec:sample_split}
One solution to deriving the conditional distribution in equation \eqref{eq:null_cond} is sample splitting, where a subset of the sample is used to select the instruments and the rest of the unused samples is used to compute the test statistic; this is also the approach taken by \citet{burgess2011avoiding, zhao2018powerful}. Under this sampling scheme, the conditional null simply becomes the marginal null and the traditional inference will be honest. However, a major shortcoming with data splitting is that there is a loss in sample size in both testing and selecting instruments; we are using fewer samples to select good instruments and to test for exposure effect. Indeed, if the effect size is small, the loss of power from sample splitting would make it difficult to discover such effects. \citet{fithian_optimal_inference_2014} also proved under general conditions that sample splitting is inadmissible to another procedure called {\emph{data carving}}, which uses entire dataset for inference after conditioning on the selection result. They also showed that data carving with holdout significantly boosts power of a test compared to no holdout. Our method described in Section \ref{sec:method} also uses a type of data carving with holdout where we randomize the instrument selection algorithm.

Also, in practice, MR investigators use the same sample to assess the validity of their selected instruments; see \citet{voight_plasma_2012} and \citet{allard_mendelian_2015} for examples. This is partly because to achieve full independence between the selection event and the computation of the test statistic, MR investigators have to find three independent GWAS that measured the exposure and the outcome, which may be difficult for non-common exposures or outcomes. 

\section{Sampling Method for the Pivotal Conditional Null Distribution} \label{sec:method}
We characterize the conditional null distribution in equation \eqref{eq:null_cond} using a hit-and-run sampling procedure. The sampling procedure is roughly divided into three parts: the randomized selection algorithm, the reparametrization of \eqref{eq:null_cond} to simplify the conditioning event $E$, and a standard hit-and-run or MCMC sampling algorithm. To simplify discussion, we begin by initially assuming (i) $Z$ is fixed and (ii) the nuisance parameters  $\alpha^*$, $\gamma^*$, and $\Sigma^*$ are known. Section \ref{sec:asymptotic} removes the two restrictions by deriving an asymptotic version of the conditional null distribution.

\subsection{Randomized Instrument Selection} \label{sec:exact_rand}
The first step in our method is to recast sisVIVE \eqref{eq:sisvive} as a randomized instrument selection algorithm
\begin{equation} \label{eq:rsisvive}
(\widehat{\alpha},\widehat{\beta}) = \amin{\alpha,\beta} \frac{1}{2} \| P_Z(Y - Z\alpha - D\beta)\|_2^2 + \lambda ||\alpha||_1 - \omega^\intercal \cdot
\left(\begin{array}{l}
\beta\\
\alpha\\
\end{array}
\right)
 + \frac{\epsilon}{2} 
 \left\| \left(\begin{array}{l}
 \beta \\
 \alpha \\
 \end{array}
\right) \right\|_2^2
\end{equation}
The randomization term $\omega$ comes from a density $g(\omega)$ that is specified by the investigator and is independent of $Y$ and $D$, typically Gaussian, Laplacian or other heavy tailed distributions with variance on the order of $O(n)$. 
The quadratic term ensures strong convexity and existence of a solution. The penalty term $\lambda$ and $\epsilon$ are chosen so that its order is $O(n^{\frac{1}{2}})$ and $o(1)$, respectively; see Section \ref{sec:simulation} for examples of these values.

The idea of randomizing the original selection algorithm to derive conditional nulls was proposed by \citet{tian_randomized_2016}. The authors showed that a randomized algorithm is mathematically equivalent to data carving with holdout. More generally, randomized selection algorithms belong to a family of resampling algorithms, such as sample splitting and cross validation. However, the randomized selection algorithm in \eqref{eq:rsisvive} has some advantages, especially compared to sample splitting. First, as mentioned in Section \ref{sec:sample_split}, using \eqref{eq:rsisvive} increases the power to test the exposure effect. Second, the randomized algorithm allows any pre-existing test statistics for the exposure effect to be used for downstream analysis; we do not have to create a new test statistic so that the conditional Type I error is controlled. Finally, as we'll see below, the conditional density under \eqref{eq:rsisvive} becomes tractable for off-the-shelf hit-and-run and MCMC algorithms. 


\subsection{Exact Conditional Density} \label{sec:exact}
Under a fixed $Z$, let $\ell (Y, D, \omega \mid {\rm supp}(\alpha) = E, {\rm sign}(\alpha_{E}) = \widehat{s}_E )$ denote the density of the data $Y, D$ and the known independent randomization term $\omega$ conditional on the same selection events. Let $\ell (Y, D \mid {\rm supp}(\alpha) = E, \text{sign}(\alpha_{E}) = \widehat{s}_E)$ denote the density of the data $(Y,D)$ that marginalized out $\omega$ from the first density. Given samples of $Y$ and $D$ from the conditional density, the conditional density of the test statistic $T(Y, D)$ is simply a plug-in of $Y$ and $D$ into $T(Y,D)$. In short, to obtain a conditional distribution of $T(Y,D)$, we need to characterize $\ell (Y, D, \omega \mid {\rm supp}(\alpha) = E, {\rm sign}(\alpha_{E}) = \widehat{s}_E )$.

Directly sampling the conditional density of $(Y, D, \omega)$  requires solving sisVIVE until the conditioning events $\{ {\rm supp}(\alpha) = E, {\rm sign}(\alpha_{E}) = \widehat{s}_E \}$ are met, which can be a computationally expensive task. Importantly, the restrictions that the conditioning events impose on the data $Y$ and $D$ are non-trivial, making it difficult to analytically marginalize out $\omega$. But, by reparametrizing $\ell(Y, D, \omega \mid {\rm supp}(\alpha) = E, {\rm sign}(\alpha_{E}) = \widehat{s}_E )$ with a familiar change-of-variables formula, we can alleviate these two concerns. Specifically, the reparametrization is based on the convexity in equation \eqref{eq:rsisvive}, which implies that the solutions $\alpha,\beta$ to the optimization must satisfy the Karush-Kuhn-Tucker (KKT) condition \citep{boyd_convex_2004}
\[
- \left(\begin{array}{l}
D^\intercal \\
Z^\intercal \\
\end{array}\right) P_Z (Y - Z \alpha - D\beta) + \lambda 
\left(\begin{array}{l}
u\\
0\\
\end{array}\right) - \omega + \epsilon \left(\begin{array}{l}
\alpha\\
\beta\\
\end{array}
\right)
= 0, \ \rm sign (\alpha_E) = u_E = \widehat{s}_{E},\ \alpha_{-E} = 0, \ \|u_{-E}\|_\infty \leq 1
\]
where $\widehat{s}_E$ is the observed signs for the set $E$, $\alpha, \beta$ are the optimization variables and $u$ is the subgradient. We remind readers that we use $\alpha, \beta, u$ to denote the optimization variables of \eqref{eq:rsisvive} and $\widehat{\alpha}, \widehat{\beta}, \widehat{u}$ to denote a specific solution of the optimization in \eqref{eq:rsisvive} from the dataset. Importantly, the KKT condition provides a mapping between the density of $(Y,D,\omega)$ and the density based on the optimization variables $(Y, D, \beta,\alpha_E,u_{-E})$ by using a change-of-variable formula. 
\begin{theorem}[Exact conditional density via reparametrization] \label{thm:exact} 
The conditional density of $Y, D, \omega$ can be expressed (up to a proportionality constant) with respect to the variables $Y, D, \beta,\alpha,u$, i.e. 
\begin{equation} \label{eq:density}
\begin{split}
&\ell (Y,D,\omega \mid {\rm supp}(\alpha) = E, {\rm sign}(\alpha_{E}) = \widehat{s}_E )  \\
\propto&
f (Y, D)\cdot g\left(- \left(\begin{array}{l}
D^\intercal \\
Z^\intercal \\
\end{array}\right) P_Z (Y - Z \alpha - D\beta)  + \lambda 
\left(\begin{array}{l}
u\\
0\\
\end{array}\right) + \epsilon \left(\begin{array}{l}
\alpha\\
\beta\\
\end{array}
\right)  \right) \cdot |\mathcal{J}| \cdot \mathbb{I}(\mathcal{B})
\end{split}
\end{equation}
where $\mathcal{B} = \{ {\rm sign}(\alpha_{E}) = \widehat{s}_{E}, \alpha_{-E} = 0, u_E = \widehat{s}_E, \|u_{-E}\|_\infty \leq 1 \}$ and
\begin{align*}
|\mathcal{J}| &= \det \begin{pmatrix} \begin{pmatrix}D^\intercal P_Z D  & D^\intercal Z_E  \\
Z_E^\intercal D & Z_E^\intercal Z_E \end{pmatrix} + \epsilon I \end{pmatrix} \cdot \lambda^{|-E|}
\end{align*}
\end{theorem}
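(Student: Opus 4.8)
The plan is to read the randomized stationarity (KKT) condition of \eqref{eq:rsisvive} as a change of variables that trades the randomization vector $\omega$ for the optimization variables, with the data $(Y,D)$ held fixed, and then to apply the classical density change-of-variables formula. Because $\omega$ is drawn independently of $(Y,D)$ from $g$, the joint density of $(Y,D,\omega)$ factors as $f(Y,D)\,g(\omega)$. Strong convexity, supplied by the $\tfrac{\epsilon}{2}\|\cdot\|_2^2$ term, guarantees that the minimizer $(\widehat\alpha,\widehat\beta)$ and its subgradient $\widehat u$ are unique, so the stationarity equation holds as an exact identity tying $\omega$ to $(\beta,\alpha,u)$. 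On the conditioning event the coordinates $\alpha_{-E}=0$ and $u_E=\widehat s_E$ are frozen, which leaves $(\beta,\alpha_E,u_{-E})$ as the free coordinates; solving stationarity for $\omega$ exhibits it as a smooth (in fact affine) function of these free coordinates, and this function is exactly the argument of $g$ appearing in \eqref{eq:density}.

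First I would verify that the map $\Phi:(\beta,\alpha_E,u_{-E})\mapsto\omega$ is a bijection onto the range of $\omega$ for fixed $(Y,D)$. Counting dimensions, the free coordinates number $1+|E|+(L-|E|)=1+L$, matching the dimension of $\omega$; since $\Phi$ is affine with a constant Jacobian, bijectivity reduces to showing that the Jacobian is nonsingular, which is where the $\epsilon I$ shift does the essential work by keeping the relevant Gram block invertible. Applying the change-of-variables formula then rewrites $g(\omega)\,d\omega$ as $g(\Phi(\cdot))\,|\mathcal J|\,d(\beta,\alpha_E,u_{-E})$, and the conditioning event $\{\mathrm{supp}(\alpha)=E,\ \mathrm{sign}(\alpha_E)=\widehat s_E\}$ translates verbatim into the restriction $\mathcal B=\{\mathrm{sign}(\alpha_E)=\widehat s_E,\ \alpha_{-E}=0,\ u_E=\widehat s_E,\ \|u_{-E}\|_\infty\le 1\}$, with $\alpha_{-E}=0$ being complementary slackness and the last inequality the subgradient (dual feasibility) condition on the inactive coordinates. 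This yields the claimed form \eqref{eq:density} with the indicator $\mathbb{I}(\mathcal B)$.

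It remains to evaluate $\mathcal J=\partial\omega/\partial(\beta,\alpha_E,u_{-E})$. Differentiating the stationarity equation term by term and using $P_Z Z_E=Z_E$ and $Z_E^\intercal P_Z=Z_E^\intercal$, I would arrange the partials in block form with row blocks indexed by the $\beta$-, $\alpha_E$-, and $\alpha_{-E}$-components of $\omega$ and column blocks by $\beta,\alpha_E,u_{-E}$. Since neither the $\beta$- nor the $\alpha_E$-component of $\omega$ depends on $u_{-E}$, the upper-right block vanishes and $\mathcal J$ is block lower-triangular,
\[
\mathcal J=\begin{pmatrix}
\begin{pmatrix} D^\intercal P_Z D & D^\intercal Z_E\\ Z_E^\intercal D & Z_E^\intercal Z_E\end{pmatrix}+\epsilon I & 0\\[4pt]
\begin{pmatrix} Z_{-E}^\intercal D & Z_{-E}^\intercal Z_E\end{pmatrix} & \lambda I_{L-|E|}
\end{pmatrix},
\]
whose determinant factors as the determinant of the top-left block times $\det(\lambda I_{L-|E|})=\lambda^{|-E|}$, giving exactly the stated $|\mathcal J|$.

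I expect the main obstacle to be the bijectivity/faithfulness claim rather than the Jacobian algebra. One must argue carefully that, for each $\omega$, the stationarity equation together with the constraints in $\mathcal B$ pins down a unique tuple $(\beta,\alpha_E,u_{-E})$ and conversely, so that conditioning on selection corresponds exactly to restricting the transformed density to $\mathcal B$ and no normalization beyond $|\mathcal J|$ is lost. This injectivity is a consequence of uniqueness of the strongly convex randomized-sisVIVE solution, but it warrants an explicit statement; once it is in hand, the determinant computation is routine.
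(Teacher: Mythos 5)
Your proposal is correct and follows essentially the same route as the paper's proof: treat the KKT stationarity equation of the randomized sisVIVE program as an affine change of variables $(\beta,\alpha_E,u_{-E})\mapsto\omega$ with $(Y,D)$ fixed, translate the selection event into the constraint set $\mathcal{B}$, and evaluate the block lower-triangular Jacobian whose determinant factors into the top-left Gram block plus $\epsilon I$ times $\lambda^{|-E|}$. Your explicit attention to bijectivity via strong convexity is a point the paper's proof leaves implicit, but it is a refinement of the same argument rather than a different approach.
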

In addition to the reparametrization, Theorem \ref{thm:exact} shows the effect that conditioning on instrument selection has on the original data density $Y,D$, represented by $f(Y,D)$ in \eqref{eq:density}. The terms on the right hand side of $f(Y,D)$ essentially reweigh and constrain the distribution of $Y$ and $D$ to reflect that selecting the instruments from \eqref{eq:rsisvive} changed the original data density. The constraints on $Y$ and $D$ are expressed by optimization variables where the original conditioning event, $\{ {\rm supp}(\alpha) = E, \text{sign}(\alpha_{E}) = \widehat{s}_E \}$, is re-expressed as the event $\mathcal{B}$. $\mathcal{B}$ is special in that it is a set of simple quadrant and box constraints on said optimization variables compared to the original conditioning event. Now, if the original data density $f(Y, D)$ is known, i.e. if we know the model parameters $\alpha^*, \gamma^*, \Sigma^*$, the density is pivotal with respect to these parameters and one can directly use \eqref{eq:density} in an MCMC sampler to generate samples of $Y$ and $D$, which can then be used to compute the conditional null distribution of any test statistic for the exposure effect. However, in practice, these model parameters, i.e. $f(Y,D)$, are unknown and act as nuisance parameters. The next section remedies this issue by conditioning on an asymptotic sufficient statistics of the nuisance parameters in $f(Y,D)$ and deriving an asymptotic conditional density of the test statistic $T$.

\subsection{Asymptotic pivotal conditional densities of TSLS and AR test statistics} \label{sec:asymptotic}
In this section, we present the conditional density of the TSLS and the AR test statistics in Section \ref{sec:point_est}. The key technical step is to apply the selective central limit theorem (CLT) which was proposed in \citet{markovic_bootstrap_2017}. We state the results of applying the theorem in the main paper and the supplement contains additional details. 
Throughout this section, $Y,D,Z$ and functions thereof, such as $\widehat{T}$, $\widehat{\beta}$, $\widehat{\alpha}$, and $\widehat{u}$ are observed values (i.e. fixed) and the sampling variables (i.e. those used in the sampling procedure to generate conditional null distributions) are $T$, $\beta$, $\alpha$, and $u$. 

For the TSLS test statistic $T_{\rm TSLS}$ in equation \eqref{eq:tsls_test}, Section \ref{sec:point_est} showed that it asymptotically follows a standard Normal with mean $\mu_T = 0$ and variance $\Sigma_{T} = 1$, which we denote as $\phi_{(0, \Sigma_T )} (T_{\rm TSLS})$. By utilizing the selective CLT, the asymptotic conditional density of $T_{\rm TSLS}$ under the null hypothesis $H_0: \beta^* = \beta_0$ is the reweighing of the usual density of $\phi_{(0, \Sigma_T )} (T_{\rm TSLS})$:
\begin{equation}  \label{eq:tsls_cden} 
\begin{split}
&\ell_{\beta_0}(T_{\rm TSLS}, \omega \mid {\rm supp}(\alpha) = E, \text{sign}(\alpha_{E}) = \widehat{s}_E, F) \\
&\propto \phi_{(0, \Sigma_T )} (T_{\rm TSLS})\cdot g\left\{- \Sigma_{S,T}\Sigma_T ^{-1} \cdot T_{\rm TSLS} + \left[\begin{pmatrix}Z^\intercal Z & Z^\intercal D \\ D^\intercal Z & D^\intercal P_{Z} D\end{pmatrix} + \epsilon I \right] \cdot \begin{pmatrix}\alpha \\ \beta \end{pmatrix} + \lambda u - F \right\} \cdot |\mathcal{J}| \cdot \mathbb{I}(\mathcal{B})
\end{split}
\end{equation}
where
\begin{align*}
\Sigma_T &= 1, \quad{} \Sigma_{S,T} =  \sqrt{\frac{\widehat{\Sigma}_{11}}{D^\intercal (P_Z - P_{Z_E}) D}} \begin{pmatrix} Z^\intercal (P_Z - P_{Z_E}) D \\ D^\intercal (P_Z - P_{Z_E}) D\end{pmatrix} \\
F &= \begin{pmatrix} Z^T Y \\ D^T P_Z Y \end{pmatrix} - \Sigma_{S,T} \Sigma^{-1}_T \cdot T
\end{align*}
and $T$ is the test statistic. The term $F$ is, in the asymptotic sense, a sufficient statistic for the unknown parameters of the data distribution $Y,D$ and conditioning on it frees us from these nuisance parameters, akin to conditioning on sufficient statistics in a parametric bootstrap for linear regression; see \citet{markovic_bootstrap_2017} and the supplementary materials for a detailed explanation.

Similarly, let $\phi_{\left(\beta_0, \Sigma_T \right)} (\beta_{\rm TSLS})$ denote the density of the TSLS estimator $\beta_{\rm TSLS}$ in equation \eqref{eq:tsls_est}. We can also derive the conditional distribution of the TSLS estimator as a direct application of the selective CLT.
\begin{equation}
\begin{split}
&\ell_{\beta_0} (\beta_{\rm TSLS}, \omega \mid {\rm supp}(\alpha) = E, \text{sign}(\alpha_{E}) = \widehat{s}_E, F) \\
&\propto \phi_{\left(\beta_0, \Sigma_T \right)} (\beta_{\rm TSLS}) \cdot g\left\{-\Sigma_{S,T}\Sigma_T^{-1} \cdot \beta_{\rm TSLS} + \begin{pmatrix}\begin{pmatrix}Z^\intercal Z & Z^\intercal D \\ D^\intercal Z & D^\intercal P_Z D\end{pmatrix} + \epsilon I\end{pmatrix}\begin{pmatrix}\alpha \\ \beta\end{pmatrix} + \lambda u - F\right\}\cdot |\mathcal{J}| \cdot \mathbb{I}(\mathcal{B})
\end{split}
\end{equation}
where $F$ is identical as before except for the test statistic and the variances and
$$
\Sigma_T = \frac{\widehat{\Sigma}_{11}}{D^\intercal (P_Z - P_{Z_E}) D},\quad{}\Sigma_{S,T} = \frac{\widehat{\Sigma}_{11}}{D^\intercal (P_Z - P_{Z_E}) D}\begin{pmatrix}Z^\intercal (P_Z - P_{Z_E}) D \\ D^\intercal (P_Z - P_{Z_E}) D\end{pmatrix}
$$

For the Anderson-Rubin test statistic $T_{\rm AR}$, let $\widetilde{T} \in \reals^L$ be an intermediary sampling target
$$
\widetilde{T} = Z^\intercal (I - P_{Z_E}) (Y - D\beta_0)
$$
and $\phi_{\left[0, \Sigma_T \right]} (\widetilde{T})$ be the unconditional density of $\widetilde{T}$. Then, the conditional sampling density of $\widetilde{T}$ under the null $H_0: \beta^* = \beta_0$ is given by:
\begin{equation} \label{eq:ar_cden}
\begin{split}
&\ell_{\beta_0} (\widetilde{T}, \omega \mid {\rm supp}(\alpha) = E, \text{sign}(\alpha_{E}) = \widehat{s}_E, F) \\
&\propto \phi_{\left[0, \Sigma_T \right]} (\widetilde{T}) \cdot g\left\{- \Sigma_{S, T} \Sigma_{T}^{-1} \widetilde{T} +\left[ \begin{pmatrix}Z^\intercal Z & Z^\intercal D \\ D^\intercal Z & D^\intercal P_Z D \end{pmatrix} + \epsilon \cdot I \right]\begin{pmatrix}\alpha \\ \beta \end{pmatrix}  + \lambda u - F\right\} \cdot |\mathcal{J}| \cdot \mathbb{I}(\mathcal{B})
\end{split}
\end{equation}
where $F$ is identical as before except for the test statistic and the variances and
\begin{align*}
\Sigma_T &= \widehat{\Sigma}_{11} \cdot Z^\intercal (I - P_{Z_E}) Z, \quad{} \Sigma_{S,T} = \widehat{\Sigma}_{11} \cdot \begin{pmatrix}Z^\intercal (I - P_{Z_E}) Z \\ D^\intercal (I - P_{Z_E}) Z\end{pmatrix}
\end{align*}
We can then generate samples of the Anderson-Rubin test $T_{\rm AR}$ by sampling $\widetilde{T}$ from \eqref{eq:ar_cden} and plugging into 
\begin{equation} \label{eq:ar_tilde}
T_{\rm AR} = \frac{\widetilde{T}^\intercal \cdot (Z^\intercal Z)^{-1} \cdot \widetilde{T} / (L-|E|)}{(Y - D\beta_0)^\intercal P_{Z^\perp} (Y - D\beta_0)/(n - L)}
\end{equation}
to generate conditional null distributions of  $T_{\rm AR}$ under instrument selection. 

\subsection{Sampling Algorithm} \label{sec:sampling}
Given the conditional density of a test statistic, we can use any off-the-shelf MCMC sampling method to sample values of the test statistic. We detail one Gibbs sampling algorithm and characterize the conditional density of the TSLS test statistic in \eqref{eq:tsls_cden}.

Let $K_1, K_2$, and $K_3$ be the following quantities.
\[
K_1 = -\Sigma_{S,T}, \ K_2 = \left[\begin{pmatrix}Z^\intercal Z & Z^\intercal D \\ D^\intercal Z & D^\intercal P_{Z} D\end{pmatrix} + \epsilon \cdot I \right], \ K_3 = \left[\begin{pmatrix}Z^\intercal Y \\ D^\intercal P_Z Y\end{pmatrix} - \Sigma_{S,T}\cdot \widehat{T}_{\rm TSLS}\right]
\]
These variables are observed values from the data and hence fixed constants in the sampler. Then, the conditional null density in \eqref{eq:tsls_cden} is proportional to 
\begin{align*}
& \ell_{\beta_0}(t, \omega \mid {\rm supp}(\alpha) = E, \text{sign}(\alpha_{E}) = \widehat{s}_E, F) \\
&\propto \phi_{0, 1} (t)\cdot g\left\{- K_1 \cdot t + K_2 \cdot \begin{pmatrix}\alpha_E \\ 0 \\ \beta \end{pmatrix} + \lambda u - K_3 \right\} \cdot \mathbb{I}[(\alpha_E, u_{-E}) \in \mathcal{B}] \\
&= h(t,\beta,\alpha_E,u_{-E})
\end{align*}
Then, as detailed below, we can generate samples from $h(t,\beta,\alpha_E,u_{-E})$ sequentially by using Gibbs.
\begin{enumerate}
\item \textbf{Initialization ($k=0$)}: Initialize $(t, \beta, \alpha_E, u_{-E} )$ and denote them as $(t^{(0)}, \beta^{(0)}, \alpha^{(0)}_{E}, u_{-E}^{(0)} )$
\item \textbf{Gibbs Update}: At step $k$ with variables $(t^{(k)}, \beta^{(k)}, \alpha^{(k)}_{E}, u_{-E}^{(k)} )$
\begin{enumerate}
\item Update $t$ by sampling a proposal, denoted as $\tau^{(k+1)}$, from a proposal distribution that is Normal with mean $\tau^{(k)}$ and variance $a_k^2$;  $a_k$ represents the step size. The proposal distribution is symmetric with an acceptance ratio of 
$$
r = \min \left( \frac{h(t^{(k+1)}, \beta^{(k)}, \alpha_E^{(k)}, u_{-E}^{(k)})}{h(t^{(k)}, \beta^{(k)}, \alpha_E^{(k)}, u_{-E}^{(k)})}, 1 \right)
$$
Accept the proposal $t^{(k+1)}$ with probability $r$. Otherwise, set $t^{(k+1)} = t^{(k)}$.
\item Jointly update $\beta$ and $\alpha_E$ by sampling proposals, denoted as $\beta^{(k+1)}$ and $\alpha_E^{(k+1)}$, via
\begin{align*}
\beta^{(k+1)} &= \beta^{(k)} + b_k \cdot \nu_1, \quad{}& \nu_1 &\sim G \\
\alpha_E^{(k+1)} &= s_E \cdot |\ |\alpha_E^{(k)}| + b_n \cdot \nu_2 |,   \quad{}& \nu_2 &\sim G^{|E|}
\end{align*}
Here, $G$ denotes the distribution of the randomized term $\omega$ in the randomized sisVIVE and $G^{|E|}$ denotes $|E|$ i.i.d. distributions of $G$. The term $b_k$ represents the step size for our update. Both proposals are also symmetric with an acceptance ratio $r$ of
$$
r = \min\left(\frac{g(\omega(t^{(k+1)}, \beta^{(k+1)}, \alpha_E^{(k+1)}, u_{-E}^{(k)}))}{g(\omega(t^{(k+1)}, \beta^{(k)}, \alpha_E^{(k)}, u_{-E}^{(k)}))}, 1 \right)
$$
Accept $\beta^{(k+1)}$ and $\alpha^{(k+1)}_E$ with probability $r$. Otherwise, set $\beta^{(k+1)} = \beta^{(k)}$ and $\alpha^{(k+1)}_E = \alpha^{(k)}_E$.
\item Update $u^{(k+1)}_{-E}$ by sampling a value $v$ from a truncated $G$ distribution at $ \Delta^-, \Delta^+$ 
where
\begin{align*}
\Delta^- &= \Delta -  \lambda \mathbf{1}, \quad{} \Delta^+ =\Delta + \lambda \mathbf{1}, \quad{} \Delta = \left(- K_1 \cdot t^{(k+1)} + K_2 \begin{pmatrix}\alpha_{E}^{(k)} \\ \beta^{(k)} \end{pmatrix} - K_3 \right)_{-E} \\
\end{align*}
and set $u_{-E}^{(k+1)} = \frac{1}{\lambda} (v - \Delta)$.
\end{enumerate}
\item \textbf{Stop} when convergence criterion is reached after 5000 burn-in samples and return 10,000 samples of $t$.
\end{enumerate}
We make a few remarks about the sampling algorithm. First, in practice, we initialize the sampler at the observed values of the sampling variables: $t^{(0)}$ is set to be the observed TSLS test statistic $\widehat{t}_{TSLS}$ with the selected instruments $E$, $\beta^{(0)}, \alpha_E^{(0)}, u_{-E}^{(0)}$ are set to be the solutions from the randomized sisVIVE as $\widehat{\beta}, \widehat{\alpha}_E, \widehat{u}_{-E}$. Second, if $G$ is Laplacian, we typically tune the step sizes $a_k$ and $b_k$ to have acceptance rates around $0.3$. Third, if $G$ is Gaussian, we can use a simpler hit-an-run sampling instead of Gibbs and bypass the requirement to pick step sizes $a_k$ and $b_k$; see \citet{bi_inferactive_2017} for details.

\subsection{Inference via the Conditional Density Accounting for Instrument Selection} \label{sec:goal_solved}
Once we obtain $S$ samples of $t_i, i=1,\ldots,S$ from a sampling algorithm, such as the one described in Section \ref{sec:sampling}, we can achieve the conditional inference goals laid out in Section \ref{sec:goal}. For example, let $\widehat{t}_{TSLS}$ be the observed value of the test statistic from traditional MR analysis after selecting $E^C$ valid instruments from sisVIVE. Instead of comparing this observed test statistic to a marginal null distribution to obtain a p-value, we can now obtain a conditional p-value $p (\beta_0)$ of the null hypothesis of the exposure effect $H_0:\beta^* = \beta_0$ by using the $N$ samples of $t$ generated from the sampler.
\[
p(\beta_0) = \frac{1}{N} \sum_{i=1}^N \mathbb{I} (t_i \geq \widehat{t}_{TSLS})
\]
If the significance level is set to $\pi$, we can reject the null hypothesis if $p(\beta_0) < \pi$. For example, if $\beta_0 = 0$ with (estimated) valid instruments $E^C$ and $p(0) < \pi$, we can say that there is a significant exposure effect \emph{after accounting} for instrument selection done a priori. Also, by the duality between confidence intervals and hypothesis testing, we can construct a $1-\pi$ conditional confidence interval for $\beta^*$ by retaining $\beta_0$'s where $p(\beta_0) \geq \pi$. Because inverting hypothesis for every $\beta_0$ can be computationally burdensome, we use an importance sampling trick from \citet{markovic_bootstrap_2017} to improve computation; see the supplementary materials for implementation details. 

\section{Simulation} \label{sec:simulation}

We conduct a simulation study to demonstrate the proposed method. Formally, we generate data according to model \eqref{eq:model} with the following parameters: $n=1000$, $p=10$, $|E^*| = 3$, $\beta^* = 1$, $\alpha_{j} = 7$ for $j \in E^*$, $\Sigma^*_{11} = \Sigma^*_{22} = 1$, and $\Sigma^*_{12} = 0.8$. The instruments is generated according to standard multivariate Normal, i.e. $Z_i \sim N(0, I_L)$. For the parameters of the procedure \eqref{eq:rsisvive}, we follow \citet{markovic_bootstrap_2017}  and set $\lambda= 2.01 * \sqrt{n \cdot \rm log(n)}$, $\epsilon = 0.01$. The randomization $\omega = (\omega_1,\ldots,\omega_L)$ is taken as having independent Gaussian components with the standard deviation (denoted as $\rm std$) set to be $0.5 * \rm std(P_Z \begin{pmatrix}Z& D\end{pmatrix})*\rm std(P_Z Y)*\sqrt{\frac{n}{n-1}}$.

First, we evaluate whether the null distribution of the conditional p-values is uniform; if the conditional p-values are uniformly distributed, the conditional Type I error is controlled and consequently, the confidence interval would have the desired conditional coverage. We do this for different values of $\gamma^*$ where $\gamma_j^* = r$ for every $j$ and $r$ ranges from $0.04$ to $2.5$. For each $\gamma^*$, we plot the empirical cumulative distribution (CDF) function of the conditional p-values under $H_0: \beta^* = 1$ and compare it with the uniform distribution to see whether the two distributions are identical (i.e. line up along the 45 degree line). Figure \ref{fig:tsls:null} presents the evaluation for the TSLS estimator. For TSLS, except for the case when the instrument is weak and $r \leq 0.06$,  the distribution of the conditional p-values is close to uniform, aligning with what we expect from theory. When the instruments are weak, the empirical distribution deviates from the uniform distribution and our conditional inference suffers from weak instrument bias \citep{staiger_instrumental_1997, stock_survey_2002}. 

Second, we compare our confidence interval to the {\emph{naive}} confidence interval that does not take instrument selection into account, or mathematically treats $E$ as fixed. Specifically, we compare the marginal coverage rates of our interval and the naive interval when $E^C$ contains no invalid instruments. As described in Section \ref{sec:goal}, our $1-\pi$ conditional confidence interval will have $1-\pi$ marginal coverage under this case. In contrast, the naive interval only provides $1-\pi$ coverage if $E$ is fixed and hence, will have lower than $1-\pi$ coverage. Like before, we plot (i) the empirical coverage rates and (ii) the average lengths as a function of $r$ for the proposed conditional interval and the naive confidence interval. Figure \ref{fig:coverage_0_8} presents the evaluation for the test statistic based on the TSLS estimator. For TSLS, while the two intervals are very close, our interval is always closer to the nominal level, set at $95\%$ compared to the naive interval. Our interval also suffers from weak instrument bias as it fails to achieve $95\%$ coverage when $r$ is very small, but not as drastic as the naive interval. Figure \ref{fig:coverage_0_8_p100} presents the same evaluation for TSLS, but when we have more candidate instruments to select from (i.e. $L = 100$) and $|E^*| = 30$; everything else remains the same as before. Here, the effect of instrument selection on coverage is more pronounced between the two intervals, with our interval achieving closer to 95\% coverage than the naive interval. In both cases, the effect from instrument selection on coverage rates for the naive interval decreases as the instruments get stronger, but weak instruments can amplify the selection effect.

We note that we only presented simulation studies where the naive interval has the ``best chance'' of doing well compared to our interval. In the supplementary materials, we conduct more ``idealized'' simulation studies where we design the data generating process so that the difference between our interval and the naive interval is drastic, specifically by varying (i) the magnitude of the instruments' invalidity via $\alpha_E^*$ and $\gamma^*$ associated with invalid instruments and (ii) the magnitude of endogeneity via $\Sigma_{12}^*$. In summary, in finite samples, our TSLS interval can maintain coverage from 80\% to 96\% depending on the instrument strength, invalidity, and endogeneity, whereas the naive interval has coverage rates range from near 0\% to 70\% and never reach the nominal 95\% coverage level.

Also, in the supplementary materials, we repeat the two evaluations under different test statistics to see how our inference procedure is affected by the choice of test statistics. In short, for the AR test, our conditional method shows Type I error control for all values of instrument strength. Also, when comparing coverage of the AR test, we observe that the differences are small, most likely due to the AR test being generally conservative.


\begin{figure}[!ht]
\begin{center}
\centerline{\includegraphics[width=1.\columnwidth]{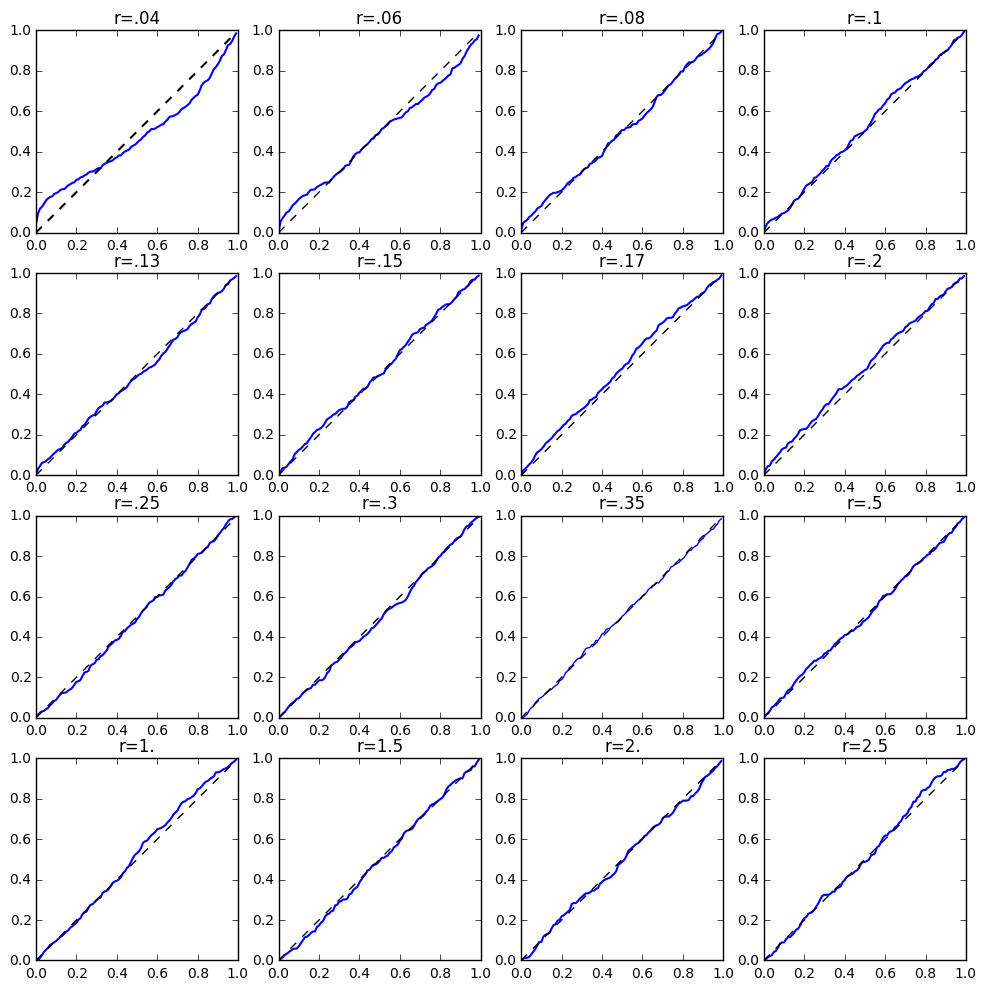}}
\caption{Comparison between the CDF of the empirical distribution of the conditional p-values using TSLS statistic and the uniform distribution. $r$ represents instrument strength as measured by setting $\gamma_j^* = r$ for all $j$.}
\label{fig:tsls:null}
\end{center}
\end{figure}

\begin{figure}[!ht]
\begin{center}
\centerline{\includegraphics[width=.8\columnwidth]{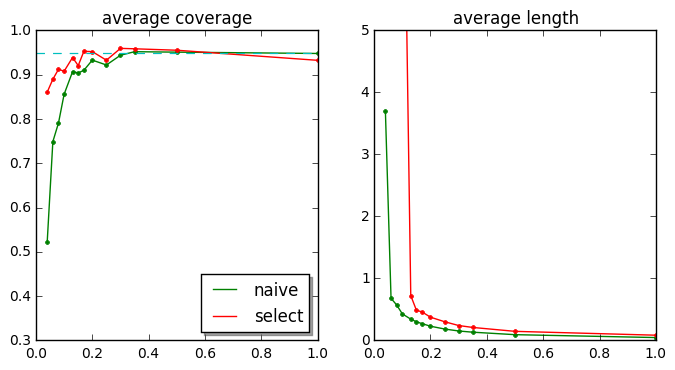}}
\caption{Comparison between the naive and conditional CI using the TSLS test statistic when there are $L = 10$ candidate instruments. The x-axis represents instrument strength as measured by setting $\gamma_j^* = r$. The left plot shows the empirical coverage rate of both intervals and the right plots show the average lengths of both intervals. }
\label{fig:coverage_0_8}
\end{center}
\end{figure}

\begin{figure}[!ht]
\begin{center}
\centerline{\includegraphics[width=.8\columnwidth]{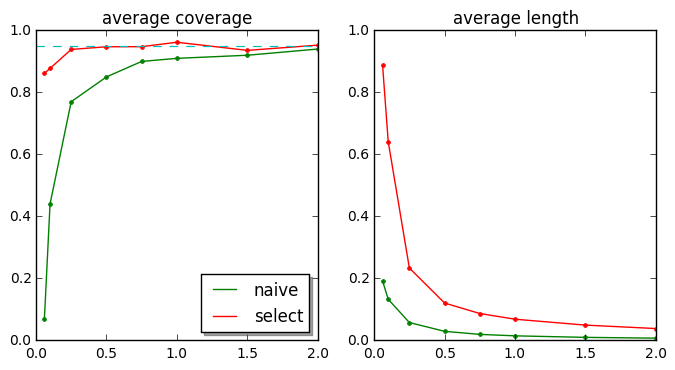}}
\caption{Comparison between the naive and conditional CI using the TSLS test statistic when there are $L = 100$ candidate instruments. The x-axis represents instrument strength as measured by setting $\gamma_j^* = r$. The left plot shows the empirical coverage rate of both intervals and the right plots show the average lengths of both intervals. }
\label{fig:coverage_0_8_p100}
\end{center}
\end{figure} 


\section{Extension: Summary Data in Mendelian Randomization} \label{sec:sum}
While our method was described where individual observations of $Y_i, D_i, Z_i$ were available, majority of MR studies are based on summary statistics due to privacy and logistical reasons \citep{burgess2013mendelian,pierce2013efficient}. The summary statistics are derived from GWAS and they typically consist of estimated regression coefficients that measure the associations of genetic variants with various traits. In this section, we extend our above method to summarized data common in a MR analysis.

Denote $Z_{(j)}$ as the $j$-th column of matrix $Z$. For each instrument $j$, let $\widehat{\beta}_{j,Y}$ be the estimated regression coefficient from running a simple linear regression between $Y$ and instrument $Z_{(j)}$ and let ${\rm se}^2 (\widehat{\beta}_{j,Y})$ be the variance of the estimate $\widehat{\beta}_{j,Y}$. Similarly, for each instrument $j$, let $\widehat{\beta}_{j,D}$ be the estimated regression coefficient from running a simple linear regression between $D$ and $Z_{(j)}$ and let  ${\rm se}^2 (\widehat{\beta}_{j,D})$ be the variance of the estimate $\widehat{\beta}_{j,D}$. Lemma \ref{lem:sumstat} shows that, under suitable assumptions, the statistics $Z^\intercal Y, Z^\intercal D, Z^\intercal Z$ that are critical in the derivation of the conditional density in Section \ref{sec:exact}, can be derived with MR summary statistics,

\begin{lemma} \label{lem:sumstat} Suppose we have the summary statistics $\widehat{\beta}_{j,Y}, {\rm se}^2 (\widehat{\beta}_{j,Y}), \widehat{\beta}_{j,D}, {\rm se}^2 (\widehat{\beta}_{j,D})$ and the data $Y,D,Z$ follows the model in equation \eqref{eq:model}. If 
\begin{itemize}
\item[(i)] $Z^\intercal Z$ is a diagonal, but not necessarily identity, matrix and 
\item[(ii)] $Y, D, Z$ are centered to mean zero,
\end{itemize} 
the quantities $Z^\intercal Y, Z^\intercal D, Z^\intercal Z$ can be written as
\begin{align*}
Z_{(j)}^\intercal Z_{(j)} = \frac{(n-1)\cdot se^2 (\widehat{\beta}_{1,Y}) + \widehat{\beta}_{1,Y}^2}{(n-1)\cdot {\rm se}^2 (\widehat{\beta}_{i,Y}) + \widehat{\beta}_{i,Y}^2} \cdot c \\
Z_{(j)}^\intercal Y = \widehat{\beta}_{i,Y} \cdot \frac{(n-1)\cdot {\rm se}^2 (\widehat{\beta}_{1,Y}) + \widehat{\beta}_{1,Y}^2}{(n-1)\cdot {\rm se}^2 (\widehat{\beta}_{i,Y}) + \widehat{\beta}_{i,Y}^2} \cdot c \\
Z_{(j)}^\intercal D = \widehat{\beta}_{i,D} \cdot \frac{(n-1)\cdot {\rm se}^2 (\widehat{\beta}_{1,Y}) + \widehat{\beta}_{1,Y}^2}{(n-1)\cdot {\rm se}^2 (\widehat{\beta}_{i,Y}) + \widehat{\beta}_{i,Y}^2} \cdot c
\end{align*}
where $c$ is an unknown constant factor, i.e. they are known up to an unknown factor $c$.
\end{lemma}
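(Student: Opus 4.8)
The plan is to reduce the claim to the closed-form expressions for a no-intercept simple linear regression, which is exactly the regression that assumption (ii) produces. Fix an instrument $j$ and regress the centered $Y$ on the centered column $Z_{(j)}$ through the origin; the ordinary least squares slope and its estimated variance are
\[
\widehat{\beta}_{j,Y} = \frac{Z_{(j)}^\intercal Y}{Z_{(j)}^\intercal Z_{(j)}}, \qquad
{\rm se}^2(\widehat{\beta}_{j,Y}) = \frac{1}{n-1}\cdot\frac{Y^\intercal Y - (Z_{(j)}^\intercal Y)^2 / (Z_{(j)}^\intercal Z_{(j)})}{Z_{(j)}^\intercal Z_{(j)}}.
\]
The first step is merely to record these identities, taking care that dropping the intercept (a consequence of centering) makes the residual degrees of freedom $n-1$, which is the convention baked into the statement.

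The crux is an algebraic cancellation. Multiplying the variance formula by $n-1$ gives $(n-1)\,{\rm se}^2(\widehat{\beta}_{j,Y}) = Y^\intercal Y / Z_{(j)}^\intercal Z_{(j)} - \widehat{\beta}_{j,Y}^2$, since $(Z_{(j)}^\intercal Y)^2 / (Z_{(j)}^\intercal Z_{(j)})^2 = \widehat{\beta}_{j,Y}^2$. Adding $\widehat{\beta}_{j,Y}^2$ back therefore yields the clean identity
\[
(n-1)\,{\rm se}^2(\widehat{\beta}_{j,Y}) + \widehat{\beta}_{j,Y}^2 = \frac{Y^\intercal Y}{Z_{(j)}^\intercal Z_{(j)}}.
\]
Because $Y^\intercal Y$ is free of $j$, forming the ratio of this quantity at the reference index $1$ over its value at index $j$ cancels $Y^\intercal Y$ and returns $Z_{(j)}^\intercal Z_{(j)} / Z_{(1)}^\intercal Z_{(1)}$. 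Setting $c := Z_{(1)}^\intercal Z_{(1)}$ then gives the first displayed formula, and $c$ is precisely the one piece of absolute scale that no ratio of marginal summaries can identify.

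Assumption (i) is what upgrades the recovered diagonal to all of $Z^\intercal Z$: diagonality forces every off-diagonal entry to be zero, so the entire Gram matrix is determined by the diagonal just identified (up to $c$); diagonality also makes each marginal slope coincide with the corresponding joint least-squares coefficient, so there is no cross-instrument confounding to disentangle. With the diagonal in hand I recover the remaining cross-products by inverting the slope relations, $Z_{(j)}^\intercal Y = \widehat{\beta}_{j,Y}\,Z_{(j)}^\intercal Z_{(j)}$ and $Z_{(j)}^\intercal D = \widehat{\beta}_{j,D}\,Z_{(j)}^\intercal Z_{(j)}$, and substituting the scaled diagonal produces the second and third formulas with the same factor $c$. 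Note the route uses only the $Y$-regression standard errors; the $D$-regression errors are redundant, furnishing an alternative estimate of the diagonal through $D^\intercal D$.

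The proof is thus essentially bookkeeping, and the only genuine subtlety, which I regard as the main obstacle, is conceptual rather than computational: making precise that diagonality is what renders the full matrix recoverable from marginal information, and that the overall scale $c$ is inherently non-identifiable from summaries alone. This is harmless for the inferential targets, since the test statistics $T_{\rm TSLS}$ and $T_{\rm AR}$ of Section~\ref{sec:point_est} are invariant under a common rescaling of the columns of $Z$ (they depend on $Z$ only through the projections $P_Z, P_{Z_E}, P_{Z^\perp}$ and the scale-invariant quadratic form $\widetilde{T}^\intercal (Z^\intercal Z)^{-1} \widetilde{T}$), so knowing $Z^\intercal Y, Z^\intercal D, Z^\intercal Z$ up to $c$ suffices to evaluate them.
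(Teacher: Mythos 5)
Your proof is correct and follows essentially the same route as the paper's: record the no-intercept simple-regression formulas for the slope and its standard error, derive the identity $(n-1)\,{\rm se}^2(\widehat{\beta}_{j,Y}) + \widehat{\beta}_{j,Y}^2 = Y^\intercal Y / (Z_{(j)}^\intercal Z_{(j)})$, exploit that $Y^\intercal Y$ does not depend on $j$ to recover the diagonal of $Z^\intercal Z$ up to $c = Z_{(1)}^\intercal Z_{(1)}$, and then invert the slope relations to obtain $Z_{(j)}^\intercal Y$ and $Z_{(j)}^\intercal D$. Your explicit remarks on the role of diagonality and on why the $D$-regression standard errors are redundant are sound elaborations of steps the paper leaves implicit.
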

Lemma \ref{lem:sumstat} shows that under conditions (i) and (ii), we can rewrite the quantities $Z^\intercal Y, Z^\intercal D, Z^\intercal Z$ as functions of the summary statistics, up to an unknown constant factor $c$; later, in Theorem \ref{thm:sum_infer}, we show that this unknown constant factor does not impact our inference procedure of $\beta^*$. The first condition (i) of Lemma \ref{lem:sumstat} requires that instruments are pairwise uncorrelated. This is a common assumption in summarized MR data (cf. Section 2 of \citet{bowden2017framework} or Assumption 1 and Section 2.2 of \citet{zhao2018statistical}) and is typically enforced by keeping the genetic distance between each instruments to be sufficiently large and changing the clumping thresholding in software \citep{purcell_plink_2007,hemani2018mr}. The second condition (ii) of Lemma \ref{lem:sumstat} requires that the individual level data be centered, similar to many works on the Lasso, including the instrument selection procedure analyzed by \citet{windmeijer2016use} (Section 5.1, page 7). We remark that from well-known properties of regression, summary statistics do not change when the data is centered or not, and hence, (ii) is not an issue for summarized data from MR. 
Finally, we remark that unlike typical MR methods which assume that the summary statistics of $Y$ and $D$, specifically $\widehat{\beta}_{j,Y}$ and $\widehat{\beta}_{j,D}$, are independent (see Table 5 of \citet{hartwig2017robust} for examples), we do not make this assumption in our setting; in other words, our samples for the outcome and the exposure can overlap. 

The following theorem states that if the conditions of Lemma \ref{lem:sumstat} holds, then we can use summarized statistics to obtain conditional null densities using the same method outlined above.
\begin{theorem}[Inference with summarized data] \label{thm:sum_infer} Under the assumptions in Lemma \ref{lem:sumstat},  the conditional inference density in \eqref{eq:tsls_cden} is identical, regardless of the value of $c$.
\end{theorem}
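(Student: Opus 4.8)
The plan is to show that the unknown factor $c$ from Lemma~\ref{lem:sumstat} enters the conditional density \eqref{eq:tsls_cden} only through multiplicative constants that are free of the sampling variables, so that it cancels in the normalized (proportional) density and leaves the conditional law of $T_{\rm TSLS}$ — and hence every conditional p-value and confidence interval built from it — unchanged.

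First I would observe that using the summary-statistic expressions of Lemma~\ref{lem:sumstat} in place of $Z^\intercal Y, Z^\intercal D, Z^\intercal Z$ is exactly equivalent to replacing the individual-level data by the globally rescaled data $(Y,D,Z)\mapsto(\sqrt c\,Y,\sqrt c\,D,\sqrt c\,Z)$: under this map $Z^\intercal Z, Z^\intercal Y, Z^\intercal D$ each pick up the common factor $c$, while the projection matrices $P_Z$ and $P_{Z_E}$ — being invariant to column scaling — are unchanged. I would then track how every data-dependent ingredient of \eqref{eq:tsls_cden} transforms: the composite quantities $D^\intercal P_Z D$, $Z^\intercal(P_Z-P_{Z_E})D$, $D^\intercal(P_Z-P_{Z_E})D$, $\widehat\Sigma_{11}$, and $G_0 \equiv (Z^\intercal Y, D^\intercal P_Z Y)^\intercal$ each scale by $c$; the ratio $\widehat\Sigma_{11}/\big(D^\intercal(P_Z-P_{Z_E})D\big)$ is scale-free, so the observed statistic $\widehat T_{\rm TSLS}$ and the normalization $\Sigma_T=1$ are exactly invariant, while $\Sigma_{S,T}$, the matrix block $K_2^{\rm mat}=\begin{pmatrix}Z^\intercal Z & Z^\intercal D\\ D^\intercal Z & D^\intercal P_Z D\end{pmatrix}$, and the centering $F$ each scale by $c$.

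The core step is the cancellation. Because the tuning parameters $\lambda,\epsilon$ and the randomization spread are, by construction, calibrated to the observed data scale, the factor $c$ is inherited by $(\lambda,\epsilon)\mapsto(c\lambda,c\epsilon)$ and by $\omega\mapsto c\,\omega$ (so its density obeys $g_c(x)=c^{-q}g(x/c)$ for dimension $q$). With this bookkeeping the entire argument of $g$ in \eqref{eq:tsls_cden} scales homogeneously, $A_c(T,\alpha,\beta,u)=c\,A_1(T,\alpha,\beta,u)$ in the \emph{same} sampling variables, so $g_c(A_c)=c^{-q}g(A_1)$; the constraint set $\mathcal B$, involving only signs of $\alpha_E$, the support condition $\alpha_{-E}=0$, and the box $\|u_{-E}\|_\infty\le 1$, does not depend on $c$ at all; and the Jacobian $|\mathcal J|$ from Theorem~\ref{thm:exact} scales by a pure power of $c$, since both its determinant block $\det(\,\cdot\,+\epsilon I)$ and the factor $\lambda^{|E^{C}|}$ pick up powers of $c$ under $(\lambda,\epsilon)\mapsto(c\lambda,c\epsilon)$. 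Collecting terms gives $\ell_{c}\propto \big(\text{power of }c\big)\cdot \phi_{(0,1)}(T)\,g(A_1)\,|\mathcal J_1|\,\mathbb I(\mathcal B)$, where the prefactor is constant in $(T,\alpha,\beta,u)$ and therefore disappears under the $\propto$; hence $\ell_c$ and $\ell_1$ are the same density, and marginalizing over the nuisance variables $(\beta,\alpha_E,u_{-E})$ yields identical conditional laws of $T_{\rm TSLS}$ for every $c$. The same argument applies verbatim to the conditional density of the TSLS estimator, and, with $\widetilde T=Z^\intercal(I-P_{Z_E})(Y-D\beta_0)$ scaling by $c$ and the ratio \eqref{eq:ar_tilde} being scale-free, to the Anderson-Rubin density \eqref{eq:ar_cden}.

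The main obstacle — and the step needing the most care — is justifying that $c$ genuinely propagates to the tuning parameters and randomization rather than being held fixed. If $\lambda,\epsilon$, and $\mathrm{Var}(\omega)$ were frozen while only the data were rescaled, the randomized sisVIVE program \eqref{eq:rsisvive} would cease to scale homogeneously, the determinant block in $|\mathcal J|$ would depend on $c$ nonlinearly, and exact invariance would break. I would therefore make precise that, because the true scale is unobservable and all of $\lambda,\epsilon$, and the randomization spread are chosen relative to the available (summary-scale) quantities, a change in $c$ is indistinguishable from a change of measurement units absorbed identically by the data and the tuning; this is exactly what reduces the whole dependence on $c$ to the scale-free statistic $T_{\rm TSLS}$ together with constants that cancel. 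A secondary check is that $\widehat\Sigma_{11}$ (and the AR residual scale) must carry the same factor $c$; since they enter only through the scale-free normalization, this is consistent with — and indeed forced by — the global-rescaling interpretation.
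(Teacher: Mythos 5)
Your proposal is correct and takes essentially the same route as the paper: both proofs track how each ingredient of \eqref{eq:tsls_cden} transforms under the unknown factor $c$ --- the selection outcome, $\widehat{T}_{\rm TSLS}$, and $\Sigma_T$ are scale-free, while $\Sigma_{S,T}$, the quadratic-form matrices, $F$, $\widehat{\Sigma}_{11}$, $\lambda$, and the randomization scale all carry the factor $c$ --- so that the argument of $g$ scales homogeneously and the dependence on $c$ disappears. If anything, your bookkeeping is slightly more careful than the paper's: you make explicit that $g_c(A_c)=c^{-q}g(A_1)$ and that the Jacobian picks up only a pure power of $c$, so that these constants cancel under the proportionality sign (the paper loosely asserts that $g(\cdot)$ and $\phi(\cdot)$ ``have the same value''), and you flag the need for $\epsilon\mapsto c\epsilon$, which the paper glosses over.
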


\section{Application} \label{sec:application}
We demonstrate our method by studying the effect of adiposity, measured using the body mass index (BMI), on diastolic blood pressure (DBP) using MR. A MR study by \citet{timpson2009does} reaffirmed many non-MR observational studies that there is a positive relationship between BMI and DBP, using two SNPs. Our goal is to replicate this analysis by using the UK Biobank \citep{sudlow_UK_2015} and calibrate p-values and confidence intervals for the exposure effect after selecting valid instruments via sisVIVE.

We consider a pool of $p = 315$ SNPs as instruments for BMI and DBP; these SNPs were originally identified by \citet{locke_genetic_2015} and we extract the same set of SNPs in the UK Biobank with the software MR-Base \citep{hemani2018mr}. We use the standard defaults for MR-Base, which makes sure that the instruments are far apart in the genome and they are sufficiently strong. We use the summary statistics provided by MR-Base, specifically the OLS regression coefficient estimates and their standard errors  between each instrument and the treatment / outcome. Around 5\% of individuals did not have both the treatment and the outcome values recorded in the UK Biobank data, leading to slightly different sample sizes for the instrument-treatment regression ($n_1 = 336,107$) and the instrument-outcome regression ($n_2 = 317,756$). Because the difference in the sample sizes are small compared to the total sample size, we take the average of the two samples $n = \frac{1}{2}(n_1 +n_2)$ to be the effective sample size for our analysis. 

We evaluate our method based on multiple choices of $\lambda$. In particular, unlike typical penalized regression where this is chosen automatically with cross-validation, in the empirical analysis, we treat $\lambda$ as a sensitivity parameter where, roughly, a large $\lambda$ would lead to more instruments being selected as valid. Based on the result by \citet{windmeijer2016use} regarding sisVIVE, we only consider sequence of $\lambda$s with selection consistency, typically when the proportion of selected instruments is less than 50\% of the total number of instruments.

The results of our analysis is summarized in figure \ref{fig:mrdata}. We plot the TSLS estimate and the conditional and naive confidence intervals as error bars of the TSLS estimator for different values of $\lambda$. We also denote the number of instruments for each choice of $\lambda$. In general, the TSLS estimate is stable for different choices of selected instruments where $|E| < p/2$ and agrees with previous analysis of the relationship between BMI and blood pressure by \citet{timpson2009does}. Also, the difference between conditional and naive intervals is small for this dataset. This is most likely due to the fact that $n$ is very large and well-known, strong instruments were used, mitigating some of the selection effects. However, as noted before, the difference between the naive and the conditional confidence interval can be large depending on the underlying data generating process and more importantly, our conditional confidence interval has guaranteed marginal and conditional coverage rate whereas the naive confidence interval does not.


\section{Discussion} \label{sec:discussion}
This work demonstrates a conditional approach to inference of the treatment effect after selecting plausibly valid instruments by a data-driven selection method called sisVIVE. 
We show that our conditional p-values and confidence intervals attain the nominal type I error and coverage rate, while the naive ones ignoring the selection effect would perform worse in certain scenarios. We also demonstrate our method through a real dataset analysis. We believe our method can be useful whenever the analyst worries about possible presence of invalid instruments and selection effects from choosing the ``best'' set of instruments.

Our approach outlined above can be extended to correct for the selection effect outside of sisVIVE. In particular, our method can be applied to other instrument selection process as long as it is expressive as a convex program \citep{bi_inferactive_2017}. Also, based on the selective CLT, other model specifications are possible as long as the test statistic of interest is asymptotically Gaussian. 
Finally, while we focused on applications to MR, in the supplementary materials, we demonstrate an application of our method to development economics.



\begin{figure}[!ht]
\begin{center}
\centerline{\includegraphics[width=1.3\columnwidth]{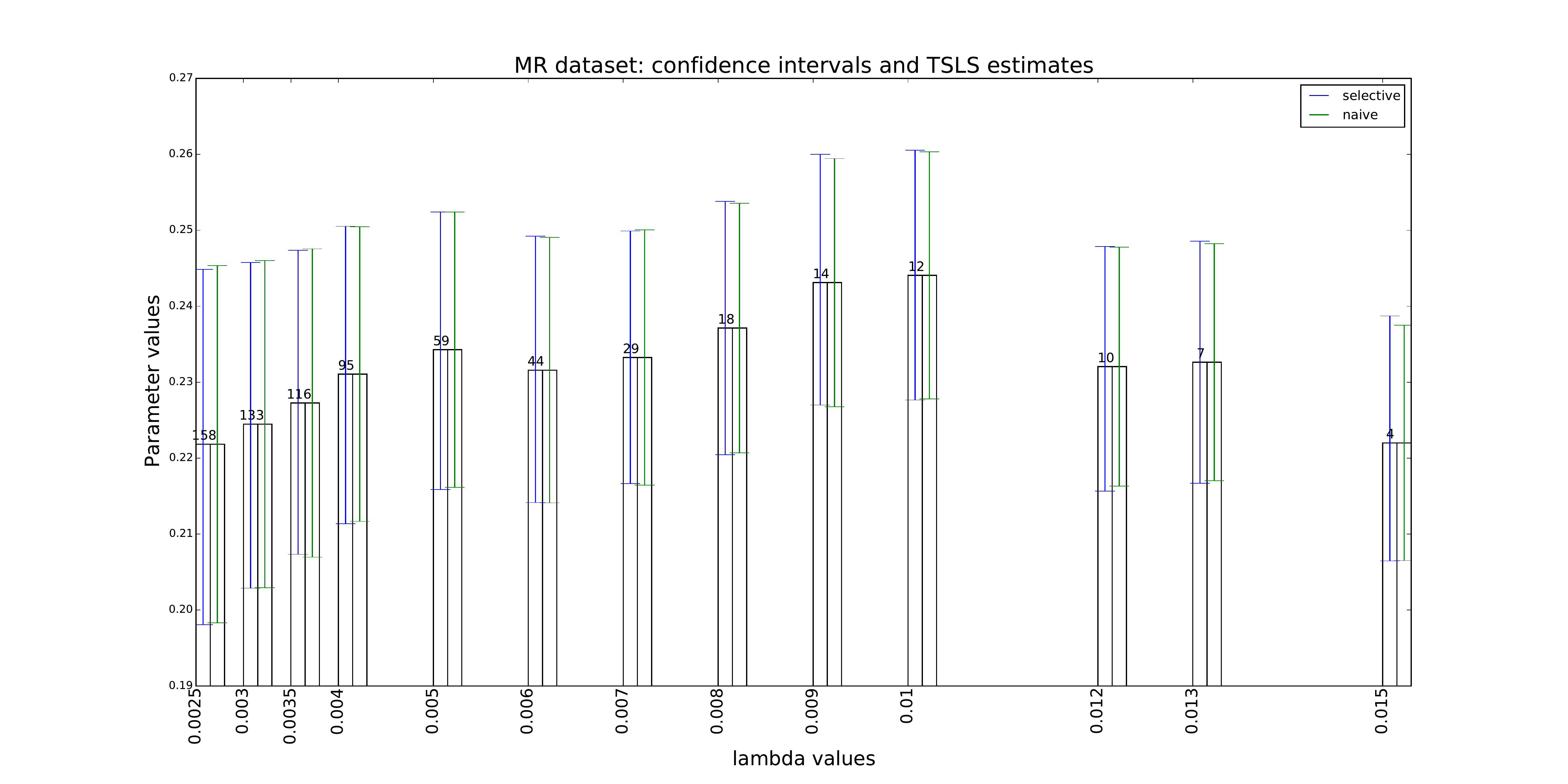}}
\caption{Analysis of the effect of BMI on blood pressure using the UK Biobank. The y-axis is the TSLS estimates and the conditional and naive confidence intervals are plotted as error bars. The x-axis is the value of $\lambda$ and the number of the selected invalid instruments for each $\lambda$ is attached on top of the each bar. }
\label{fig:mrdata}
\end{center}
\end{figure}

\section{Acknowledgements} 
The research of Hyunseung Kang was supported in part by NSF Grant DMS-1811414


\newpage

\begin{center}
{\large\bf SUPPLEMENTARY MATERIALS}
\end{center}

\begin{description}

\item[Proofs:] The file contains all the technical details, including review of model properties in the invalid IV setting, all the proofs to the statements in the article, additional sampling strategy detail; other simulation study and a developmental economics dataset study.

\item[Python package:] See the python software code on github: \url{https://github.com/nanbi/Python-software/tree/lasso_iv}. It is branched from \url{https://github.com/jonathan-taylor/selective-inference}.

\item[UK Biobank dataset:] The dataset used in Section \ref{sec:application} can be freely obtained from here: \url{http://app.mrbase.org/} after agreeing to the data access agreement.

\end{description}

\section{Review of Model Properties}

\subsection{Reduced-Form Model}
Let $E^* = {\rm supp}(\alpha^*)$.
The structural model can be rewritten as a \emph{reduced-form model} where $Y$ and $D$ are only functions of $Z$. This is achieved by substituting the expression for $D$ in the model for $Y$ and $D$
\begin{align*}
Y_i 
&= Z_{i,E}^\intercal (\alpha_{E^*}^* + \gamma_{E^*}^* \beta^*) + Z_{i,-{E^*}}^\intercal \gamma_{-{E^*}} \beta^* + \xi_{i1} \\
D_i &=  Z_{i}^\intercal \gamma^* + \xi_{i2} 
\end{align*}
The error terms for the reduced-form model, $\xi_{i1}$ and $\xi_{i2}$, are defined as $\xi_{i1} = \epsilon_i + \delta_{i2} \beta^*$ and $\xi_{i2} = \delta_i$. The two error terms follow a distribution $(\xi_{i1}, \xi_{i2}) \iid H(0,\Omega^*)$ where $\Omega^*$ is the covariance matrix 
\[
\Omega^* = \begin{pmatrix} 1 & \beta^* \\ 0 & 1 \end{pmatrix} \Sigma^* \begin{pmatrix} 1 & 0 \\ \beta^* & 1 \end{pmatrix}
\]
We note that the transformation between the covariance matrices $\Sigma^*$ and $\Omega^*$ is invertible for any value of $\beta^*$ since we can write $\Omega^*$ as an invertible map of $\Sigma^*$ above.

\subsection{Point Estimation for Model Parameters}
Lemma \ref{lem:tsls} shows that the most popular estimators for $\beta^*$, the two-stage least squares (TSLS), consistently estimates the treatment effect parameter $\beta^*$. We remark that the conditions in Lemma \ref{lem:tsls} are more general than the model conditions in the main manuscript.
\begin{lemma} \label{lem:tsls} Suppose $Z^\intercal \delta /n \inP 0$, $Z^\intercal \xi_{.2}/n \inP 0$, $Z^\intercal Z /n \inP \Lambda$ where $\Lambda$ is a non-singular matrix. For a set $E$ where $E^* \subseteq E^C$ and $\gamma_{E}^* \neq 0$, the stage least squares estimator (TSLS) $\hat{\beta}_{TSLS}$
\begin{equation} \label{eq:tsls_est}
\hat{\beta}_{TSLS} = \frac{D^\intercal (P_Z - P_{Z_E})Y}{D^\intercal (P_Z - P_{Z_E}) D} 
\end{equation}
 is consistent for $\beta^*$ 
\end{lemma}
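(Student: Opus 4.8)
The plan is to substitute the structural equation for $Y$ into the numerator of $\widehat{\beta}_{\rm TSLS}$ and reduce everything to the three given convergence statements. Writing $Q = P_Z - P_{Z_E}$ for the residual projection, I would first plug $Y = Z\alpha^* + D\beta^* + \delta$ into \eqref{eq:tsls_est} and cancel the common factor $D^\intercal Q D$ to obtain
\begin{equation*}
\widehat{\beta}_{\rm TSLS} - \beta^* = \frac{D^\intercal Q Z\alpha^* + D^\intercal Q \delta}{D^\intercal Q D}.
\end{equation*}
The proof then amounts to showing that, after scaling numerator and denominator by $1/n$, the first numerator term is \emph{exactly} zero, the second is $o_p(1)$, and the denominator converges to a strictly positive constant; Slutsky's theorem then delivers $\widehat{\beta}_{\rm TSLS} \inP \beta^*$.

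The exact-cancellation step is where the assumption on $E$ relative to $E^* = {\rm supp}(\alpha^*)$ enters. Since $Q$ annihilates every vector in ${\rm col}(Z_E)$ and the invalid-instrument direct effect $Z\alpha^*$ lies in that column space (because $\alpha^*$ is supported on the instruments captured by $Z_E$), one has $Q Z\alpha^* = 0$ identically. This removes the pleiotropy bias that would otherwise render TSLS inconsistent, and it is a finite-sample identity, not merely asymptotic.

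For the denominator I would expand $D = Z\gamma^* + \xi_{\cdot 2}$ and split $\tfrac1n D^\intercal Q D$ into $\tfrac1n\gamma^{*\intercal}Z^\intercal Q Z\gamma^*$, a cross term, and the pure-noise form $\tfrac1n\xi_{\cdot 2}^\intercal Q\xi_{\cdot 2}$. The key device used throughout is to rewrite each projection quadratic form through $Z^\intercal Z$ and $Z^\intercal(\cdot)$; for instance $\tfrac1n\xi_{\cdot 2}^\intercal P_Z\xi_{\cdot 2} = (\tfrac1n Z^\intercal\xi_{\cdot 2})^\intercal (\tfrac1n Z^\intercal Z)^{-1}(\tfrac1n Z^\intercal\xi_{\cdot 2})$, and analogously for $P_{Z_E}$. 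By $Z^\intercal\xi_{\cdot 2}/n\inP 0$ and $Z^\intercal Z/n\inP\Lambda$ nonsingular, the cross and pure-noise terms vanish, leaving $\tfrac1n D^\intercal Q D \inP \gamma^{*\intercal}\Lambda_Q\gamma^*$ with $\Lambda_Q = \Lambda - \Lambda_{\cdot E}\Lambda_{EE}^{-1}\Lambda_{E\cdot}$ the Schur-complement limit of $\tfrac1n Z^\intercal Q Z$. Positivity of this limit, i.e.\ identification, follows from $\gamma^{*\intercal}\Lambda_Q\gamma^* = \lim \tfrac1n\|Q Z\gamma^*\|_2^2$ together with the relevance condition that the coordinates of $\gamma^*$ outside $E$ are nonzero, so $Q Z\gamma^*$ does not degenerate. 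The same rewriting gives $\tfrac1n D^\intercal Q\delta\inP 0$ via $Z^\intercal\delta/n\inP 0$.

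The one genuinely delicate point is the pure-error quadratic forms $\xi_{\cdot 2}^\intercal Q\xi_{\cdot 2}$ and $\xi_{\cdot 2}^\intercal Q\delta$, which are products of noise and do not obviously vanish under the stated hypotheses. The resolution I would stress is that $Q$ has \emph{fixed finite rank} $L - |E|$, independent of $n$: expressing these forms through $Z^\intercal\xi_{\cdot 2}/n$ and $Z^\intercal\delta/n$ turns them into products of quantities that are either convergent or asymptotically negligible, so no law of large numbers for $\xi_{\cdot 2}^\intercal\xi_{\cdot 2}$ is required. Once this bookkeeping is in place, everything else is routine algebra with projection matrices.
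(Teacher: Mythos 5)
Your proof is correct, and after the shared opening move it takes a genuinely different route from the paper's. Both arguments begin identically: substitute the structural equation for $Y$ and use the support condition on $\alpha^*$ (note the lemma's statement contains a typo --- the condition should read $E^*\subseteq E$ and $\gamma^*_{E^C}\neq 0$, as in the main text, which is how you correctly interpreted it) so that $(P_Z-P_{Z_E})Z\alpha^* = (P_Z-P_{Z_E})Z_E\alpha^*_E = 0$ as a finite-sample identity, leaving $\widehat{\beta}_{\rm TSLS}-\beta^* = D^\intercal Q\delta/D^\intercal Q D$ with $Q=P_Z-P_{Z_E}$. From there the paper rewrites $Q=P_W$ for the residualized instruments $W=(I-P_{Z_E})Z_{-E}$, treats the rows $(D_i,W_i)$ as i.i.d., and applies a law of large numbers together with a central limit theorem for $W^\intercal\delta/\sqrt{n}$ to obtain $\sqrt{n}(\widehat{\beta}_{\rm TSLS}-\beta^*)\inD N(0,\cdot)$, deducing consistency as a by-product. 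You instead decompose $D=Z\gamma^*+\xi_{.2}$ and reduce every projection quadratic form to the three assumed limits, e.g. $\xi_{.2}^\intercal P_Z\xi_{.2}/n=(Z^\intercal\xi_{.2}/n)^\intercal(Z^\intercal Z/n)^{-1}(Z^\intercal\xi_{.2}/n)\inP 0$, and conclude by Slutsky with denominator limit $\gamma_{-E}^{*\intercal}\bigl(\Lambda_{-E,-E}-\Lambda_{-E,E}\Lambda_{EE}^{-1}\Lambda_{E,-E}\bigr)\gamma_{-E}^*$. The trade-off: the paper's argument delivers more (a $\sqrt{n}$ rate and an asymptotic variance formula), but it silently invokes i.i.d. sampling and a CLT that the lemma's stated hypotheses --- mere convergence in probability of $Z^\intercal\delta/n$, $Z^\intercal\xi_{.2}/n$, and $Z^\intercal Z/n$ --- do not supply; your argument proves exactly the claimed consistency from exactly the stated assumptions, so it is the more elementary and more general of the two. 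The one step you should write out rather than gesture at is positivity of the denominator limit: $\Lambda$ is positive definite (it is nonsingular and a limit of positive semidefinite matrices), hence the Schur complement $\Lambda_{-E,-E}-\Lambda_{-E,E}\Lambda_{EE}^{-1}\Lambda_{E,-E}$ is positive definite, so the relevance condition $\gamma^*_{-E}\neq 0$ forces the limit to be strictly positive; ``does not degenerate'' is not by itself an argument.
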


\begin{proof}[Proof of Lemma \ref{lem:tsls}]
Pugging in the true model, it simplifies to
\begin{align*}
\hat{\beta}_{TSLS} &= \frac{D^\intercal (P_Z - P_{Z_E}) (D \beta^* + Z_E \alpha^*_E + \delta)}{D^\intercal (P_Z - P_{Z_E}) D} \\
&= \beta^* + \frac{D^\intercal (P_Z - P_{Z_E}) \delta}{D^\intercal (P_Z - P_{Z_E}) D} \\
\end{align*}
Now decompose the projection matrix $P_Z - P_{Z_E}$ as a new projection matrix $P_{W}$ where $W$ is a matrix whose columns represent a $L - |E|$ basis vectors of the column space $Z_{-E}$ adjusted by $P_{Z_E}$.
\begin{align*}
\frac{1}{\sqrt{n}} D^\intercal (P_Z - P_{Z_E})\delta &= \frac{1}{\sqrt{n}} D^\intercal (W (W^\intercal W)^{-1} W^\intercal) \delta \\
&= \frac{1}{n} D^\intercal W \left(\frac{W^\intercal W}{n}\right)^{-1} \frac{1}{\sqrt{n}} W^\intercal \delta \\
&\to E(D_i W_i^\intercal) E(W_i W_i^\intercal)^{-1} \frac{1}{\sqrt{n}} W^\intercal \delta \\
&\to E(D_i W_i^\intercal) E(W_i W_i^\intercal)^{-1} \cdot N(0, E(W_i W_i^\intercal) \Sigma_{11}^{*}) \\
&= N(0, \Sigma_{11}^{*}  E(D_i W_i^\intercal) E(W_i W_i^\intercal)^{-1} E(D_i W_i) )
\end{align*}
\begin{align*}
\frac{1}{n} D^\intercal (P_Z - P_{Z_E}) D &= \frac{1}{n} D^\intercal W \left(\frac{W^\intercal W}{n}\right) \frac{W^\intercal D}{n} \\
&\to E(D_i W_i^\intercal) E(W_i W_i^\intercal) E(D_i W_i^\intercal)
\end{align*}
we finally have
$$
\sqrt{n} (\hat{\beta}_{TSLS} - \beta^*) \rightarrow N(0, \frac{\Sigma^*_{11}}{E(D_i W_i^\intercal) E(W_i W_i^\intercal) E(D_i W_i^\intercal)})
$$
which gives the consistency result.
\end{proof}

\begin{lemma} 
\label{lem:sigma}
Suppose $Z^\intercal \delta /n \inP 0$, $Z^\intercal \xi_{.2}/n \inP 0$ and $Z^\intercal Z /n \inP \Lambda$ where $\Lambda$ is a non-singular matrix. Then, under the null hypothesis $H_0: \beta^* = \beta_0$, the estimator $\hat{\Sigma}(\beta_0)$ 
\begin{equation} \label{eq:sigma_est}
\hat{\Sigma}(\beta_0) =  \frac{1}{n} \begin{pmatrix} (Y - D\beta_0)^\intercal \\ D^\intercal \end{pmatrix}  P_{Z^\perp} \begin{pmatrix} Y - D\beta_0 & D \end{pmatrix} 
\end{equation}
is consistent for $\Sigma^*$. Also, the estimator $\hat{\Omega}(\beta_0)$
\begin{equation} \label{eq:omega_est}
\hat{\Omega}(\beta_0) = \begin{pmatrix} 1 & \beta_0 \\ 0 & 1 \end{pmatrix} \hat{\Sigma} \begin{pmatrix} 1 & 0 \\ \beta_0 & 1 \end{pmatrix}
\end{equation}
is consistent for $\Omega^*$.
\end{lemma}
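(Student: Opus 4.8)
The plan is to prove entrywise convergence of the $2\times 2$ matrix $\hat{\Sigma}(\beta_0)$ to $\Sigma^*$, and then deduce consistency of $\hat{\Omega}(\beta_0)$ by the continuous mapping theorem. The key simplification comes from working under $H_0:\beta^*=\beta_0$. Since $Y = Z\alpha^* + D\beta^* + \delta$, under the null $Y - D\beta_0 = Z\alpha^* + \delta$, and because the residual projection kills the column space of $Z$, i.e. $P_{Z^\perp} Z = 0$, we get $P_{Z^\perp}(Y - D\beta_0) = P_{Z^\perp}\delta$. Likewise, writing $D = Z\gamma^* + \xi_{.2}$ gives $P_{Z^\perp} D = P_{Z^\perp}\xi_{.2}$. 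Hence the four entries of $\hat{\Sigma}(\beta_0)$ reduce to $\tfrac1n \delta^\intercal P_{Z^\perp}\delta$, $\tfrac1n \delta^\intercal P_{Z^\perp}\xi_{.2}$, and $\tfrac1n \xi_{.2}^\intercal P_{Z^\perp}\xi_{.2}$.

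Next I would expand $P_{Z^\perp} = I - P_Z$ in each entry. For the $(1,1)$ entry this yields $\tfrac1n\delta^\intercal\delta - \tfrac1n\delta^\intercal P_Z\delta$, and analogously for the other entries with $\xi_{.2}$ substituted as appropriate. The ``identity parts'' $\tfrac1n\delta^\intercal\delta$, $\tfrac1n\delta^\intercal\xi_{.2}$, $\tfrac1n\xi_{.2}^\intercal\xi_{.2}$ converge in probability to $\Sigma_{11}^*$, $\Sigma_{12}^*$, $\Sigma_{22}^*$ respectively by the weak law of large numbers, using that $(\delta_i,\xi_{i2})$ are i.i.d. with covariance $\Sigma^*$ (and hence finite second moments).

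The one step that needs care --- and the only real obstacle --- is showing the ``projection parts'' vanish, because the hypotheses are stated with the $1/n$ (rather than $1/\sqrt n$) normalization. I would regroup the quadratic form to match them exactly. Pulling a factor $1/n$ out of the inverse via $(Z^\intercal Z)^{-1} = \tfrac1n (Z^\intercal Z/n)^{-1}$ gives $\tfrac1n\delta^\intercal P_Z\delta = \left(\tfrac{Z^\intercal\delta}{n}\right)^\intercal \left(\tfrac{Z^\intercal Z}{n}\right)^{-1}\left(\tfrac{Z^\intercal\delta}{n}\right)$. The assumption $Z^\intercal\delta/n\inP 0$, together with $\left(Z^\intercal Z/n\right)^{-1}\inP \Lambda^{-1}$ (continuous mapping, since $\Lambda$ is non-singular), forces this to $0$. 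The cross term $\tfrac1n\delta^\intercal P_Z\xi_{.2}$ vanishes by the identical regrouping, now invoking both $Z^\intercal\delta/n\inP 0$ and $Z^\intercal\xi_{.2}/n\inP 0$. Combining the identity parts and the vanishing projection parts gives $\hat{\Sigma}(\beta_0)\inP\Sigma^*$ entrywise, which is the first claim.

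Finally, for $\hat{\Omega}(\beta_0)$ I would apply the continuous mapping theorem to the fixed linear map $S\mapsto \begin{pmatrix}1 & \beta_0\\0&1\end{pmatrix} S \begin{pmatrix}1 & 0\\\beta_0&1\end{pmatrix}$, which is continuous in $S$, so $\hat{\Omega}(\beta_0)\inP \begin{pmatrix}1 & \beta_0\\0&1\end{pmatrix} \Sigma^* \begin{pmatrix}1 & 0\\\beta_0&1\end{pmatrix}$. Under $H_0$ we have $\beta_0 = \beta^*$, and the right-hand side is then exactly the definition of $\Omega^*$ from the reduced-form model, completing the proof.
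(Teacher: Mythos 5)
Your proposal is correct and follows essentially the same route as the paper's own proof: under $H_0$ you reduce $P_{Z^\perp}(Y-D\beta_0)$ and $P_{Z^\perp}D$ to $P_{Z^\perp}\delta$ and $P_{Z^\perp}\xi_{.2}$, split off the identity parts (handled by the law of large numbers) from the projection parts written as $\frac{\delta^\intercal Z}{n}\left(\frac{Z^\intercal Z}{n}\right)^{-1}\frac{Z^\intercal \delta}{n}$ (which vanish under the stated hypotheses), and then pass to $\hat{\Omega}(\beta_0)$ by continuous mapping/Slutsky. The only addition beyond the paper's argument is your explicit remark that $\beta_0=\beta^*$ under the null is what identifies the limit with $\Omega^*$, a detail the paper leaves implicit.
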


\begin{proof}[Proof of Lemma \ref{lem:sigma}]
Under the null hypothesis $H_0: \beta^* = \beta_0$, we have
\begin{align*}
\hat{\Sigma}_{11}(\beta_0) &= \frac{\| P_{Z^\perp} (Y - D \beta_0) \|_2^2}{n} = \frac{\| P_{Z^\perp} (Z\alpha^* + \delta) \|_2^2}{n} =  \frac{\delta^\intercal \delta}{n} - \frac{\delta^\intercal Z}{n} \left(\frac{Z^\intercal Z}{n}\right)^{-1} \frac{Z^\intercal \delta}{n} \\
\hat{\Sigma}_{12}(\beta_0) &= \frac{(Z\gamma^* + \xi_{.2})^\intercal P_{Z^\perp}(Z\alpha^* + \delta)}{n} = \frac{\xi_{.2}^\intercal P_{Z^\perp} \delta}{n} = \frac{\xi_{.2}^\intercal \delta}{n} - \frac{\xi_{.2}^\intercal Z}{n} \left(\frac{Z^\intercal Z}{n}  \right)^{-1} \frac{Z^\intercal \delta}{n}\\
\hat{\Sigma}_{22} (\beta_0) &= \frac{(Z\alpha^* + \xi_{.2})^\intercal P_{Z^\perp} (Z\alpha^* + \xi_{.2})}{n} = \frac{\xi_{.2}^\intercal P_{Z^\perp} \xi_{.2}}{n} = \frac{\xi_{.2}^\intercal \xi_{.2}}{n} - \frac{\xi_{.2}^\intercal Z}{n} \left(\frac{Z^\intercal Z}{n}  \right)^{-1} \frac{Z^\intercal \xi_{.2}}{n}
\end{align*}
By the law of large numbers, $\delta^\intercal \delta/n \inP \Sigma_{11}^*$, $\xi_{.2}^\intercal \delta/n \inP \Sigma_{12}^*$, and $\xi_{.2}^\intercal \xi_{.2} /n \inP \Sigma_{22}^*$. Also, by the assumed limiting properties of $Z^\intercal Z$, $Z^\intercal \delta$, and $Z^\intercal \xi_{.2}$, we arrive at the consistent estimator of $\Sigma^*$. 

To prove $\hat{\Omega}(\beta_0)$ is a consistent estimator of $\Omega^*$, we simply apply Slutsky's theorem to $\hat{\Sigma}(\beta_0)$ to obtain the desired result.
\end{proof}

\begin{lemma}  \label{lem:consistent_cov}
Under the same conditions as lemma \ref{lem:sigma}, the estimator $\hat{\Sigma}(\hat{\beta}_{TSLS})$
$$
\hat{\Sigma}(\hat{\beta}_{TSLS}) =  \frac{1}{n} \begin{pmatrix} (Y - D\hat{\beta}_{TSLS})^\intercal \\ D^\intercal \end{pmatrix}  P_{Z^\perp} \begin{pmatrix} Y - D\hat{\beta}_{TSLS} & D \end{pmatrix} 
$$
 is consistent for $\Sigma^*$. This also implies that under the null, the estimator $\hat{\Omega}(\hat{\beta}_{TSLS})$ 
 \[
\hat{\Omega}(\hat{\beta}_{TSLS}) = \begin{pmatrix} 1 & \hat{\beta}_{TSLS} \\ 0 & 1 \end{pmatrix} \hat{\Sigma} \begin{pmatrix} 1 & 0 \\ \hat{\beta}_{TSLS} & 1 \end{pmatrix}
\]
 is consistent for $\Omega^*$.
\end{lemma}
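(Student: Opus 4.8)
The plan is to reduce the claim to the two results already established in this section: the consistency of $\hat{\Sigma}(\beta_0)$ evaluated at the true value (Lemma \ref{lem:sigma} with $\beta_0 = \beta^*$) and the consistency of the TSLS estimator (Lemma \ref{lem:tsls}). The key observation is that $\hat{\Sigma}(\hat{\beta}_{TSLS})$ has exactly the same functional form as $\hat{\Sigma}(\beta_0)$, only with the random quantity $\hat{\beta}_{TSLS}$ substituted for the fixed null value. Since $\hat{\beta}_{TSLS} \inP \beta^*$, one expects the substitution to be asymptotically harmless, and the remaining task is to make this precise through an algebraic expansion followed by Slutsky's theorem.

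First I would write $Y - D\hat{\beta}_{TSLS} = (Y - D\beta^*) - D(\hat{\beta}_{TSLS} - \beta^*)$ and substitute into each of the three entries of $\hat{\Sigma}(\hat{\beta}_{TSLS})$. The $(2,2)$ entry $\frac{1}{n}D^\intercal P_{Z^\perp} D$ does not depend on $\hat{\beta}_{TSLS}$ at all, so its consistency for $\Sigma^*_{22}$ is immediate from Lemma \ref{lem:sigma}. Expanding the $(1,1)$ entry yields
\[
\hat{\Sigma}_{11}(\hat{\beta}_{TSLS}) = \hat{\Sigma}_{11}(\beta^*) - 2(\hat{\beta}_{TSLS} - \beta^*)\,\frac{(Y - D\beta^*)^\intercal P_{Z^\perp} D}{n} + (\hat{\beta}_{TSLS} - \beta^*)^2\,\frac{D^\intercal P_{Z^\perp} D}{n},
\]
and a shorter expansion of the $(1,2)$ entry gives $\hat{\Sigma}_{12}(\hat{\beta}_{TSLS}) = \hat{\Sigma}_{12}(\beta^*) - (\hat{\beta}_{TSLS} - \beta^*)\,\frac{1}{n}D^\intercal P_{Z^\perp} D$.

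Next I would invoke the established limits. By Lemma \ref{lem:sigma}, $\hat{\Sigma}_{11}(\beta^*) \inP \Sigma^*_{11}$, $\frac{1}{n}(Y - D\beta^*)^\intercal P_{Z^\perp} D \inP \Sigma^*_{12}$, and $\frac{1}{n}D^\intercal P_{Z^\perp} D \inP \Sigma^*_{22}$; by Lemma \ref{lem:tsls}, $\hat{\beta}_{TSLS} - \beta^* \inP 0$. Each correction term is a product of a factor tending to zero in probability with a factor that converges in probability to a finite limit and is therefore bounded in probability, so by Slutsky's theorem every correction term is $o_P(1)$. This yields $\hat{\Sigma}_{11}(\hat{\beta}_{TSLS}) \inP \Sigma^*_{11}$ and $\hat{\Sigma}_{12}(\hat{\beta}_{TSLS}) \inP \Sigma^*_{12}$, which together with the $(2,2)$ entry establishes $\hat{\Sigma}(\hat{\beta}_{TSLS}) \inP \Sigma^*$.

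Finally, for $\hat{\Omega}(\hat{\beta}_{TSLS})$ I would note that the map $(\Sigma,\beta)\mapsto \begin{pmatrix}1 & \beta\\ 0 & 1\end{pmatrix}\Sigma\begin{pmatrix}1 & 0\\ \beta & 1\end{pmatrix}$ is jointly continuous in its arguments. Since $\hat{\Sigma}(\hat{\beta}_{TSLS}) \inP \Sigma^*$ and $\hat{\beta}_{TSLS} \inP \beta^*$, the continuous mapping theorem gives $\hat{\Omega}(\hat{\beta}_{TSLS}) \inP \Omega^*$. The argument is essentially a plug-in consistency result, so there is no deep obstacle; the only point requiring care — the ``hard part'' in spirit — is the bookkeeping that guarantees the correction terms are genuinely $o_P(1)$, which rests precisely on the boundedness in probability of the residual second moments supplied by Lemma \ref{lem:sigma}.
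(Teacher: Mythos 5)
Your proposal is correct and follows essentially the same route as the paper: the paper's proof is a one-line invocation of Slutsky's theorem combined with the consistency of $\hat{\beta}_{TSLS}$ from Lemma \ref{lem:tsls} and the limits supplied by Lemma \ref{lem:sigma}, and your algebraic expansion of the three entries of $\hat{\Sigma}(\hat{\beta}_{TSLS})$ is precisely the bookkeeping that this invocation implicitly relies on. Your write-up is in fact more careful than the paper's (which does not display the expansion or the $o_P(1)$ argument for the correction terms), but it is the same plug-in argument, including the continuous-mapping step for $\hat{\Omega}(\hat{\beta}_{TSLS})$.
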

\begin{proof}[Proof of Lemma \ref{lem:consistent_cov}]

To prove $\hat{\Sigma}(\hat{\beta}_{TSLS})$ is a consistent estimator of $\Sigma^*$, we apply Slutsky's theorem to $\hat{\Sigma}(\beta_0)$ together with the fact that $\hat{\beta}_{TSLS}$ is consistent for $\beta_0$ to obtain the desired result.

Again by Slutsky's theorem on the consistency of $\hat{\Sigma}(\hat{\beta}_{TSLS})$ we get the consistency of $\hat{\Omega}(\hat{\beta}_{TSLS})$ for $\Omega^*$.
\end{proof}


\section{Proofs of the Conditional Densities} \label{sec:proof}

\subsection{Exact Conditional Density}

\begin{theorem}[Exact conditional density via reparametrization] \label{thm:exact} 
The conditional density of $Y, D, \omega$ can be expressed (up to a proportionality constant) with respect to the variables $Y, D, \beta,\alpha_E,u_{-E}$, i.e. 
\begin{equation} \label{eq:density}
\begin{split}
&\ell (Y,D,\omega \mid {\rm supp}(\alpha) = E, {\rm sign}(\alpha_{E}) = \widehat{s}_E )  \\
\propto&
f (Y, D)\cdot g\left(- \left(\begin{array}{l}
D^\intercal \\
Z^\intercal \\
\end{array}\right) P_Z (Y - Z \alpha - D\beta)  + \lambda 
\left(\begin{array}{l}
u\\
0\\
\end{array}\right) + \epsilon \left(\begin{array}{l}
\alpha\\
\beta\\
\end{array}
\right)  \right) \cdot |\mathcal{J}| \cdot \mathbb{I}(\mathcal{B})
\end{split}
\end{equation}
where 
\begin{align*}
|\mathcal{J}| &= \det \begin{pmatrix} \begin{pmatrix}D^\intercal P_Z D  & D^\intercal Z_E  \\
Z_E^\intercal D & Z_E^\intercal Z_E \end{pmatrix} + \epsilon I \end{pmatrix} \cdot \lambda^{|-E|} \\
\mathcal{B} &= \{ {\rm sign}(\alpha_{E}) = \widehat{s}_{E}, \alpha_{-E} = 0, u_E = \widehat{s}_E, \|u_{-E}\|_\infty \leq 1 \}
\end{align*}
\end{theorem}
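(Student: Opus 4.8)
The plan is to start from the joint law of the raw data and the randomization and transport it, via the stationarity (KKT) map, onto the optimization variables. Since $\omega$ is drawn independently of $(Y,D)$ with density $g$, the joint density of $(Y,D,\omega)$ factors as $f(Y,D)\,g(\omega)$. The difficulty is that the conditioning event $\{{\rm supp}(\alpha)=E,\ {\rm sign}(\alpha_E)=\widehat s_E\}$ is an implicit, geometrically awkward region of $\omega$-space: it is the preimage, under the argmin map of \eqref{eq:rsisvive}, of a fixed support and sign pattern. The entire purpose of the reparametrization is that this same event becomes a product of elementary quadrant and box constraints once it is described through the solution $(\beta,\alpha)$ and the subgradient $u$ rather than through $\omega$.

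First I would record that the ridge term with $\epsilon>0$ makes the objective in \eqref{eq:rsisvive} strongly convex, so for each $\omega$ the minimizer $(\widehat\alpha,\widehat\beta)$ is unique, and the subgradient $\widehat u$ of $\|\alpha\|_1$ is pinned down by the sign on $E$ and by stationarity on $-E$. Reading the stationarity part of the KKT system as the definition of $\omega$ expresses $\omega$ as an affine function $\omega(\beta,\alpha,u)$ of the optimization variables; this is precisely the argument of $g$ displayed in \eqref{eq:density}. On the conditioning event the coordinates $\alpha_{-E}=0$ and $u_E=\widehat s_E$ are frozen, leaving $(\beta,\alpha_E,u_{-E})\in\reals^{1+L}$ as the free variables, which matches $\dim(\omega)=L+1$; meanwhile the Lasso complementarity and sign conditions collapse exactly to $\mathcal B=\{{\rm sign}(\alpha_E)=\widehat s_E,\ \alpha_{-E}=0,\ u_E=\widehat s_E,\ \|u_{-E}\|_\infty\le 1\}$.

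Next I would perform the change of variables $\omega\mapsto(\beta,\alpha_E,u_{-E})$ inside $f(Y,D)\,g(\omega)\,\mathbb I(\text{selection})\,d\omega$, producing the stated form $f(Y,D)\,g(\omega(\beta,\alpha,u))\,|\mathcal J|\,\mathbb I(\mathcal B)$ with $|\mathcal J|=|\det \partial\omega/\partial(\beta,\alpha_E,u_{-E})|$. To evaluate the Jacobian I would differentiate the affine map block by block, using $P_Z Z=Z$ and $Z^\intercal P_Z=Z^\intercal$ to simplify. The key observation is that the components of $\omega$ tied to stationarity in $\beta$ and in $\alpha_E$ (the ``active'' block) involve the subgradient only through the frozen value $\lambda u_E=\lambda\widehat s_E$, hence are independent of $u_{-E}$, whereas the components tied to stationarity in $\alpha_{-E}$ depend on $u_{-E}$ solely through $\lambda u_{-E}$. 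This makes the Jacobian block-triangular, so its determinant factors as
\[
|\mathcal J| = \det\!\left(\begin{pmatrix} D^\intercal P_Z D & D^\intercal Z_E \\ Z_E^\intercal D & Z_E^\intercal Z_E \end{pmatrix}+\epsilon I\right)\cdot\det\!\left(\lambda I_{L-|E|}\right),
\]
and the second factor is $\lambda^{|-E|}$, matching the claim.

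The routine calculations, namely differentiating the affine map and reading off the $2\times2$-block-plus-ridge matrix, are not the crux; the main thing to get right is the measure-theoretic bookkeeping. Specifically, I would need to verify that the argmin map is a bijection from $\omega$-space onto the $(\beta,\alpha_E,u_{-E})$ coordinates restricted to the selection stratum, so that the change of variables is legitimate and no extra multiplicity appears, and that the implicit selection event is \emph{exactly} $\mathcal B$ with no boundary or degenerate configurations of positive probability. This is where strong convexity from $\epsilon>0$ together with the independence and continuity of $g$ do the real work. Once the bijection and the constraint set $\mathcal B$ are established, substituting $\omega(\beta,\alpha,u)$ into $g$ and multiplying by $|\mathcal J|$ yields \eqref{eq:density}.
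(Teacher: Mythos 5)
Your proposal is correct and follows essentially the same route as the paper's own proof: read the KKT stationarity condition as an affine reparametrization $\omega \mapsto (\beta,\alpha_E,u_{-E})$, translate the selection event into the quadrant-and-box set $\mathcal{B}$, and apply the change-of-variables formula with the block-triangular Jacobian factoring as $\det\left(\begin{pmatrix} D^\intercal P_Z D & D^\intercal Z_E \\ Z_E^\intercal D & Z_E^\intercal Z_E \end{pmatrix}+\epsilon I\right)\cdot\lambda^{|-E|}$. Your added care about the bijectivity of the argmin map on the selection stratum (secured by strong convexity from $\epsilon>0$) is a detail the paper takes for granted rather than verifies, but it does not alter the argument.
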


\begin{proof}[Proof of Theorem \ref{thm:exact}]
First, we can construct the reparametrization map from KKT condition
\begin{align*}
\omega &= 
- \left(\begin{array}{l}
D^\intercal \\
Z^\intercal \\
\end{array}\right) P_Z (Y - Z \alpha - D\beta)  + \lambda 
\left(\begin{array}{l}
u\\
0\\
\end{array}\right) + \epsilon \left(\begin{array}{l}
\alpha\\
\beta\\
\end{array}
\right) 
\end{align*}
such that
$$
\rm sign (\alpha_E) = u_E = \widehat{s}_E, \quad{} \alpha_{-E} = 0, \quad{} ||u||_{\infty} \leq 1
$$

Second, we compute the Jacobian of this mapping $(Y,D, \omega) \rightarrow (Y,D,\beta, \alpha_E, u_{-E})$:
\begin{align*}
|\mathcal{J}| &= \left|\frac{\partial (Y, D, \omega)}{\partial (Y, D, \beta,\alpha_E,u_{-E})}\right| = \left|\frac{\partial \omega}{\partial (\beta,\alpha_E,u_{-E})}\right| \\
&= 
\det\left(\begin{array}{lll}
D^\intercal P_Z D + \epsilon & D^\intercal Z_E & 0 \\
Z_E^\intercal D & Z_E^\intercal Z_E + \epsilon I_{|E| \times |E|}  & 0 \\
Z_{-E}^\intercal D & Z_{-E}^\intercal Z_E & \lambda I_{|-E| \times |-E|}
\end{array}  \right) \\
&= \det\left( \left(\begin{array}{ll}
D^\intercal P_Z D  & D^\intercal Z_E  \\
Z_E^\intercal D & Z_E^\intercal Z_E 
\end{array} \right) + \epsilon I_{(1 + |E|) \times (1+ |E|)} \right) \cdot \lambda^{|-E|}
\end{align*}

Then, using the change of variables formula, the density of $(Y, D,\omega)$ can be equivalently expressed with variables $(Y, D, \beta,\alpha_E,u_{-E})$ (where $f(Y, D)$ denotes the pre-selection distribution of data)
\begin{align*}
&\ell(Y,D,\omega \mid {\rm supp}(\alpha) = E, \text{sign}(\alpha_{E}) = \widehat{s}_E )  \\
=& f(Y,D) \cdot g(\omega) \cdot {\mathbb{I} ( {\rm supp}(\alpha) = E, \text{sign}(\alpha_{E}) = \widehat{s}_E )} \\
=& f (Y, D)\cdot g\left(- \left(\begin{array}{l}
D^\intercal \\
Z^\intercal \\
\end{array}\right) P_Z (Y - Z \alpha - D\beta)  + \lambda 
\left(\begin{array}{l}
u\\
0\\
\end{array}\right) + \epsilon \left(\begin{array}{l}
\alpha\\
\beta\\
\end{array}
\right)  \right) \cdot |\mathcal{J}| \cdot \mathbb{I}(\mathcal{B})
\end{align*}
where for the last line, the conditioning event  $\{ {\rm supp}(\alpha) = E, \text{sign}(\alpha_{E}) = \widehat{s}_E  \}$ is equivalent to the event $\mathcal{B}$ under the new parametrization by the optimization variables.
\end{proof}

\subsection{Asymptotic Conditional Densities with Selective CLT}  \label{sec:asymptotic}

For brevity we directly describe the Selective CLT in our model and the sisVIVE procedure.

Consider any test statistic $T$ for $\beta^*$ and let $S$ be defined as part of the score of the loss function in sisVIVE:
$$
S = \begin{pmatrix}  Z^\intercal Y \\ D^\intercal P_Z Y  \end{pmatrix}
$$ 
Suppose for any fixed $E$ and under the null $H_0: \beta^* = \beta_0$, $T$ and $\frac{1}{\sqrt{n}} S$ jointly converge to a Normal distribution
\begin{equation} \label{eq:pre_CLT}
\begin{pmatrix}T \\ \frac{1}{\sqrt{n}} S\end{pmatrix} \rightarrow N \left(\begin{pmatrix}\mu_T \\ \mu_S\end{pmatrix}, \begin{pmatrix}\Sigma_T & \Sigma_{T, S} \\ \Sigma_{T, S} & \Sigma_S\end{pmatrix}\right), \quad{} n \rightarrow \infty
\end{equation}
If $\Sigma_T$ and $\Sigma_{T,S}$ are known, by the joint gaussianity of $T$ and $S$, we can express $S$ into mutually orthogonal parts, $F$ and $\Sigma_{S, T} \Sigma^{-1}_{T} T$ where these two terms are independent of each other 
\begin{align*}
S &= F + \Sigma_{S, T} \Sigma^{-1}_{T} T \\
F &= \begin{pmatrix} Z^\intercal Y \\ D^\intercal P_Z Y \end{pmatrix} - \Sigma_{S,T}\Sigma_T^{-1} T
\end{align*}
Also, by the joint gaussianity, $F$ is, in the asymptotic sense, a sufficient statistic for $\mu_S$ and by standard arguments for exponential families, conditioning on $F$ in the conditional density \eqref{eq:density} will lead to a free-of-$\mu_S$ density, i.e.
\begin{equation} \label{eq:asymp_density}
\ell_{H_0} (T, \omega \mid {\rm supp}(\alpha) = E, \text{sign}(\alpha_{E}) = \widehat{s}_E, F) = \phi_{(\mu_T, \Sigma_T )} (T) \cdot g(T, \beta, \alpha, u \mid E, \widehat{s}_E, F) \cdot |\mathcal{J}| \cdot \mathbb{I}(\mathcal{B}) 
\end{equation}
where $\phi(\mu_T,\Sigma_T)$ is the density of the Normal distribution with mean $\mu_T$ and variance $\Sigma_{T}$. Intuitively, the terms on the right hand side of $\phi_{(\mu_T, \Sigma_T)}(T)$ in \eqref{eq:asymp_density} reflect that the effect that instrument selection has on the unconditional density of the test statistic $\phi_{(\mu_T, \Sigma_T )} (T)$, similar to the terms on the right hand size of $f(Y,D)$ in \eqref{eq:density}. 
But,  \eqref{eq:asymp_density} is easier to sample from than \eqref{eq:density} because by conditioning on $F$, the \eqref{eq:asymp_density} is only function of parameters $\mu_T, \Sigma_T$, which are known from the null distribution of the test statistic $T$. Also, if $\mathcal{L}(T\mid E, \widehat{s}_E, F)$ is the distribution of $T$ from \eqref{eq:asymp_density} by marginalizing out $\omega$, then \citet{markovic_bootstrap_2017} has proved under regularity conditions, the p-value based on $\mathcal{L}(T\mid E, \widehat{s}_E, F)$ is uniformly distributed under $H_0$ and thus, with the conditioning of $F$, we obtain a pivotal density to sample from. Importantly, the result still holds under regularity conditions if we replace the unknown $\Sigma_{T}$ and $\Sigma_{T,S}$ with consistent estimators of them \citep{tian_randomized_2016}; see the next section for specific instances of $T$ being the TSLS and the AR test statistics.


\subsection{Asymptotic Conditional Densities with TSLS Statistics}

To discriminate from the random variables, we explicitly use the supscription {\emph{$\rm obs$}} to denote the observed values plugged in from the dataset. 

There are several preparation steps regarding the components in the conditional density and theorem \ref{thm:tsls_asymp} combines them to finally arrive at the conditional density for sampling. Lemma \ref{lem:tsls_asymp} points out the components in $T_{TSLS}$ that determines its asymptotic behavior. Lemma \ref{lem:tsls_kkt_asymp} deals with the asymptotics of the score vector in the KKT condition.
\begin{lemma} \label{lem:tsls_asymp}
Assuming $\hat{\Sigma}_{11}$ is consistent estimator that $\hat{\Sigma}_{11} = \Sigma^*_{11} + O_p (1)$, then
\begin{equation}
T_{TSLS} = \left[\frac{D^\intercal Z (Z^\intercal Z)^{-1}}{\sqrt{\hat{\Sigma}_{11}} \sqrt{D^\intercal (P_{Z} - P_{Z_{E}}) D}}\right]^{obs}\cdot (Z^\intercal \delta) - \left[\frac{D^\intercal Z_{E} (Z_{E}^\intercal Z_{E})^{-1}}{\sqrt{\hat{\Sigma}_{11}} \sqrt{D^\intercal (P_{Z} - P_{Z_{E}}) D}} \right]^{obs}\cdot(Z_E^\intercal \delta) + O_p (1)
\end{equation}
\end{lemma}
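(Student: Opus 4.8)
\noindent The plan is to turn $T_{\rm TSLS}$, under the null, into an exact linear functional of the two score vectors $Z^\intercal\delta$ and $Z_E^\intercal\delta$, and then to argue that the random coefficients multiplying these scores concentrate around their observed values, so that freezing them at the ``obs'' values is asymptotically harmless. First I would substitute $H_0:\beta^*=\beta_0$ into the numerator of \eqref{eq:tsls_test}. Because $E^*\subseteq E$ we have $\alpha^*_{-E}=0$ and hence $Z\alpha^*=Z_E\alpha^*_E$, so the structural model \eqref{eq:model} gives $Y-D\beta_0 = Z_E\alpha^*_E+\delta$ under the null. The crucial geometric identity is $(P_Z-P_{Z_E})Z_E=0$: the columns of $Z_E$ lie in the column spaces of both $Z$ and $Z_E$, so $P_Z Z_E = P_{Z_E} Z_E = Z_E$. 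This annihilates the direct-effect contribution and reduces the numerator to $D^\intercal(P_Z-P_{Z_E})\delta$ with no error term.

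Next I would expand $P_Z-P_{Z_E}=Z(Z^\intercal Z)^{-1}Z^\intercal - Z_E(Z_E^\intercal Z_E)^{-1}Z_E^\intercal$, distribute $D^\intercal$ on the left and $\delta$ on the right, and divide by $\sqrt{\widehat\Sigma_{11}}\sqrt{D^\intercal(P_Z-P_{Z_E})D}$. This produces exactly the two bracketed coefficients of the statement, multiplying $Z^\intercal\delta$ and $Z_E^\intercal\delta$ respectively. Up to this point the decomposition is purely algebraic and exact; the substantive content of the lemma is the replacement of these random coefficients by their observed counterparts, with the discrepancy absorbed into the remainder.

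To justify that replacement I would track stochastic orders using the moment assumptions inherited from Lemma \ref{lem:tsls} ($Z^\intercal Z/n\inP\Lambda$ nonsingular, $Z^\intercal\delta/n\inP 0$ and $Z^\intercal\xi_{.2}/n\inP 0$) together with the consistency of $\widehat\Sigma_{11}$ established in Lemma \ref{lem:consistent_cov}. Each coefficient equals $n^{-1/2}$ times a ratio whose ingredients $D^\intercal Z/n$, $(Z^\intercal Z/n)^{-1}$, $D^\intercal(P_Z-P_{Z_E})D/n$, and $\widehat\Sigma_{11}$ all converge in probability to deterministic limits, so each coefficient is $O_p(n^{-1/2})$ per component while each score $Z^\intercal\delta$, $Z_E^\intercal\delta$ is $O_p(\sqrt n)$; hence each product is $O_p(1)$, consistent with $T_{\rm TSLS}$ being asymptotically standard normal. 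A Slutsky argument then shows that exchanging the converging coefficients for their observed values perturbs each product only through the coefficient fluctuations, which are of smaller order than the leading linear terms they scale.

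The main obstacle is precisely this last step. Since the scores live on the $\sqrt n$ scale, one must verify that the \emph{simultaneous} fluctuations of $D^\intercal Z/n$, $(Z^\intercal Z/n)^{-1}$, $D^\intercal(P_Z-P_{Z_E})D/n$, and $\widehat\Sigma_{11}$ vanish fast enough that, after multiplication against the scores, their joint contribution does not dominate the displayed linear form. This coefficient-freezing is the core of the selective-CLT linearization of \citet{markovic_bootstrap_2017}, on which the asymptotic conditional densities of Section \ref{sec:asymptotic} depend, and the care lies in combining the several convergences so that plugging in all observed values at once incurs only the stated remainder.
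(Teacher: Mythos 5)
Your proposal is correct and follows essentially the same route as the paper's proof: substitute the null model so that $(P_Z-P_{Z_E})$ annihilates $Z_E\alpha^*_E$, expand the projections into exact linear functionals of $Z^\intercal\delta$ and $Z_E^\intercal\delta$, and then freeze the random coefficients at fixed values using the $O_p(n^{-1/2})$ convergence of $Z^\intercal Z/n$, $Z^\intercal D/n$, and $\widehat\Sigma_{11}$ against scores of order $O_p(\sqrt n)$. The only cosmetic difference is that the paper passes through the population limits $A,B$ before substituting the observed values, whereas you swap in the observed values directly; the order-of-magnitude bookkeeping is identical.
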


\begin{lemma}\label{lem:tsls_kkt_asymp}
The reconstruction map is asymptotically equivalent to
$$
\omega = \begin{pmatrix} Z^\intercal Z & Z^\intercal D \\ (Z^\intercal D)^\intercal & D^\intercal P_Z D \end{pmatrix}^{obs} \begin{pmatrix} \alpha - \alpha^* \\ \beta - \beta^* \end{pmatrix} - \begin{pmatrix} I \\ (Z^\intercal D)^\intercal (Z^\intercal Z)^{-1} \end{pmatrix}^{obs} \cdot Z^\intercal \delta + \lambda \begin{pmatrix} u \\ 0 \end{pmatrix} + \epsilon \begin{pmatrix} \alpha \\ \beta \end{pmatrix} + O_p (1)
$$
\end{lemma}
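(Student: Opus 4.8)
The plan is to read off the claimed identity from the stationarity (KKT) condition of the randomized program \eqref{eq:rsisvive}, substitute the structural model for $Y$, invoke one exact linear-algebra identity, and then freeze the data-dependent Gram quantities at their observed values while tracking the size of the resulting remainder.

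First I would solve the stationarity condition of \eqref{eq:rsisvive} for the randomization $\omega$. Using $P_Z Z = Z$ so that $Z^\intercal P_Z = Z^\intercal$, stationarity in $(\alpha,\beta)$ gives the \emph{exact} reconstruction map
\begin{align*}
\omega = -\begin{pmatrix} Z^\intercal \\ D^\intercal P_Z \end{pmatrix}\bigl(Y - Z\alpha - D\beta\bigr) + \lambda \begin{pmatrix} u \\ 0 \end{pmatrix} + \epsilon \begin{pmatrix} \alpha \\ \beta \end{pmatrix}.
\end{align*}
Substituting $Y = Z\alpha^* + D\beta^* + \delta$, i.e. $Y - Z\alpha - D\beta = -Z(\alpha-\alpha^*) - D(\beta-\beta^*) + \delta$, and again using $P_Z Z = Z$, this expands---still exactly---into
\begin{align*}
\omega = \begin{pmatrix} Z^\intercal Z & Z^\intercal D \\ D^\intercal Z & D^\intercal P_Z D \end{pmatrix}\begin{pmatrix} \alpha - \alpha^* \\ \beta - \beta^* \end{pmatrix} - \begin{pmatrix} Z^\intercal \delta \\ D^\intercal P_Z \delta \end{pmatrix} + \lambda \begin{pmatrix} u \\ 0 \end{pmatrix} + \epsilon \begin{pmatrix} \alpha \\ \beta \end{pmatrix}.
\end{align*}
The one algebraic identity I need is $D^\intercal P_Z\delta = D^\intercal Z (Z^\intercal Z)^{-1} Z^\intercal\delta = (Z^\intercal D)^\intercal (Z^\intercal Z)^{-1} Z^\intercal\delta$, which rewrites the noise vector as $\bigl(\begin{smallmatrix} I \\ (Z^\intercal D)^\intercal (Z^\intercal Z)^{-1}\end{smallmatrix}\bigr) Z^\intercal\delta$, exactly the form appearing in the statement.

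It then remains to freeze the data-dependent coefficients---the Gram matrix and $(Z^\intercal D)^\intercal (Z^\intercal Z)^{-1}$---at their observed realizations (superscript $\mathrm{obs}$), leaving $Z^\intercal\delta$ and the deviations $\alpha-\alpha^*,\beta-\beta^*$ as the only retained stochastic pieces, and to argue each replacement is negligible relative to the $O(\sqrt n)$--$O(\sqrt{n\log n})$ dominant terms. Under the assumptions of Lemma \ref{lem:tsls} ($Z^\intercal Z/n \inP \Lambda$ nonsingular, $Z^\intercal\delta = O_p(\sqrt n)$ by the CLT, and root-$n$ consistency giving $\alpha-\alpha^*,\beta-\beta^* = O_p(n^{-1/2})$), each replacement costs $O_p(1)$: the Gram matrix deviates from its observed realization at CLT scale $O_p(\sqrt n)$, and this times the $O_p(n^{-1/2})$ coefficient deviations is $O_p(1)$; similarly the $O_p(n^{-1/2})$ fluctuation of $(Z^\intercal D)^\intercal (Z^\intercal Z)^{-1}$ times the $O_p(\sqrt n)$ noise $Z^\intercal\delta$ is $O_p(1)$. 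The ridge term $\epsilon(\alpha,\beta)^\intercal$ carries no random coefficient and passes through verbatim, while $\lambda(u,0)^\intercal$ (order $\sqrt{n\log n}$) and the frozen leading terms (order $\sqrt n$) dominate. Collecting the errors into one $O_p(1)$ remainder yields the stated map.

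The main obstacle is the order bookkeeping of this last step, and in particular justifying $\alpha-\alpha^*,\beta-\beta^* = O_p(n^{-1/2})$ for the \emph{sampling} variables (not merely for the observed solution $\widehat\alpha,\widehat\beta$), so that the frozen-coefficient errors are genuinely lower order under the conditional law. This is where the scaling choices $\lambda = O(\sqrt{n\log n})$, $\epsilon = o(1)$ and the selection consistency of sisVIVE from \citet{kang_instrumental_2016} and \citet{windmeijer2016use} enter; the companion Lemma \ref{lem:tsls_asymp} supplies the analogous expansion for $T_{\rm TSLS}$, and matching both against the common Gaussian vector $Z^\intercal\delta$ is what ultimately feeds the selective CLT of Section \ref{sec:asymptotic}.
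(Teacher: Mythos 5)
Your proposal follows the paper's own route step for step: solve the stationarity (KKT) condition of the randomized program for $\omega$, substitute the structural model $Y = Z\alpha^* + D\beta^* + \delta$, use $D^\intercal P_Z \delta = (Z^\intercal D)^\intercal (Z^\intercal Z)^{-1} Z^\intercal \delta$ to put the noise term in the stated form, and then replace the data-dependent coefficient matrices by fixed values at a total cost of $O_p(1)$. The only cosmetic difference is that the paper routes the replacement through the population limits $A = E(Z_i Z_i^\intercal)$ and $B = E(Z_i D_i)$ before passing to the observed values, while you freeze directly at the observed values; these are the same argument, since both the sampling and the observed Gram quantities sit within $O_p(n^{-1/2})$ of the population limits after dividing by $n$.

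The step you flag as the ``main obstacle''---establishing $\alpha - \alpha^*,\ \beta - \beta^* = O_p(n^{-1/2})$ for the \emph{sampling} variables rather than for the observed solution---is closed in the paper by a short order-balance argument on the exact KKT identity itself, not by sisVIVE's selection consistency and not by the root-$n$ consistency of Lemma \ref{lem:tsls}. The randomization is chosen with scale $\omega \sim N(0, n \cdot I)$, so the left side of the identity is $O_p(n^{1/2})$; on the right side, $\lambda (u,0)^\intercal$, $\epsilon(\alpha,\beta)^\intercal$, and the noise term built from $Z^\intercal\delta$ are each $O_p(n^{1/2})$ by the choices of $\lambda$, $\epsilon$ and the CLT, whereas the Gram term scales like $n$ times the coefficient deviations. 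Matching the distributions of the two sides of the equation therefore forces $\sqrt{n}(\alpha - \alpha^*)$ and $\sqrt{n}(\beta - \beta^*)$ to be $O_p(1)$. With that rate in hand, your bookkeeping---$O_p(\sqrt{n})$ Gram fluctuation times $O_p(n^{-1/2})$ coefficient deviation, plus $O_p(n^{-1/2})$ fluctuation of $(Z^\intercal D)^\intercal (Z^\intercal Z)^{-1}$ times $O_p(\sqrt{n})$ noise---goes through exactly as you wrote it. So the proposal is correct in structure and matches the paper; the single missing ingredient is this self-consistency argument, which is both simpler and more self-contained than the external results you proposed to lean on.
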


We can see that the term $Z^\intercal \delta$ dictates the asymptotic behavior of both the test statistic and the KKT condition, which essentially allows us to have a linear decomposition with explicit parametric expression. Theorem \ref{thm:tsls_asymp} gives the final conditional density using the test statistic $T_{TSLS}$ that is feasible to sampling.

\begin{theorem}[Asymptotic conditional density of TSLS test statistic] \label{thm:tsls_asymp}
The asymptotic conditional density of $T_{\rm TSLS}$ under the null hypothesis $H_0: \beta^* = \beta_0$ can be expressed (up to a proportionality constant) with respect to the variables $T_{\rm TSLS}, \beta, \alpha_E, u_{-E}$:
\begin{equation} \label{eq:tsls_cden}
\begin{split} 
&\ell_{\beta_0}(T_{\rm TSLS}, \omega \mid {\rm supp}(\alpha) = E, \text{sign}(\alpha_{E}) = \widehat{s}_E, F) \propto \phi_{(0, 1)} (T_{\rm TSLS}) \\
&\cdot g\left\{- \Sigma_{S,T}^{obs}\cdot T_{\rm TSLS}  + \left[\begin{pmatrix}Z^\intercal Z & Z^\intercal D \\ D^\intercal Z & D^\intercal P_{Z} D\end{pmatrix} + \epsilon \cdot I \right]^{obs}  \begin{pmatrix}\alpha \\ \beta \end{pmatrix} + \lambda \begin{pmatrix} u \\ 0 \end{pmatrix} - \left[\begin{pmatrix}Z^\intercal Y \\ D^\intercal P_Z Y\end{pmatrix} - \Sigma_{S,T}\cdot \widehat{T}_{TSLS}\right]^{obs} \right\} \\
&  \cdot |\mathcal{J}|^{obs} \cdot \mathbb{I}(\mathcal{B}) 
\end{split}
\end{equation}
where 
$$
\Sigma_{S,T} =  \sqrt{\frac{\hat{\Sigma}_{11}}{D^\intercal (P_Z - P_{Z_E}) D}} \begin{pmatrix} Z^\intercal (P_Z - P_{Z_E}) D \\ D^\intercal (P_Z - P_{Z_E}) D\end{pmatrix}
$$
\end{theorem}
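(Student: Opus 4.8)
The plan is to specialize the general Selective CLT density \eqref{eq:asymp_density} to the choice $T = T_{\rm TSLS}$, which reduces the proof to three ingredients: (i) verifying that $(T_{\rm TSLS}, S/\sqrt{n})$ is jointly asymptotically Gaussian in the sense of \eqref{eq:pre_CLT}; (ii) computing the covariance blocks $\Sigma_T$ and $\Sigma_{S,T}$ explicitly and matching them to the stated forms; and (iii) rewriting the reconstruction map for $\omega$ so that the argument of $g$ collapses to the bracketed expression in \eqref{eq:tsls_cden}. Lemmas \ref{lem:tsls_asymp} and \ref{lem:tsls_kkt_asymp} carry most of the load, since they already express both $T_{\rm TSLS}$ and the KKT reconstruction of $\omega$ as asymptotically linear functionals of the single Gaussian term $Z^\intercal \delta$.

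First I would establish the joint normality. By Lemma \ref{lem:tsls_asymp}, under $H_0: \beta^* = \beta_0$ and with $E^* \subseteq E$, the numerator of $T_{\rm TSLS}$ equals $D^\intercal (P_Z - P_{Z_E})\delta$, because $(P_Z - P_{Z_E}) Z\alpha^* = 0$ once $\alpha^*$ is supported inside $E$; hence $T_{\rm TSLS}$ is, up to a negligible remainder, a fixed linear functional of $Z^\intercal \delta$. The stochastic part of $S = (Z^\intercal Y, D^\intercal P_Z Y)^\intercal$ is $\left(\begin{smallmatrix} I \\ (Z^\intercal D)^\intercal (Z^\intercal Z)^{-1}\end{smallmatrix}\right) Z^\intercal \delta$, again linear in $Z^\intercal \delta$. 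A multivariate (Lindeberg) CLT applied to $Z^\intercal \delta/\sqrt{n}$ under the i.i.d.\ error assumption of model \eqref{eq:model} then yields \eqref{eq:pre_CLT}. The covariance blocks follow from $\mathrm{Cov}(a^\intercal \delta, b^\intercal \delta) = \Sigma_{11}^* \, a^\intercal b$ together with the projection identities $P_Z P_{Z_E} = P_{Z_E}$ and $P_Z(P_Z - P_{Z_E}) = P_Z - P_{Z_E}$; a direct computation gives $\Sigma_T \to 1$ after normalizing by the consistent $\hat{\Sigma}_{11}$, and reproduces the claimed $\Sigma_{S,T} = \sqrt{\hat{\Sigma}_{11}/(D^\intercal (P_Z - P_{Z_E}) D)}\,(Z^\intercal (P_Z - P_{Z_E}) D,\; D^\intercal (P_Z - P_{Z_E}) D)^\intercal$.

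Next I would form $F = S - \Sigma_{S,T}\Sigma_T^{-1} T$. By the joint Gaussianity just established, $F$ is asymptotically independent of $T$ and is an asymptotic sufficient statistic for the nuisance mean $\mu_S$ (which carries the unknown $\alpha^*, \gamma^*$), so conditioning on $F$ removes these parameters exactly as in the exponential-family argument following \eqref{eq:pre_CLT}. To produce the argument of $g$, I would take the reconstruction map of Lemma \ref{lem:tsls_kkt_asymp}, observe that its term $\left(\begin{smallmatrix} I \\ (Z^\intercal D)^\intercal (Z^\intercal Z)^{-1}\end{smallmatrix}\right) Z^\intercal \delta$ is precisely the stochastic part of $S$, and use the decomposition $S = \left(\begin{smallmatrix} Z^\intercal Z & Z^\intercal D \\ D^\intercal Z & D^\intercal P_Z D\end{smallmatrix}\right)(\alpha^*,\beta^*)^\intercal + (\text{stochastic part})$ to collapse the $(\alpha - \alpha^*, \beta - \beta^*)$ and $-Z^\intercal \delta$ terms into $\left(\begin{smallmatrix} Z^\intercal Z & Z^\intercal D \\ D^\intercal Z & D^\intercal P_Z D\end{smallmatrix}\right)(\alpha,\beta)^\intercal - S$. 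Substituting $S = F + \Sigma_{S,T} T$ and evaluating the deterministic matrices at their observed values then reproduces the bracketed argument of $g$ in \eqref{eq:tsls_cden}, while the change-of-variables Jacobian $|\mathcal{J}|$ is inherited unchanged from Theorem \ref{thm:exact}. The replacement of the unknown $\Sigma_{11}^*$ by $\hat{\Sigma}_{11}$ throughout is justified by the robustness of the selective CLT to consistent plug-in estimators \citep{tian_randomized_2016, markovic_bootstrap_2017}.

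The main obstacle is the bookkeeping in steps (ii)--(iii): one must verify that the $O_p(1)$ remainders in Lemmas \ref{lem:tsls_asymp} and \ref{lem:tsls_kkt_asymp} are asymptotically negligible relative to the leading $\sqrt{n}$-scale terms, so that they do not perturb the limiting conditional law, and that the projection cancellation $(P_Z - P_{Z_E}) Z\alpha^* = 0$ holds, which is exactly where $E^* \subseteq E$ is used. The more conceptual subtlety is the sufficiency-and-conditioning claim that $F$ eliminates $\mu_S$: this is asymptotic rather than exact, so it relies on the regularity conditions of \citet{markovic_bootstrap_2017} guaranteeing that the resulting conditional p-value is asymptotically uniform under $H_0$.
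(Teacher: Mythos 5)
Your proposal is correct and follows essentially the same route as the paper's own proof: both rest on Lemmas \ref{lem:tsls_asymp} and \ref{lem:tsls_kkt_asymp} to express $T_{\rm TSLS}$ and the KKT reconstruction of $\omega$ as linear functionals of $Z^\intercal \delta$, establish joint asymptotic Gaussianity of $(T_{\rm TSLS}, S/\sqrt{n})$, compute $\Sigma_{S,T}$ by direct covariance calculation, and then collapse the $(\alpha - \alpha^*, \beta - \beta^*)$ and $Z^\intercal\delta$ terms into the matrix times $(\alpha,\beta)^\intercal$ minus $S$, substitute $S = F + \Sigma_{S,T}\Sigma_T^{-1}T_{\rm TSLS}$, and finish with a linear change of variables having constant Jacobian. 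Your explicit remarks (the role of $E^* \subseteq E$ in the cancellation $(P_Z - P_{Z_E})Z\alpha^* = 0$, and the plug-in robustness for $\hat{\Sigma}_{11}$) match what the paper uses implicitly.
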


We remark that $\Sigma^*_{12}$, i.e. the correlation between $D$ and $\delta$ does not show up in the conditional density.

We can also use the TSLS estimator in the conditional density instead of the TSLS test statistic. They are essentially equivalent asymptotically in the inference. The test statistic $T_{TSLS}(\beta_0)$ and the point estimator $\beta_{TSLS}$ are related as follows: 
\[
T_{TSLS}(\beta_0) = \frac{\beta_{TSLS} - \beta_0}{\sqrt{\frac{\Sigma_{11}^*}{D^\intercal (P_Z - P_{Z_E}) D}}} 
\]

\begin{corollary}[Asymptotic conditional density of TSLS estimator] \label{thm:tsls_beta_asymp}
The asymptotic conditional density of $\beta_{\rm TSLS}$ under the null hypothesis $H_0: \beta^* = \beta_0$ can be expressed (up to a proportionality constant) with respect to the variables $\beta_{\rm TSLS}, \beta, \alpha_E, u_{-E}$:
\begin{equation}
\begin{split}
&\ell_{\beta_0} (\beta_{\rm TSLS}, \omega \mid {\rm supp}(\alpha) = E, \text{sign}(\alpha_{E}) = \widehat{s}_E, F) \propto \phi_{\left(\beta_0, \Sigma_T^{obs}\right)} (\beta_{\rm TSLS}) \\
& \cdot g\left\{-\Sigma_{S,T}^{obs} \cdot \Sigma_T^{-1, obs} \cdot \beta_{\rm TSLS} + \left[\begin{pmatrix}Z^\intercal Z & Z^\intercal D \\ D^\intercal Z & D^\intercal P_Z D\end{pmatrix} + \epsilon\cdot I\right]^{obs}  \begin{pmatrix}\alpha \\ \beta\end{pmatrix} + \lambda \begin{pmatrix} u \\ 0 \end{pmatrix} - \left[\begin{pmatrix}Z^\intercal Y \\ D^\intercal P_Z Y\end{pmatrix}-\Sigma_{S,T}\Sigma_T^{-1, obs}\widehat{\beta}_{\rm TSLS}\right]^{obs} \right\} \\
& \cdot |\mathcal{J}|^{obs} \cdot \mathbb{I}(\mathcal{B}) 
\end{split}
\end{equation}
where
$$
\Sigma_T = \frac{\hat{\Sigma}_{11}}{D^\intercal (P_Z - P_{Z_E}) D},\quad{}\Sigma_{S,T} = \frac{\hat{\Sigma}_{11}}{D^\intercal (P_Z - P_{Z_E}) D}\begin{pmatrix}Z^\intercal (P_Z - P_{Z_E}) D \\ D^\intercal (P_Z - P_{Z_E}) D\end{pmatrix}
$$
\end{corollary}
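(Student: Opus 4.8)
The plan is to derive Corollary~\ref{thm:tsls_beta_asymp} directly from Theorem~\ref{thm:tsls_asymp} by a change of variables, using the affine relationship between the TSLS test statistic and the TSLS estimator recorded just above the corollary. Under $H_0:\beta^*=\beta_0$ the two satisfy $T_{\rm TSLS}=\Sigma_T^{-1/2}(\beta_{\rm TSLS}-\beta_0)$, equivalently $\beta_{\rm TSLS}=\beta_0+\Sigma_T^{1/2}T_{\rm TSLS}$, where $\Sigma_T=\widehat{\Sigma}_{11}/[D^\intercal(P_Z-P_{Z_E})D]$ is the estimator variance. Since this map is affine with constant derivative $\Sigma_T^{1/2}$, its Jacobian is a constant absorbed into the proportionality symbol, while the reparametrization $(Y,D,\omega)\mapsto(\beta,\alpha_E,u_{-E})$ and hence the factors $|\mathcal{J}|^{obs}$ and $\mathbb{I}(\mathcal{B})$ are untouched. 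The problem therefore reduces to transforming the two remaining factors, the marginal Gaussian density and the argument of $g$.

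First I would push the marginal density forward: the standard normal $\phi_{(0,1)}(T_{\rm TSLS})$ maps under the affine transformation to $\phi_{(\beta_0,\Sigma_T^{obs})}(\beta_{\rm TSLS})$, which is the leading factor claimed. Next I would substitute $T_{\rm TSLS}=\Sigma_T^{-1/2}(\beta_{\rm TSLS}-\beta_0)$ into the argument of $g$ from Theorem~\ref{thm:tsls_asymp}. Writing $v=\begin{pmatrix}Z^\intercal(P_Z-P_{Z_E})D\\ D^\intercal(P_Z-P_{Z_E})D\end{pmatrix}$, the test-statistic covariance is $\Sigma_{S,T}^{(T)}=\Sigma_T^{1/2}v$, so the term $-\Sigma_{S,T}^{(T),obs}T_{\rm TSLS}$ becomes $-v\,\beta_{\rm TSLS}+v\,\beta_0$. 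Because the estimator-version covariance in the corollary is $\Sigma_{S,T}^{(\beta)}=\Sigma_T\,v=\Sigma_T^{1/2}\Sigma_{S,T}^{(T)}$, one has $-\Sigma_{S,T}^{(\beta),obs}\Sigma_T^{-1,obs}\beta_{\rm TSLS}=-v\,\beta_{\rm TSLS}$, so the $\beta_{\rm TSLS}$-linear piece matches the stated form exactly. This rescaling of the covariance by $\Sigma_T^{1/2}$ is just the statement $\mathrm{Cov}(S,\beta_{\rm TSLS})=\Sigma_T^{1/2}\,\mathrm{Cov}(S,T_{\rm TSLS})$ induced by the affine map, so it is consistent with the joint CLT~\eqref{eq:pre_CLT}.

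The only step needing real bookkeeping --- and the place I expect the sole obstacle --- is verifying that the constant ($\beta_{\rm TSLS}$-free) part of the $g$-argument, i.e. the sufficient statistic $F$, transforms correctly, since the substitution above leaves a stray additive term $+v\,\beta_0$. Using the observed-value identity $\widehat{T}_{\rm TSLS}=\Sigma_T^{-1/2}(\widehat{\beta}_{\rm TSLS}-\beta_0)$ to rewrite $F^{(T)}=S-\Sigma_{S,T}^{(T),obs}\widehat{T}_{\rm TSLS}$ in terms of $\widehat{\beta}_{\rm TSLS}$ gives $F^{(T)}=F^{(\beta)}+v\,\beta_0$, where $F^{(\beta)}=S-\Sigma_{S,T}^{(\beta),obs}\Sigma_T^{-1,obs}\widehat{\beta}_{\rm TSLS}$ is the sufficient statistic appearing in the corollary and $S=\begin{pmatrix}Z^\intercal Y\\ D^\intercal P_Z Y\end{pmatrix}$. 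The stray $+v\,\beta_0$ from the substitution then cancels exactly against the $+v\,\beta_0$ hidden inside $F^{(T)}$, so conditioning on $F$ is preserved and the entire argument of $g$ collapses to the displayed form. Assembling the transformed Gaussian factor, the rewritten $g$, and the unchanged $|\mathcal{J}|^{obs}\mathbb{I}(\mathcal{B})$ yields the density in the corollary, completing the proof.
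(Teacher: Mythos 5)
Your proposal is correct, and it takes a genuinely different route from the paper. The paper proves the corollary from scratch: it re-derives the asymptotic linearization of $\beta_{\rm TSLS}$ in terms of $Z^\intercal\delta$ and $Z_E^\intercal\delta$ (plugging in the model and freezing sample coefficients at observed values), establishes the pre-selection Gaussian limit with mean $\beta_0$ and variance $\Sigma_T$, computes $\Sigma_{S,T}=\mathrm{Cov}(S,\beta_{\rm TSLS})$ directly from the joint CLT, and then repeats the linear decomposition of the KKT reconstruction map and the change of variables — essentially mirroring the proof of Theorem~\ref{thm:tsls_asymp} line by line. You instead treat the statement as an actual corollary: since $T_{\rm TSLS}=\Sigma_T^{-1/2,\,obs}(\beta_{\rm TSLS}-\beta_0)$ is an \emph{exact} algebraic identity with observed (hence fixed) coefficients, you push the density of Theorem~\ref{thm:tsls_asymp} forward through this affine map, absorb the constant Jacobian into the proportionality sign, and verify the two nontrivial bookkeeping facts: the covariance rescaling $\Sigma_{S,T}^{(\beta)}=\Sigma_T^{1/2}\Sigma_{S,T}^{(T)}$, and the exact cancellation of the stray $v\beta_0$ term against the shift $F^{(T)}=F^{(\beta)}+v\beta_0$ in the sufficient statistic, which also shows the two conditioning events are equivalent (they differ by a fixed vector). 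Your argument is shorter and makes the logical dependence transparent — the estimator inherits everything from the test statistic because the map between them is affine with data-independent (conditionally fixed) coefficients — whereas the paper's self-contained derivation re-verifies the joint asymptotic Gaussianity and exhibits the explicit linearization of $\beta_{\rm TSLS}$, pieces that it later reuses in the summary-data result (Theorem~\ref{thm:sum_infer}). One small point worth noting: the paper states the $T$--$\beta$ relation just before the corollary with $\Sigma_{11}^*$ in the denominator, but the exact identity (and your proof) requires $\widehat{\Sigma}_{11}$, consistent with the definition of $T_{\rm TSLS}$ in equation~\eqref{eq:tsls_test}; your version is the accurate one, and asymptotically the two agree.
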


\begin{proof}[Proof of Lemma \ref{lem:tsls_asymp}]
Notice we have
\begin{align*}
T_{TSLS} (\beta_0) &= \frac{D^\intercal (P_Z - P_{Z_E}) (Y - D\beta_0)}{\sqrt{\hat{\Sigma}_{11}} \sqrt{D^\intercal (P_Z - P_{Z_E}) D}} \\
&= \frac{D^\intercal (P_Z - P_{Z_E}) \delta}{\sqrt{\hat{\Sigma}_{11}} \cdot \sqrt{D^\intercal (P_Z - P_{Z_E}) D}} \\
&= \frac{\frac{1}{\sqrt{n}}D^\intercal (P_Z - P_{Z_E}) \delta}{\sqrt{\hat{\Sigma}_{11}} \cdot \sqrt{\frac{1}{n} D^\intercal (P_Z - P_{Z_E}) D}}
\end{align*}
where the second line is by plugging in the true model.

Let us denote for simplification the quadratic terms
$$
\hat{A} = \frac{Z^\intercal Z}{n} , \quad \hat{A}_{E} = \frac{Z_E^\intercal Z_E}{n}, \quad \hat{B} = \frac{Z^\intercal D}{n}, \quad \hat{B}_{E} = \frac{Z_E^\intercal D}{n}
$$
and their asymptotic means all exist
$$
A = E(Z_i \cdot Z_i^\intercal), \quad A_E = E(Z_{i,E} \cdot Z_{i,E}^\intercal), \quad B = E(Z_i \cdot D), \quad B_E = E(Z_{i, E} \cdot D)
$$
we know that 
$$
\hat{A} = A + O_p (n^{-\frac{1}{2}}), \quad \hat{A}_E = A_E + O_p (n^{-\frac{1}{2}}), \quad \hat{B} = B + O_p (n^{-\frac{1}{2}}),\quad \hat{B}_E = B_E + O_p (n^{-\frac{1}{2}})
$$
Now the numerator
\begin{align*}
&\quad \frac{1}{\sqrt{n}}D^\intercal (P_Z - P_{Z_E}) \delta \\
&= \frac{1}{\sqrt{n}} (Z^\intercal D)^\intercal (Z^\intercal Z)^{-1} (Z^\intercal \delta) - \frac{1}{\sqrt{n}} (Z^\intercal_E D)^\intercal(Z_E^\intercal Z_E)^{-1} (Z_E \delta) \\
&= (Z^\intercal D)^\intercal(Z^\intercal Z)^{-1} \cdot \frac{1}{\sqrt{n}}(Z^\intercal \delta) - (Z_E^\intercal D)^\intercal (Z_E^\intercal Z_E)^{-1} \cdot \frac{1}{\sqrt{n}} (Z^\intercal_E \delta) \\
&= B^\intercal A^{-1} \cdot \frac{1}{\sqrt{n}}(Z^\intercal \delta) - B^\intercal A_E^{-1} \cdot \frac{1}{\sqrt{n}} (Z^\intercal_E \delta) + (\hat{B}^\intercal \hat{A}^{-1} - B^\intercal A^{-1})\cdot \frac{1}{\sqrt{n}}(Z^\intercal \delta) + (\hat{B}^\intercal \hat{A}_E^{-1} - B^\intercal A_E^{-1}) \cdot \frac{1}{\sqrt{n}} (Z^\intercal_E \delta) \\
& = B^\intercal A^{-1} \cdot \frac{1}{\sqrt{n}}(Z^\intercal \delta) - B^\intercal A_E^{-1} \cdot \frac{1}{\sqrt{n}} (Z^\intercal_E \delta) + O_p (n^{-\frac{1}{2}})
\end{align*}
where the last equation has used the fact that
$$
\hat{A}^{-1} = A^{-1} + O_p (n^{-\frac{1}{2}}), \quad \hat{A}^{-1}_E = A^{-1}_E + O_p (n^{-\frac{1}{2}})
$$
and that
$$
\hat{B}^\intercal \hat{A}^{-1}  = B^\intercal A^{-1} + O_p (n^{-\frac{1}{2}}), \quad \hat{B}^\intercal \hat{A}_E^{-1} =  B^\intercal A_E^{-1} + O_p (n^{-\frac{1}{2}})
$$
from delta method, and also 
$$
\frac{1}{\sqrt{n}} (Z^\intercal \delta) = O_p (1)
$$ 
from standard central limit theorem since $E(Z^\intercal \delta) = 0$ under the model assumption.

The denominator serves as the normalizing factor with similar argument
\begin{align*}
\sqrt{\hat{\Sigma}_{11}} \sqrt{\frac{1}{n} D^\intercal (P_Z - P_{Z_E}) D} &= (\sqrt{\Sigma^*_{11}} + o_p (1))\sqrt{\frac{1}{n} (Z^\intercal D)^\intercal (Z^\intercal Z)^{-1} (Z^\intercal D) - \frac{1}{n} (Z_E^\intercal D)^\intercal (Z_E^\intercal Z_E) (Z_E^\intercal D)} \\
&= (\sqrt{\Sigma^*_{11}} + o_p (1))\sqrt{\hat{B}^\intercal \hat{A}^{-1} \hat{B} - \hat{B}_E^\intercal \hat{A}_E^{-1} \hat{B}_E} \\
&= (\sqrt{\Sigma^*_{11}} + o_p (1)) \cdot (\sqrt{B^\intercal A^{-1} B - B_E^\intercal A_E^{-1} B_E} + o_p (1)) \\
&= \sqrt{\Sigma^*_{11}} \sqrt{B^\intercal A^{-1} B - B_E^\intercal A_E^{-1} B_E} + o_p (1)
\end{align*}
Therefore the test statistic is asymptotically
\begin{align*}
T_{TSLS}(\beta_0) &= \frac{B^\intercal A^{-1}}{ \sqrt{\Sigma^*_{11}} \sqrt{B^\intercal A^{-1} B - B_E^\intercal A_E^{-1} B_E}} \frac{1}{\sqrt{n}} \cdot (Z^\intercal \delta) - \frac{B^\intercal A_E^{-1}}{\sqrt{\Sigma^*_{11}} \sqrt{B^\intercal A^{-1} B - B_E^\intercal A_E^{-1} B_E}} \frac{1}{\sqrt{n}} \cdot (Z^\intercal_E \delta) + o_p (1)\\
&= \left[\frac{D^\intercal Z (Z^\intercal Z)^{-1}}{\sqrt{\hat{\Sigma}_{11}} \sqrt{D^\intercal (P_{Z} - P_{Z_{E}}) D}}\right]^{obs}\cdot (Z^\intercal \delta) - \left[\frac{D^\intercal Z_{E} (Z_{E}^\intercal Z_{E})^{-1}}{\sqrt{\hat{\Sigma}_{11}} \sqrt{D^\intercal (P_{Z} - P_{Z_{E}}) D}} \right]^{obs}\cdot(Z_E^\intercal \delta) + o_p (1) \\
\end{align*}
The second line follows from the same argument replacing the sampling variables $\hat{A}$ with the observed values $\left[\frac{Z^\intercal Z}{n}\right]^{obs}$, i.e. $\left[\frac{Z^\intercal Z}{n}\right]^{obs} = A + O_p (n^{-\frac{1}{2}})$, and with $\hat{A}_E, \hat{B}, \hat{B}_E$.
\end{proof}

\begin{proof}[Proof of Lemma \ref{lem:tsls_kkt_asymp}]
We start from the KKT condition
\begin{align*}
\omega &= \begin{pmatrix} Z^\intercal \\ D^\intercal \end{pmatrix} P_Z \begin{pmatrix} Z & D \end{pmatrix} \begin{pmatrix} \alpha \\ \beta\end{pmatrix} - \begin{pmatrix} Z^\intercal \\ D^\intercal \end{pmatrix} P_Z Y + \lambda \begin{pmatrix} u \\ 0 \end{pmatrix} + \epsilon \begin{pmatrix} \alpha \\ \beta \end{pmatrix}\\
&= \begin{pmatrix} Z^\intercal Z & Z^\intercal D \\ (Z^\intercal D)^\intercal & D^\intercal P_Z D \end{pmatrix} \begin{pmatrix} \alpha - \alpha^* \\ \beta - \beta^* \end{pmatrix} - \begin{pmatrix} Z^\intercal \delta \\ D^\intercal P_Z \delta \end{pmatrix} + \lambda \begin{pmatrix} u \\ 0 \end{pmatrix} + \epsilon \begin{pmatrix} \alpha \\ \beta \end{pmatrix} \\ 
&= n \begin{pmatrix}A & B \\ B^\intercal & B^\intercal A^{-1} B \end{pmatrix} \cdot \begin{pmatrix} \alpha - \alpha^* \\ \beta - \beta^* \end{pmatrix} - \begin{pmatrix} I \\ B^\intercal A^{-1} \end{pmatrix} \cdot Z^\intercal \delta +\lambda \begin{pmatrix} u \\ 0 \end{pmatrix} + \epsilon \begin{pmatrix} \alpha \\ \beta \end{pmatrix} \\
&+ \begin{pmatrix} Z^\intercal Z - nA & Z^\intercal D - nB \\ (Z^\intercal D)^\intercal - n B^\intercal& D^\intercal P_Z D - n B^\intercal A^{-1} B \end{pmatrix} \begin{pmatrix} \alpha - \alpha^* \\ \beta - \beta^* \end{pmatrix} - \begin{pmatrix}0 \\ (Z^\intercal D)^\intercal (Z^\intercal Z)^{-1}  - B^\intercal A^{-1}\end{pmatrix} \cdot Z^\intercal \delta \\
&= n \begin{pmatrix}A & B \\ B^\intercal & B^\intercal A^{-1} B \end{pmatrix} \cdot \begin{pmatrix} \alpha - \alpha^* \\ \beta - \beta^* \end{pmatrix} - \begin{pmatrix} I \\ B^\intercal A^{-1} \end{pmatrix} \cdot Z^\intercal \delta +\lambda \begin{pmatrix} u \\ 0 \end{pmatrix} + \epsilon \begin{pmatrix} \alpha \\ \beta \end{pmatrix} + O_p (1) - O_p (1) \\
&= n \begin{pmatrix}A & B \\ B^\intercal & B^\intercal A^{-1} B \end{pmatrix} \cdot \begin{pmatrix} \alpha - \alpha^* \\ \beta - \beta^* \end{pmatrix} - \begin{pmatrix} I \\ B^\intercal A^{-1} \end{pmatrix} \cdot Z^\intercal \delta +\lambda \begin{pmatrix} u \\ 0 \end{pmatrix} + \epsilon \begin{pmatrix} \alpha \\ \beta \end{pmatrix} + O_p (1)
\end{align*}
where the second line is by plugging in the model.

Now consider both sides of the second line. We choose the randomization with scale $\omega \sim N(0, n \cdot I)$, hence $\omega = O_p (n^{\frac{1}{2}})$. We know $\lambda \begin{pmatrix} u \\ 0 \end{pmatrix}$ and $\epsilon \begin{pmatrix}\alpha \\ \beta\end{pmatrix}$ are both of order $O_p (n^{\frac{1}{2}})$ from the way we set $\lambda$ and $\epsilon$ and that $\frac{1}{\sqrt{n}} Z^\intercal \delta = O(1)$. Comparing the density of both sides of the 2nd equation, we know it has to be true that $\sqrt{n}(\alpha - \alpha^*), \sqrt{n}(\beta - \beta^*) = O(1)$. 

Reorganizing terms gives the third line, and according to the asymptotic arguments above we get to the 4th line. 

The same argument replacing the sampling variables with the observed values gives us the desired result.
\end{proof}

\begin{proof}[Proof of Theorem \ref{thm:tsls_asymp}]

From lemma \ref{lem:tsls_kkt_asymp} denote 
\begin{align*}
T_{TSLS} &= \left[\frac{D^\intercal Z (Z^\intercal Z)^{-1}}{\sqrt{\hat{\Sigma}_{11}} \sqrt{D^\intercal (P_{Z} - P_{Z_{E}}) D}}\right]^{obs}\cdot (Z^\intercal \delta) - \left[\frac{D^\intercal Z_{E} (Z_{E}^\intercal Z_{E})^{-1}}{\sqrt{\hat{\Sigma}_{11}} \sqrt{D^\intercal (P_{Z} - P_{Z_{E}}) D}} \right]^{obs}\cdot(Z_E^\intercal \delta) + o_p (1) \\
&\stackrel{d}{=} C_1 \cdot Z^\intercal \delta - C_2 \cdot Z_E^\intercal \delta + o_p (1)
\end{align*}
and we know its pre-selection asymptotic distribution is gaussian with $\mu_T = 0, \Sigma_T = 1$. The data vector is the part $S$ in the KKT condition 
, i.e. 
\begin{align*}
S &= \begin{pmatrix} I \\ (Z^\intercal D)^\intercal (Z^\intercal Z)^{-1} \end{pmatrix}^{obs} \cdot Z^\intercal \delta + O_p (1) \\
&\stackrel{d}{=} C_3 \cdot Z^\intercal \delta + O_p (1)
\end{align*}
or that
$$
\frac{1}{\sqrt{n}} S = C_3 \cdot Z^\intercal \delta + o_p (1) 
$$
therefore we see that $(T_{TSLS}, \frac{1}{\sqrt{n}} S)$ is asymptotically jointly gaussian, and direct computation gives
\begin{align*}
\Sigma_{S,T} &= \text{Cov} (C_3 \cdot Z^\intercal \delta, C_1 \cdot Z^\intercal \delta - C_2 \cdot Z_E^\intercal \delta) \\
&= E \left([C_3 \cdot Z^\intercal \delta] \cdot [C_1 \cdot Z^\intercal \delta - C_2 \cdot Z_E^\intercal \delta]\right) - E(C_3 \cdot Z^\intercal \delta)\cdot E(C_1 \cdot Z^\intercal \delta - C_2 \cdot Z_E^\intercal \delta) \\
&= E\left([C_3 \cdot Z^\intercal \delta] \cdot [C_1 \cdot Z^\intercal \delta - C_2 \cdot Z_E^\intercal \delta] \mid Z\right) - E\left( (C_3 \cdot Z^\intercal \delta) \mid Z\right) \cdot E\left( (C_1 \cdot Z^\intercal \delta - C_2 \cdot Z_E^\intercal \delta) \mid Z\right) \\
&= \Sigma_{11} * C_3 \cdot C_1 \cdot E(Z^\intercal Z) - \Sigma_{11} * C_3 \cdot C_2 \cdot E(Z^\intercal Z_E) - 0 \\
&= \sqrt{\frac{\hat{\Sigma}_{11}}{D^\intercal (P_Z - P_{Z_E}) D}} \begin{pmatrix} Z^\intercal (P_Z - P_{Z_E}) D \\ D^\intercal (P_Z - P_{Z_E}) D\end{pmatrix}
\end{align*}
where for the second to last line with large sample size we can plug in $Z^\intercal Z$ and $Z^\intercal Z_E$ for the expectations and get rid of the difference because
\begin{align*}
Z^\intercal Z &= E(Z^\intercal Z) + O_p (\sqrt{n}) \\
Z^\intercal Z_E &= E(Z^\intercal Z_E) + O_p (\sqrt{n}) 
\end{align*}
and finally get to the last line.


Now start from lemma \ref{lem:tsls_kkt_asymp} and apply the linear decomposition technique with our $(T, S)$ to derive the asymptotic version of KKT condition
\begin{align*}
\omega &= \begin{pmatrix} Z^\intercal Z & Z^\intercal D \\ (Z^\intercal D)^\intercal & D^\intercal P_Z D \end{pmatrix}^{obs} \begin{pmatrix} \alpha - \alpha^* \\ \beta - \beta^* \end{pmatrix} - \begin{pmatrix} I \\ (Z^\intercal D)^\intercal (Z^\intercal Z)^{-1} \end{pmatrix}^{obs} \cdot Z^\intercal \delta + \lambda \begin{pmatrix} u \\ 0 \end{pmatrix} + \epsilon \begin{pmatrix} \alpha \\ \beta \end{pmatrix} \\
&= - \Sigma_{S,T}^{obs}\cdot T + \begin{pmatrix} Z^\intercal Z & Z^\intercal D \\ (Z^\intercal D)^\intercal & D^\intercal P_Z D \end{pmatrix}^{obs} \begin{pmatrix} \alpha - \alpha^* \\ \beta - \beta^* \end{pmatrix} + \lambda \begin{pmatrix} u \\ 0 \end{pmatrix} + \epsilon \begin{pmatrix} \alpha \\ \beta \end{pmatrix} - \left[\begin{pmatrix} I \\ D^\intercal Z (Z^\intercal Z)^{-1} \end{pmatrix} \cdot Z^\intercal \delta - \Sigma_{S,T}\cdot T\right]^{obs} \\
&= - \Sigma_{S,T}^{obs}\cdot T + \left[\begin{pmatrix}Z^\intercal Z & Z^\intercal D \\ D^\intercal Z & D^\intercal P_{Z} D\end{pmatrix}^{obs} + \epsilon \cdot I \right] \cdot \begin{pmatrix}\alpha \\ \beta \end{pmatrix} + \lambda \begin{pmatrix} u \\ 0 \end{pmatrix} - \left[\begin{pmatrix} I \\ D^\intercal Z (Z^\intercal Z)^{-1} \end{pmatrix} \cdot Z^\intercal Y - \Sigma_{S,T}\cdot T\right]^{obs}
\end{align*}
where in the last line to re-combine the terms with $\alpha^*$ and $\beta^*$ with the observed values $\delta$ which gives back the observed $Y$. This gives the reparametrization mapping $(T, \omega) \rightarrow (T, \alpha_E, \beta, u_{-E})$. Since it is linear and the coefficient matrices are fixed values and the Jacobian is constant. 

Combining the above steps and from standard change-of-variable formula we obtain the conditional sampling density of $(T, \alpha, \beta, u_{-E})$ as
\begin{align*}
&\ell_{\beta_0}(T, \alpha, \beta, u_{-E} \mid {\rm supp}(\hat{\alpha}) = E, \text{sign}(\hat{\alpha}_{E}) = s_E) \\
&\propto \phi_{0, 1} (T)\cdot g\left\{- \Sigma_{S,T}^{obs}\cdot T + \left[\begin{pmatrix}Z^\intercal Z & Z^\intercal D \\ D^\intercal Z & D^\intercal P_{Z} D\end{pmatrix}^{obs} + \epsilon \cdot I \right] \cdot \begin{pmatrix}\alpha \\ \beta \end{pmatrix} + \lambda \begin{pmatrix} u \\ 0 \end{pmatrix} - \left[\begin{pmatrix} I \\ D^\intercal Z (Z^\intercal Z)^{-1} \end{pmatrix} \cdot Z^\intercal Y - \Sigma_{S,T}\cdot T\right]^{obs} \right\} \\
\end{align*}

\end{proof}

\begin{proof}[Proof of Corollary \ref{thm:tsls_beta_asymp}]
The logic of the proof is similar to theorem \ref{thm:tsls_asymp}.

The TSLS estimator 
\begin{align*}
\beta_{\rm TSLS} &= \frac{D^\intercal (P_Z - P_{Z_E})Y}{D^\intercal (P_Z - P_{Z_E}) D} \\
&= \beta_0 + \frac{D^\intercal (P_Z - P_{Z_E})\delta}{D^\intercal (P_Z - P_{Z_E}) D} \\
&= \beta_0 + \left(\frac{D^\intercal Z (Z^\intercal Z)^{-1}}{D^\intercal (P_Z - P_{Z_E})D}\right) \cdot (Z^\intercal \delta) - \left(\frac{D^\intercal Z_E (Z_E^\intercal Z_E)^{-1}}{D^\intercal (P_Z - P_{Z_E}) D}\right) \cdot (Z_E^\intercal \delta) \\
&= \beta_0 + \left(\frac{\frac{D^\intercal Z}{\sqrt{n}} (\frac{Z^\intercal Z}{\sqrt{n}})^{-1}}{\frac{D^\intercal Z}{\sqrt{n}} (\frac{Z^\intercal Z}{\sqrt{n}})^{-1} \frac{Z^\intercal D}{\sqrt{n}}}\right) \cdot (\frac{Z^\intercal \delta}{\sqrt{n}}) - \left(\frac{\frac{D^\intercal Z_E}{\sqrt{n}} (\frac{Z_E^\intercal Z_E}{\sqrt{n}})^{-1}}{\frac{D^\intercal Z}{\sqrt{n}} (\frac{Z^\intercal Z}{\sqrt{n}})^{-1} \frac{Z^\intercal D}{\sqrt{n}}}\right)\cdot (\frac{Z_E^\intercal \delta}{\sqrt{n}}) \\
&= \beta_0 + \left\{\left(\frac{\frac{D^\intercal Z}{\sqrt{n}} (\frac{Z^\intercal Z}{\sqrt{n}})^{-1}}{\frac{D^\intercal Z}{\sqrt{n}} (\frac{Z^\intercal Z}{\sqrt{n}})^{-1} \frac{Z^\intercal D}{\sqrt{n}}}\right)^{obs} + O_p (\frac{1}{\sqrt{n}}) \right\}\cdot (\frac{Z^\intercal \delta}{\sqrt{n}}) - \left\{\left(\frac{\frac{D^\intercal Z_E}{\sqrt{n}} (\frac{Z_E^\intercal Z_E}{\sqrt{n}})^{-1}}{\frac{D^\intercal Z}{\sqrt{n}} (\frac{Z^\intercal Z}{\sqrt{n}})^{-1} \frac{Z^\intercal D}{\sqrt{n}}}\right)^{obs} + O_p (\frac{1}{\sqrt{n}}) \right\}\cdot (\frac{Z_E^\intercal \delta}{\sqrt{n}}) \\
&\approx \beta_0 + \left(\frac{D^\intercal Z (Z^\intercal Z)^{-1}}{D^\intercal (P_Z - P_{Z_E})D}\right)^{obs} \cdot (Z^\intercal \delta) - \left(\frac{D^\intercal Z_E (Z_E^\intercal Z_E)^{-1}}{D^\intercal (P_Z - P_{Z_E}) D}\right)^{obs}\cdot (Z_E^\intercal \delta) \\
&\stackrel{d}{=} \beta_0 + C_1 \cdot (Z^\intercal \delta) - C_2 \cdot (Z_E^\intercal \delta)
\end{align*}
where the asymptotic argument to plug in the observed values is the same as the proof for lemma \ref{thm:tsls_asymp} without $\hat{\Sigma}_{11}$ in the denominators of the terms.

From this last line and the fact that 
\begin{align*}
\frac{Z^\intercal \delta}{\sqrt{n}} &\rightarrow N(0, \Sigma^*_{11}\cdot E(Z_i^\intercal Z_i)) \\
\frac{Z^\intercal_E \delta}{\sqrt{n}} &\rightarrow N(0, \Sigma^*_{11}\cdot E(Z^\intercal_{i,E} Z_{i,E})) \\
Cov(\frac{Z^\intercal \delta}{\sqrt{n}}, \frac{Z^\intercal_E \delta}{\sqrt{n}}) &= \Sigma^*_{11} \cdot E(Z_i^\intercal Z_{i,E}) 
\end{align*}
we can derive that the pre-selection asymptotic distribution of $\beta_{\rm TSLS}$ is gaussian with mean 
$$
\mu_T = \beta_0
$$ 
and approximate variance 
$$
\Sigma_T = \frac{\Sigma^*_{11}}{D^\intercal (P_Z - P_{Z_E}) D}
$$

With the same KKT condition as in lemma \ref{lem:tsls_kkt_asymp}
$$
\omega = \begin{pmatrix} Z^\intercal Z & Z^\intercal D \\ (Z^\intercal D)^\intercal & D^\intercal P_Z D \end{pmatrix}^{obs} \begin{pmatrix} \alpha - \alpha^* \\ \beta - \beta^* \end{pmatrix} - \begin{pmatrix} I \\ (Z^\intercal D)^\intercal (Z^\intercal Z)^{-1} \end{pmatrix}^{obs} \cdot Z^\intercal \delta + \lambda \begin{pmatrix} u \\ 0 \end{pmatrix} + \epsilon \begin{pmatrix} \alpha \\ \beta \end{pmatrix}
$$
and the same data vector 
\begin{align*}
S &= \begin{pmatrix} I \\ (Z^\intercal D)^\intercal (Z^\intercal Z)^{-1} \end{pmatrix}^{obs} \cdot Z^\intercal \delta \\
&\stackrel{d}{=} C_3 \cdot Z^\intercal \delta
\end{align*}

and similar as in the proof of theorem \ref{thm:tsls_asymp} it can be seen that $(\beta_{TSLS}, \frac{1}{\sqrt{n}} S)$ is asymptotically jointly gaussian and to derive
\begin{align*}
\Sigma_{S,T} &= \text{Cov} (C_3 \cdot Z^\intercal \delta, \beta_0 + C_1 \cdot Z^\intercal \delta - C_2 \cdot Z_E^\intercal \delta) \\
&= E \left([C_3 \cdot Z^\intercal \delta] \cdot [C_1 \cdot Z^\intercal \delta - C_2 \cdot Z_E^\intercal \delta]\right) - E(C_3 \cdot Z^\intercal \delta)\cdot E(C_1 \cdot Z^\intercal \delta - C_2 \cdot Z_E^\intercal \delta) \\
&= E\left([C_3 \cdot Z^\intercal \delta] \cdot [C_1 \cdot Z^\intercal \delta - C_2 \cdot Z_E^\intercal \delta] \mid Z\right) - E\left( (C_3 \cdot Z^\intercal \delta) \mid Z\right) \cdot E\left( (C_1 \cdot Z^\intercal \delta - C_2 \cdot Z_E^\intercal \delta) \mid Z\right) \\
&= \Sigma_{11} * C_3 \cdot C_1 \cdot E(Z^\intercal Z) - \Sigma_{11} * C_3 \cdot C_2 \cdot E(Z^\intercal Z_E) - 0 \\
&= \frac{\Sigma^*_{11}}{D^\intercal (P_Z - P_{Z_E}) D}\begin{pmatrix}Z^\intercal (P_Z - P_{Z_E}) D \\ D^\intercal (P_Z - P_{Z_E}) D\end{pmatrix}
\end{align*}

Again with linear decomposition we will get to a linear reparametrization mapping $(\beta_{\rm TSLS}, \omega) \rightarrow (\beta_{\rm TSLS}, \alpha_E, \beta, u_{-E})$ with constant coefficient matrices and hence a constant Jacobian:
\begin{align*}
\omega &= \begin{pmatrix} Z^\intercal Z & Z^\intercal D \\ (Z^\intercal D)^\intercal & D^\intercal P_Z D \end{pmatrix}^{obs} \begin{pmatrix} \alpha - \alpha^* \\ \beta - \beta^* \end{pmatrix} - \begin{pmatrix} I \\ (Z^\intercal D)^\intercal (Z^\intercal Z)^{-1} \end{pmatrix}^{obs} \cdot Z^\intercal \delta + \lambda \begin{pmatrix} u \\ 0 \end{pmatrix} + \epsilon \begin{pmatrix} \alpha \\ \beta \end{pmatrix} \\
&= - \Sigma_{S,T}^{obs}\Sigma_T^{-1,obs}\cdot \beta_{\rm TSLS} + \left[\begin{pmatrix}Z^\intercal Z & Z^\intercal D \\ D^\intercal Z & D^\intercal P_{Z} D\end{pmatrix}^{obs} + \epsilon \cdot I \right] \cdot \begin{pmatrix}\alpha \\ \beta \end{pmatrix} + \lambda \begin{pmatrix} u \\ 0 \end{pmatrix} \\
&- \left[\begin{pmatrix} I \\ D^\intercal Z (Z^\intercal Z)^{-1} \end{pmatrix} \cdot Z^\intercal Y - \Sigma_{S,T}\cdot \beta_{\rm TSLS}\right]^{obs}
\end{align*}

Finally we obtain the conditional sampling density as

\begin{equation}
\begin{split}
&\ell_{\beta_0} (\beta_{\rm TSLS}, \omega \mid {\rm supp}(\alpha) = E, \text{sign}(\alpha_{E}) = \widehat{s}_E, F) \propto \phi_{\left(\beta_0, \Sigma_T^{obs}\right)} (\beta_{\rm TSLS}) \\
& \cdot g\left\{-\Sigma_{S,T}^{obs} \cdot \Sigma_T^{-1, obs} \cdot \beta_{\rm TSLS} + \left[\begin{pmatrix}Z^\intercal Z & Z^\intercal D \\ D^\intercal Z & D^\intercal P_Z D\end{pmatrix} + \epsilon\cdot I\right]^{obs}  \begin{pmatrix}\alpha \\ \beta\end{pmatrix} + \lambda \begin{pmatrix} u \\ 0 \end{pmatrix} - \left[\begin{pmatrix}Z^\intercal Y \\ D^\intercal P_Z Y\end{pmatrix}-\Sigma_{S,T}\Sigma_T^{-1, obs}\widehat{\beta}_{\rm TSLS}\right]^{obs} \right\} \\
& \cdot |\mathcal{J}|^{obs} \cdot \mathbb{I}(\mathcal{B}) 
\end{split}
\end{equation}

\end{proof}

\subsection{Asymptotic Conditional Density with AR Test Statistic}

\begin{corollary}[Asymptotic conditional distribution of sampling target] \label{thm:ar_asymp}
The asymptotic conditional density of $\widetilde{T}$ under the null hypothesis $H_0: \beta^* = \beta_0$ can be expressed (up to a proportionality constant) with respect to the variables $\widetilde{T}, \beta, \alpha_E, u_{-E}$:
\begin{equation} \label{eq:ar_cden}
\begin{split}
&\ell_{\beta_0} (\widetilde{T}, \omega \mid {\rm supp}(\alpha) = E, \text{sign}(\alpha_{E}) = \widehat{s}_E, F) \propto \phi_{\left(0, \Sigma_{T}^{obs} \right)} (\widetilde{T}) \\
& \cdot g\left\{- \Sigma_{S, T}^{obs} \Sigma_{T}^{-1,obs} \cdot \widetilde{T} + \left[\begin{pmatrix}Z^\intercal Z & Z^\intercal D \\ D^\intercal Z & D^\intercal P_Z D\end{pmatrix} + \epsilon\cdot I\right]^{obs}  \begin{pmatrix}\alpha \\ \beta\end{pmatrix} + \lambda \begin{pmatrix} u \\ 0 \end{pmatrix} - \left[\begin{pmatrix}Z^\intercal Y \\ D^\intercal P_Z Y\end{pmatrix}-\Sigma_{S, T} \Sigma_{T}^{-1}\cdot \widetilde{T}\right]^{obs} \right\} \\
& \cdot  |\mathcal{J}|^{obs} \cdot \mathbb{I}(\mathcal{B}) 
\end{split}
\end{equation}
where
\begin{align*}
\Sigma_{T} &= \hat{\Sigma}_{11} \cdot Z^\intercal (I - P_{Z_E}) Z, \quad{} \Sigma_{S,T} = \hat{\Sigma}_{11} \cdot \begin{pmatrix}Z^\intercal (I - P_{Z_E}) Z \\ D^\intercal (I - P_{Z_E}) Z\end{pmatrix}
\end{align*}
\end{corollary}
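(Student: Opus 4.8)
The plan is to mirror, almost verbatim, the three-step derivation used for the TSLS statistic in Theorem~\ref{thm:tsls_asymp}: (i) expand the sampling target $\widetilde{T}$ as a fixed linear image of the noise so that it, together with the score $S = \begin{pmatrix} Z^\intercal Y \\ D^\intercal P_Z Y \end{pmatrix}$, is jointly asymptotically Gaussian; (ii) invoke the selective CLT decomposition $S = F + \Sigma_{S,T}\Sigma_T^{-1}\widetilde{T}$ from \eqref{eq:pre_CLT}--\eqref{eq:asymp_density}; and (iii) substitute this into the KKT reparametrization of Lemma~\ref{lem:tsls_kkt_asymp}, apply the change-of-variables formula, and condition on $F$ to clear the nuisance parameters. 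Only the algebra feeding $\Sigma_T$ and $\Sigma_{S,T}$ is new; everything after the covariance computation is structurally identical to the TSLS proof.

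First I would plug the model into $\widetilde{T} = Z^\intercal(I - P_{Z_E})(Y - D\beta_0)$ under $H_0:\beta^*=\beta_0$. Writing $Y - D\beta_0 = Z\alpha^* + \delta$ and using the standing assumption $E^* \subseteq E$, the signal $Z\alpha^* = Z_E \alpha_E^*$ lies in the column space of $Z_E$, so $(I - P_{Z_E})Z\alpha^* = 0$ and hence $\widetilde{T} = Z^\intercal(I - P_{Z_E})\delta$ exactly. This is the key simplification: $\widetilde{T}$ is mean-zero given $Z$ and a fixed linear functional of $\delta$, so the ordinary CLT gives that it is asymptotically Gaussian with mean $0$ and covariance $\Sigma_T = \Sigma_{11}^*\,Z^\intercal(I - P_{Z_E})Z$ (using idempotency of $I-P_{Z_E}$), which $\widehat\Sigma_{11}Z^\intercal(I-P_{Z_E})Z$ estimates consistently.

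Next I would reuse the score expansion $S \approx \begin{pmatrix} I \\ D^\intercal Z(Z^\intercal Z)^{-1}\end{pmatrix}^{obs} Z^\intercal \delta$ established in Lemma~\ref{lem:tsls_kkt_asymp}; since both $\widetilde{T}$ and $S$ are linear in $\delta$, the pair $(\widetilde{T}, \tfrac{1}{\sqrt n}S)$ is jointly asymptotically Gaussian, and a conditional-on-$Z$ covariance computation gives $\Sigma_{S,T} = \Sigma_{11}^*\begin{pmatrix} I \\ D^\intercal Z(Z^\intercal Z)^{-1}\end{pmatrix}Z^\intercal(I - P_{Z_E})Z$. The only identity needed to match the stated form is the projection simplification $D^\intercal P_Z(I-P_{Z_E})Z = D^\intercal(I-P_{Z_E})Z$, which follows from $P_Z Z = Z$ and $P_Z P_{Z_E} = P_{Z_E}$ (because the columns of $Z_E$ sit inside the column space of $Z$). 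With $\Sigma_T$ and $\Sigma_{S,T}$ in hand, the selective CLT lets me replace the $Z^\intercal\delta$ contribution in the KKT map by $-\Sigma_{S,T}\Sigma_T^{-1}\widetilde{T}$, recombine the $\alpha^*,\beta^*$ terms with $\delta$ into the observed $Y$ through $F$, and read off the constant Jacobian from the linearity of $(\widetilde{T},\omega)\mapsto(\widetilde{T},\beta,\alpha_E,u_{-E})$; the change-of-variables formula then produces \eqref{eq:ar_cden}.

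The step I expect to require the most care is the rank-deficiency of $\Sigma_T$. Because $(I - P_{Z_E})Z_E = 0$, the coordinates $\widetilde{T}_E$ vanish identically, so $\widetilde{T}$ is supported on an $(L-|E|)$-dimensional subspace and $\Sigma_T = \Sigma_{11}^* Z^\intercal(I-P_{Z_E})Z$ is singular. I would therefore interpret $\phi_{(0,\Sigma_T)}(\widetilde{T})$ as the degenerate Gaussian law on that subspace and, to make $\Sigma_{S,T}\Sigma_T^{-1}\widetilde{T}$ and the conditioning on $F$ legitimate, restrict attention to the $-E$ block: $\widetilde{T}_{-E} = Z_{-E}^\intercal(I-P_{Z_E})\delta$ has the full-rank covariance $\Sigma_{11}^* Z_{-E}^\intercal(I-P_{Z_E})Z_{-E}$ (since residualizing $Z_{-E}$ against $Z_E$ preserves its $L-|E|$ linearly independent columns), on which $\Sigma_T^{-1}$ is a genuine inverse. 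This is consistent with only $\widetilde{T}_{-E}$ entering $T_{\rm AR}$ through \eqref{eq:ar_tilde} and with the numerator degrees of freedom $L - |E|$.
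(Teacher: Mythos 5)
Your proposal follows the paper's proof essentially step for step: plug the null model into $\widetilde{T}$ and use $E^* \subseteq E$ to reduce it to $Z^\intercal(I-P_{Z_E})\delta$, establish joint asymptotic Gaussianity of $(\widetilde{T}, \tfrac{1}{\sqrt{n}}S)$ and compute $\Sigma_T$ and $\Sigma_{S,T}$, then reuse the KKT reparametrization of Lemma~\ref{lem:tsls_kkt_asymp} with the selective CLT decomposition, constant Jacobian, and change of variables. Your additional handling of the singularity of $\Sigma_T$ (noting $\widetilde{T}_E \equiv 0$, so the Gaussian is degenerate and $\Sigma_T^{-1}$ must be read on the $-E$ block) is a point of care that the paper's own proof silently glosses over, but it refines rather than alters the argument.
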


\begin{remark}[Sampling for AR statistic] \label{rmk:ar_sample}
Given samples of the above sampling target $\{\widetilde{T}_i\}_{i = 1,\ldots,N}$, we can obtain the samples of the Anderson-Rubin test statistic $\{T_{AR, i}\}_{i=1,\ldots,N}$ of the conditional density via 
\begin{equation} \label{eq:ar_target_plugin}
T_{AR} = \frac{\widetilde{T}^\intercal \cdot (Z^\intercal Z)^{-1} \cdot \widetilde{T} / (p-|E|)}{\left[(Y - D \beta_0)^\intercal \cdot (I - P_Z)\cdot (Y-D\beta_0)\right]^{obs} / (n-p)}
\end{equation}
\end{remark}

\begin{proof}[Proof of Corollary \ref{thm:ar_asymp}]

The sampling target is
\begin{align*}
\widetilde{T}_{1\times p} &= Z^\intercal (I - P_{Z_E}) (Y - D\beta_0) \\ 
&= Z^\intercal (I - P_{Z_E}) \delta \\
&= Z^\intercal \delta - (Z^\intercal Z_E) (Z^\intercal_E Z_E)^{-1} Z^\intercal_E \delta \\
&= \sqrt{n} \left\{ \frac{Z^\intercal \delta}{\sqrt{n}} - (\frac{Z^\intercal Z_E}{\sqrt{n}}) (\frac{Z_E^\intercal Z_E}{\sqrt{n}})^{-1} \frac{Z^\intercal_E \delta}{\sqrt{n}}\right\} \\
&= \sqrt{n} \left\{ \frac{Z^\intercal \delta}{\sqrt{n}} - \left[\left((\frac{Z^\intercal Z_E}{\sqrt{n}}) (\frac{Z_E^\intercal Z_E}{\sqrt{n}})^{-1}\right)^{obs} + O_p (1) \right] \frac{Z^\intercal_E \delta}{\sqrt{n}}\right\} \\
&\approx Z^\intercal \delta - \left((Z^\intercal Z_E) (Z^\intercal_E Z_E)^{-1}\right)^{obs} \cdot Z^\intercal_E \delta \\
&\stackrel{d}{=} C_1 (Z^\intercal \delta) - C_2 (Z_E^\intercal \delta)
\end{align*}
where again with the same asymptotic argument as in the proof for lemma \ref{lem:tsls_asymp} we plug in the observed values and treat the coefficient $C_2$ as constant. 

From this last line and the fact that
\begin{align*}
\frac{Z^\intercal \delta}{\sqrt{n}} &\rightarrow N(0, \Sigma^*_{11}\cdot E(Z_i^\intercal Z_i)) \\
\frac{Z^\intercal_E \delta}{\sqrt{n}} &\rightarrow N(0, \Sigma^*_{11}\cdot E(Z^\intercal_{i,E} Z_{i,E})) \\
Cov(\frac{Z^\intercal \delta}{\sqrt{n}}, \frac{Z^\intercal_E \delta}{\sqrt{n}}) &= \Sigma^*_{11} \cdot E(Z_i^\intercal Z_{i,E}) 
\end{align*}
we derive the pre-selection asymptotic distribution of $\widetilde{T}$ is gaussian with mean
$$
\mu_T = 0
$$
and approximate variance
$$
\Sigma_T = \Sigma_{11}^* \cdot Z^\intercal (I - P_{Z_E}) Z
$$

With the same KKT condition as in lemma \ref{lem:tsls_kkt_asymp}
$$
\omega = \begin{pmatrix} Z^\intercal Z & Z^\intercal D \\ (Z^\intercal D)^\intercal & D^\intercal P_Z D \end{pmatrix}^{obs} \begin{pmatrix} \alpha - \alpha^* \\ \beta - \beta^* \end{pmatrix} - \begin{pmatrix} I \\ (Z^\intercal D)^\intercal (Z^\intercal Z)^{-1} \end{pmatrix}^{obs} \cdot Z^\intercal \delta + \lambda \begin{pmatrix} u \\ 0 \end{pmatrix} + \epsilon \begin{pmatrix} \alpha \\ \beta \end{pmatrix}
$$
and the same data vector 
\begin{align*}
S &= \begin{pmatrix} I \\ (Z^\intercal D)^\intercal (Z^\intercal Z)^{-1} \end{pmatrix}^{obs} \cdot Z^\intercal \delta \\
&\stackrel{d}{=} C_3 \cdot Z^\intercal \delta
\end{align*}
and similar as in the proof of theorem \ref{thm:tsls_asymp} we see that $(\widetilde{T}, \frac{1}{\sqrt{n}}S)$ is asymptotically jointly gaussian and to derive the covariance
$$
\Sigma_{S,T} = \Sigma_{11}^* \cdot \begin{pmatrix}Z^\intercal (I - P_{Z_E}) Z \\ D^\intercal (I - P_{Z_E}) Z\end{pmatrix}
$$
With linear decomposition we obtain a linear reparametrization mapping $(\widetilde{T}, \omega) \rightarrow (\widetilde{T}, \alpha_E, \beta, u_{-E})$ with constant coefficient matrices and hence a constant Jacobian term. Eventually the conditional sampling density of the sampling target $\widetilde{T}$ is
\begin{align*}
&\ell_{\beta_0} (\widetilde{T}, \omega \mid {\rm supp}(\alpha) = E, \text{sign}(\alpha_{E}) = \widehat{s}_E, F) \propto \phi_{\left[0, \widehat{\Sigma}_{11} \cdot Z^\intercal (I - P_{Z_E}) Z \right]} (\widetilde{T}) \\
&\cdot g\left\{- \Sigma_{S, T}^{obs} \Sigma_{T}^{-1,obs} \widetilde{T} + \begin{pmatrix}Z^\intercal Z & Z^\intercal D \\ D^\intercal Z & D^\intercal P_Z D \end{pmatrix}^{obs}\begin{pmatrix}\alpha \\ \beta \end{pmatrix} - \left[\begin{pmatrix} I \\ (Z^\intercal D)^\intercal (Z^\intercal Z)^{-1} \end{pmatrix} \cdot Z^\intercal Y - \Sigma_{S, T}\Sigma_{T}^{-1} \widetilde{T}\right]^{obs} + \lambda \begin{pmatrix} u \\ 0 \end{pmatrix} + \epsilon \begin{pmatrix}\alpha \\ \beta \end{pmatrix} \right\}
\end{align*}

\end{proof}

\begin{proof}[Proof of Remark \ref{rmk:ar_sample}]
\begin{align*}
T_{AR}(\beta_0) &=\frac{(Y - D\beta_0)^\intercal (P_Z - P_{Z_E}) (Y - D\beta_0) / (p - |E|)}{(Y - D\beta_0)^\intercal P_{Z^\perp} (Y - D\beta_0)/(n - p)} \\
&\approx \frac{(Y - D\beta_0)^\intercal (P_Z - P_{Z_E}) (Y - D\beta_0) / (p - |E|)}{\left((Y - D\beta_0)^\intercal P_{Z^\perp} (Y - D\beta_0)\right)^{obs} / (n - p)} \\
&= \frac{\widetilde{T}^\intercal \cdot (Z^\intercal Z)^{-1} \cdot \widetilde{T} / (p-|E|)}{\left((Y - D\beta_0)^\intercal P_{Z^\perp} (Y - D\beta_0)\right)^{obs} / (n - p)}
\end{align*}
where in the second line we used the fact that 
\begin{align*}
(Y - D\beta_0)^\intercal P_{Z^\perp} (Y - D\beta_0)/(n - p) &= \left( E\left[(Y - D\beta_0)^\intercal P_{Z^\perp} (Y - D\beta_0)\right] + O_p (\sqrt{n}) \right)/(n - p) \\
\left((Y - D\beta_0)^\intercal P_{Z^\perp} (Y - D\beta_0)\right)^{obs} / (n - p) &= \left( E\left[(Y - D\beta_0)^\intercal P_{Z^\perp} (Y - D\beta_0)\right] + O_p (\sqrt{n}) \right)/(n - p)
\end{align*}
and hence we can essentially plug in the observed value for the denominator.
\end{proof}

\section{Sampling Strategy for Confidence Intervals}

\subsection{TSLS Statistic}

We first sample $(t_i, \alpha_i, \beta_i, u_{-E, i})_{i=1}^{N}$ at the reference value $\beta^{r} = \widehat{\beta}_{TSLS}$, and then to estimate the pivot at an arbitrary $\beta_0$ via importance sampling:
\begin{align*}
\text{pivot} (\beta_0) = \frac{\sum_{i=1}^{N} \mathbb{I} [t_i \leq T_{\beta_0}^{obs}]\cdot w(t_i, \alpha_i, \beta_i, u_{-E,i})}{\sum_{i=1}^{N} w(t_i, \alpha_i, \beta_i, u_{-E,i})}
\end{align*}
where
\begin{align*}
w(t_i,\alpha_i,\beta_i,u_{-E,i}) &= \frac{\ell_{\beta_0}(t_i, \alpha_i, \beta_i, u_{-E,i})}{\ell_{\beta^{r}}(t_i, \alpha_i, \beta_i, u_{-E,i})} \\
&= \frac{g(t_i, \alpha_i, \beta_i, u_{-E,i}, \beta_0)}{g(t_i, \alpha_i, \beta_i, u_{-E,i}, \beta^{r})}
\end{align*}
We use binary search to find the values $\beta_{U}$ and $\beta_{L}$ such that $\text{pivot} (\beta_U) =\alpha / 2$ and $\text{pivot} (\beta_{L}) = 1 - \alpha/2$ and the confidence interval will be $(\beta_L, \beta_U)$.

\subsection{TSLS Estimator}
Same importance sampling approach as above but we use $\beta_{TSLS}$ as the sampling variable.
We first sample $(t_i, \alpha_i, \beta_i, u_{-E, i})_{i=1}^{N}$ at the reference value (the observed) $\beta^{r} = \widehat{\beta}_{TSLS}$, and then to estimate the pivot at an arbitrary $\beta_0$ via importance sampling:
\begin{align*}
\text{pivot} (\beta_0) = \frac{\sum_{i=1}^{N} \mathbb{I} [t_i \leq T_{\beta_0}^{obs}]\cdot w(t_i, \alpha_i, \beta_i, u_{-E,i})}{\sum_{i=1}^{N} w(t_i, \alpha_i, \beta_i, u_{-E,i})}
\end{align*}
where
\begin{align*}
w(t_i,\alpha_i,\beta_i,u_{-E,i}) &= \frac{\ell_{\beta_0}(t_i, \alpha_i, \beta_i, u_{-E,i})}{\ell_{\beta^{r}}(t_i, \alpha_i, \beta_i, u_{-E,i})} \\
&= \frac{\phi_{\beta_0, \frac{\widehat{\Sigma}_{11}}{D^\intercal (P_Z - P_{Z_E})D}}(t_i)}{\phi_{\beta^{r},\frac{\widehat{\Sigma}_{11}}{D^\intercal (P_Z - P_{Z_E})D}}(t_i)}
\end{align*}
Again we use binary search to find the values $\beta_{U}$ and $\beta_{L}$ such that $\text{pivot} (\beta_U) =\alpha / 2$ and $\text{pivot} (\beta_{L}) = 1 -\alpha/2$ and the confidence interval will be $(\beta_L, \beta_U)$.


\section{Additional Simulation Results}

\subsection{Empirical CDF of the Conditional P-values}
We use the same data generating model as in the main manuscript and plot the empirical CDF of the conditional p-values under $H_0: \beta^* = 1$ to compare with the uniform distribution. Figure \ref{fig:tsls:null1} and \ref{fig:tsls:null2} show the result for TSLS estimator with level of endogeneity, $\Sigma^*_{12} = 0.3, 0.99$, or $\frac{\Sigma^*_{12}}{\Sigma^*_{11}\Sigma^*_{22}} = 0.3, 0.99$ since we fix $\Sigma^*_{11} = \Sigma^*_{22} = 1.$. The magnitude of $\gamma^*$ is varied by the value $r$ where $\gamma_j^* = r$ for every $j$ and $r$ ranges from $0.04$ to $2.5$.  

Figure \ref{fig:ar:null} shows the result for the AR statistic with $\Sigma^*_{12} = 0.8$, and figure \ref{fig:ar:null1} and \ref{fig:ar:null2} are with $\Sigma^*_{12} = 0.3, 0.99$.

\subsection{Confidence intervals comparison}
We use the same data generating model as before with $\rho=0.8$.

For TSLS test statistic we plot the empirical coverage rates and the average lengths of our proposed conditional interval and the naive confidence interval as a function of $r$. In Figure \ref{fig:known_0_8_gvalid_0_05}, we vary $\gamma^*$ associated with valid and invalid IVs differently, with $\gamma^*=0.05$ for valid IVs and the ratio between $\gamma^*$ for invalid over valid IVs as along the x axis. The next figure \ref{fig:known_0_8_gvalid_0_05_p100} are coverage and average lengths for the same setup but with higher dimensions $p=100, s=30$. We can see the selection effect is larger. The naive coverages are much worse. 

For the AR statistic, in Figure \ref{fig:ar:003}, we fix $\gamma^*=0.03$ for valid IVs and vary the ratio between $\gamma^*$ for invalid over valid IVs as along the x axis. in Figure \ref{fig:ar:004}, we fix $\gamma^*=0.04$ for valid IVs and vary the ratio between $\gamma^*$ for invalid over valid IVs as along the x axis. We can see again that the conditional coverage is about the guaranteed nominal level $0.95$ while naive coverage falls far below.

\begin{figure}[!ht]  
  \begin{subfigure}[b]{0.5\textwidth}
    \includegraphics[width=\textwidth]{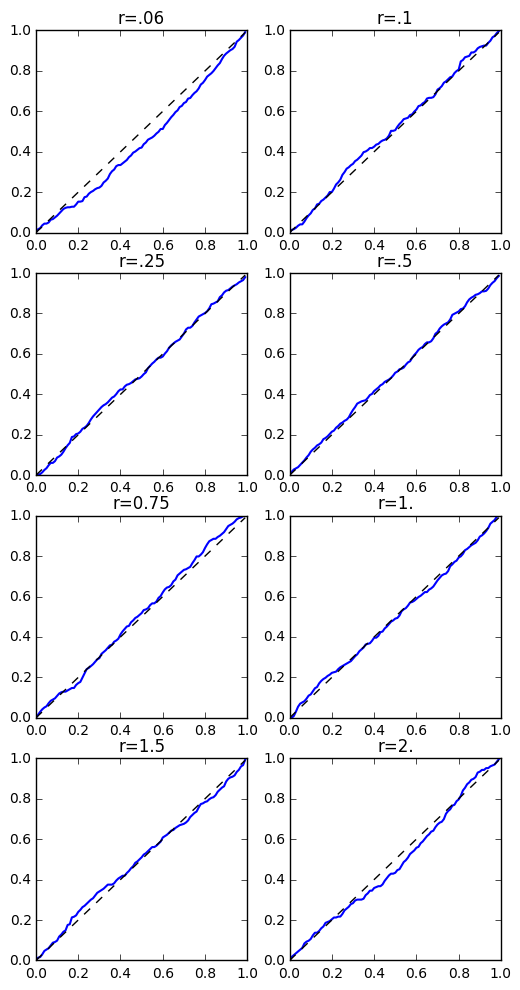}
    \caption{$\rho = 0.3$}
  \label{fig:tsls:null1}
  \end{subfigure}
  \begin{subfigure}[b]{0.5\textwidth}
    \includegraphics[width=\textwidth]{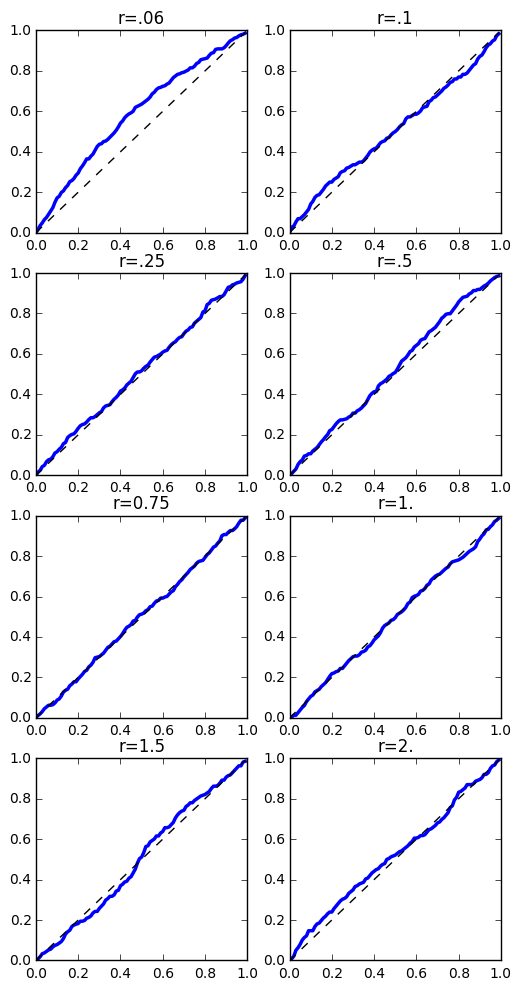}
    \caption{$\rho=0.99$}
  \label{fig:tsls:null2}
  \end{subfigure}
  \caption{TSLS: Empirical CDFs of the conditional pvalues with $\rho=0.3, 0.99$}
  \end{figure}
  
\begin{figure}[!ht]
\begin{center}
\centerline{\includegraphics[width=1.\columnwidth]{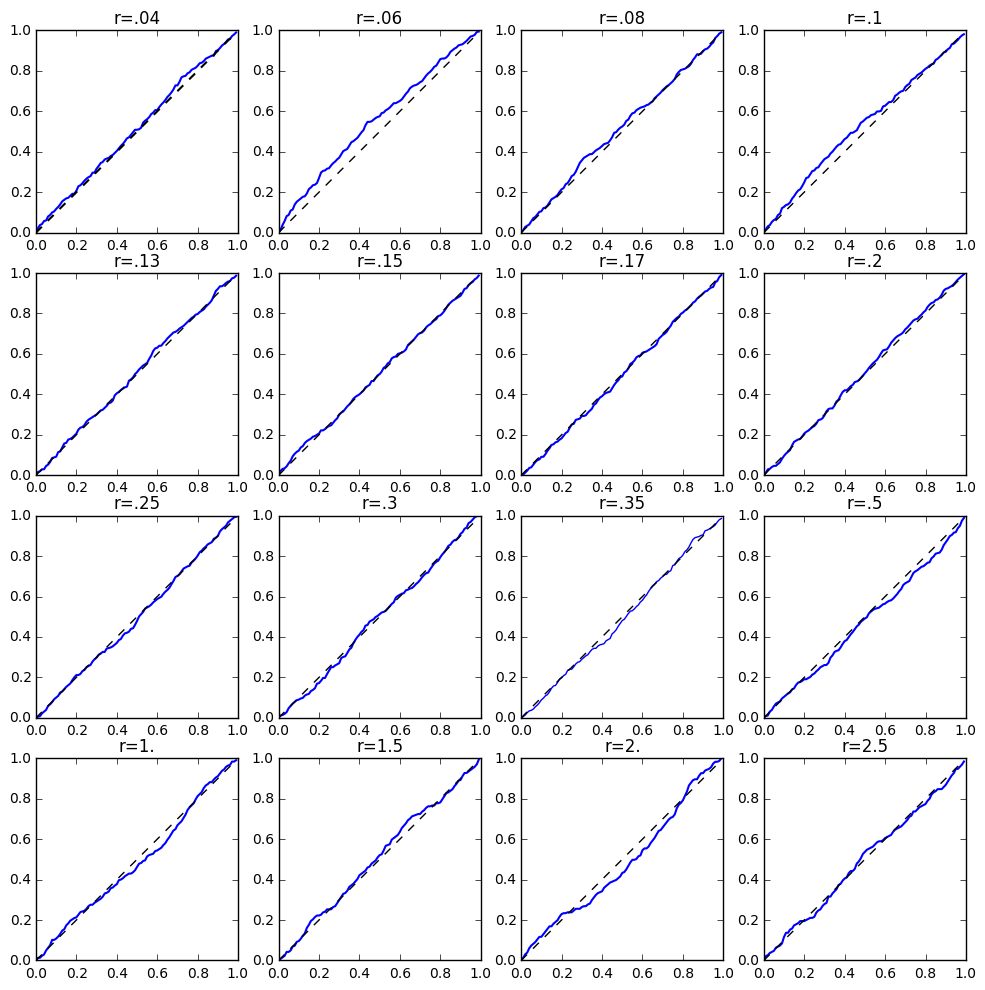}}
\caption{Comparison between the CDF of the empirical distribution of the conditional p-values using AR statistic and the uniform distribution. $r$ represents instrument strength as measured by setting $\gamma_j^* = r$ for all $j$.}
\label{fig:ar:null}
\end{center}
\end{figure} 
  
\begin{figure}[!ht]  
  \begin{subfigure}[b]{0.5\textwidth}
    \includegraphics[width=\textwidth]{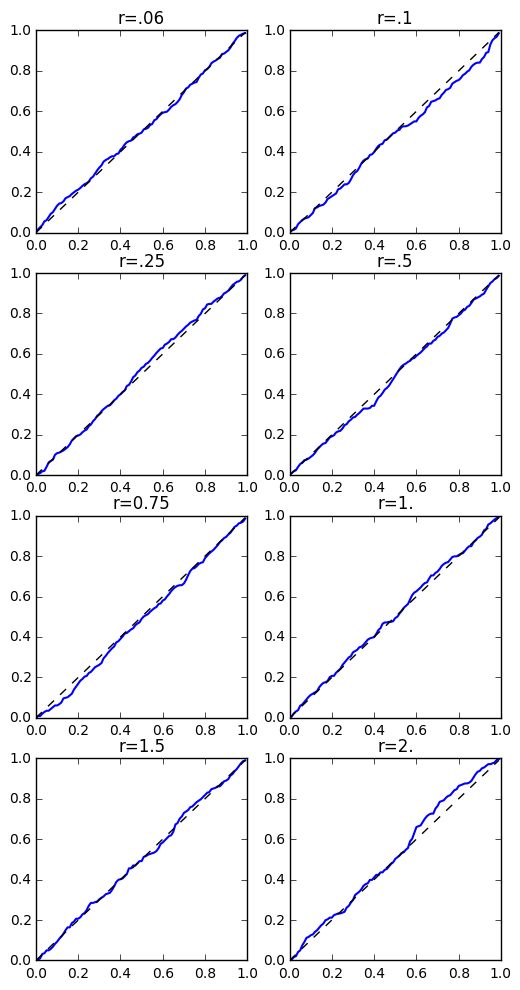}
    \caption{$\rho = 0.3$}
  \label{fig:ar:null1}
  \end{subfigure}
  \begin{subfigure}[b]{0.5\textwidth}
    \includegraphics[width=\textwidth]{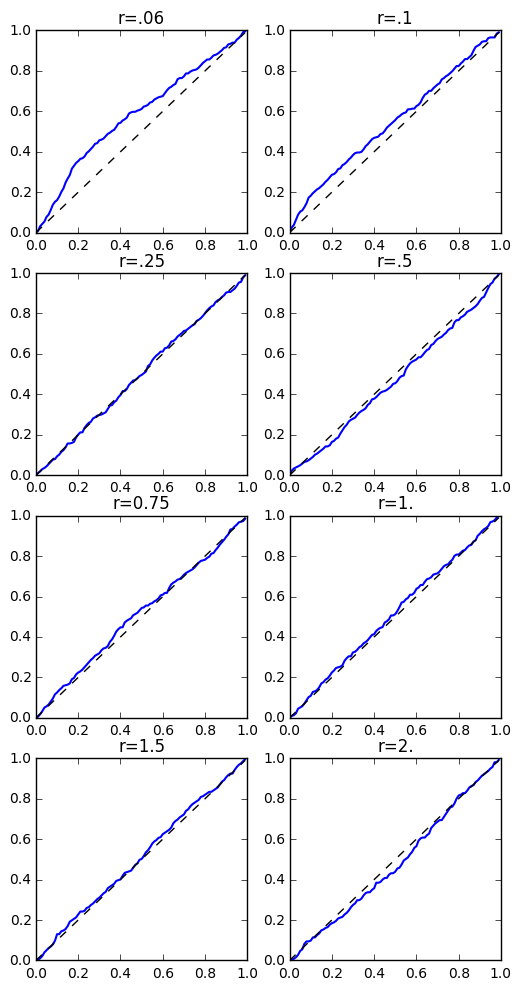}
    \caption{$\rho=0.99$}
  \label{fig:ar:null2}
  \end{subfigure}
\caption{AR: Empirical CDFs of the conditional pvalues with $\rho=0.3, 0.99$}
\end{figure}

\begin{figure}[!ht]
\begin{center}
\centerline{\includegraphics[width=1.\columnwidth]{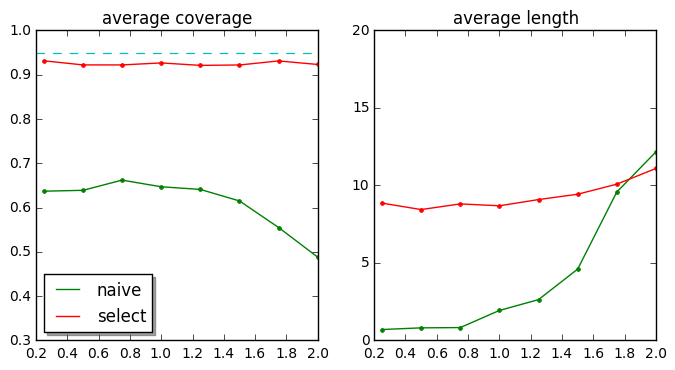}}
\caption{TSLS: Naive and conditional CI comparison. $(n,p,s)=(1000,10,3). \beta^* = 1., \alpha^*_E = 7., \rho=.8. \gamma^*=0.05$ for valid IVs and vary the ratio between $\gamma^*$ for invalid over valid IVs.}
\label{fig:known_0_8_gvalid_0_05}
\end{center}
\end{figure}

\begin{figure}[!ht]
\begin{center}
\centerline{\includegraphics[width=1.\columnwidth]{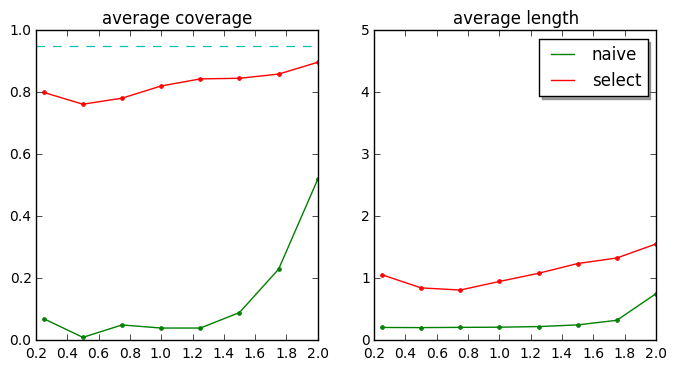}}
\caption{TSLS: Naive and conditional CI comparison. $(n,p,s)=(1000,100,30). \beta^* = 1., \alpha^*_E = 7., \rho=.8. \gamma^*=0.05$ for valid IVs and vary the ratio between $\gamma^*$ for invalid over valid IVs.}
\label{fig:known_0_8_gvalid_0_05_p100}
\end{center}
\end{figure}

\begin{figure}[!ht]
  \begin{subfigure}[b]{0.5\textwidth}
    \includegraphics[width=\textwidth]{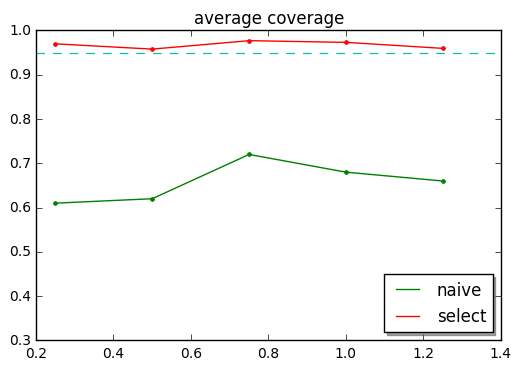}
    \caption{$\gamma^* = 0.03$ for valid IVs.}
    \label{fig:ar:003}
  \end{subfigure}
  \begin{subfigure}[b]{0.5\textwidth}
    \includegraphics[width=\textwidth]{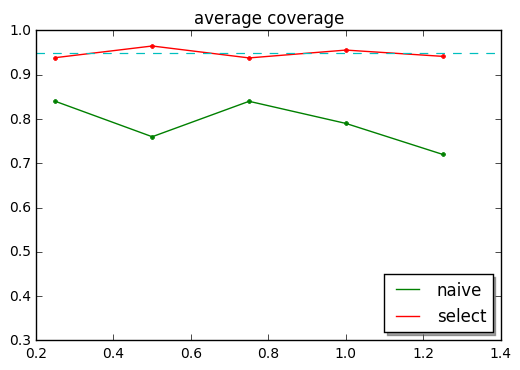}
    \caption{$\gamma^*=0.04$ for valid IVs.}
    \label{fig:ar:004}
  \end{subfigure}
\caption{AR: Naive and conditional CI coverage comparison. $(n,p,s)=(1000,10,3). \beta^*=1., \alpha^*_E = 7., \rho=.8$. Vary the ratio between $\gamma^*$ for invalid over valid IVs.}
\end{figure}

\clearpage
\section{Inference with Summary Statistics}

\begin{lemma} \label{lem:sumstat} Suppose we have the summary statistics $\widehat{\beta}_{j,Y}, {\rm se}^2 (\widehat{\beta}_{j,Y}), \widehat{\beta}_{j,D}, {\rm se}^2 (\widehat{\beta}_{j,D})$ and the data $Y,D,Z$ follows the defined model. If 
\begin{itemize}
\item[(i)] $Z^\intercal Z$ is a diagonal, but not necessarily identity, matrix and 
\item[(ii)] $Y, D, Z$ are centered to mean zero,
\end{itemize} 
the quantities $Z^\intercal Y, Z^\intercal D, Z^\intercal Z$ can be written as
\begin{align*}
Z_{(j)}^\intercal Z_{(j)} = \frac{(n-1)\cdot se^2 (\widehat{\beta}_{1,Y}) + \widehat{\beta}_{1,Y}^2}{(n-1)\cdot {\rm se}^2 (\widehat{\beta}_{i,Y}) + \widehat{\beta}_{i,Y}^2} \cdot c \\
Z_{(j)}^\intercal Y = \widehat{\beta}_{i,Y} \cdot \frac{(n-1)\cdot {\rm se}^2 (\widehat{\beta}_{1,Y}) + \widehat{\beta}_{1,Y}^2}{(n-1)\cdot {\rm se}^2 (\widehat{\beta}_{i,Y}) + \widehat{\beta}_{i,Y}^2} \cdot c \\
Z_{(j)}^\intercal D = \widehat{\beta}_{i,D} \cdot \frac{(n-1)\cdot {\rm se}^2 (\widehat{\beta}_{1,Y}) + \widehat{\beta}_{1,Y}^2}{(n-1)\cdot {\rm se}^2 (\widehat{\beta}_{i,Y}) + \widehat{\beta}_{i,Y}^2} \cdot c
\end{align*}
where $c$ is an unknown constant factor, i.e. they are known up to an unknown factor $c$.
\end{lemma}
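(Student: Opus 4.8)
The plan is to exploit the fact that, under condition (ii), each reported summary statistic arises from a \emph{no-intercept} simple linear regression, so that $\widehat{\beta}_{j,Y}$ and ${\rm se}^2(\widehat{\beta}_{j,Y})$ are explicit functions of the three inner products $Z_{(j)}^\intercal Z_{(j)}$, $Z_{(j)}^\intercal Y$, and $Y^\intercal Y$. Combining the reported coefficient with its standard error will then cancel everything except a single unidentifiable scale, which becomes the constant $c$.

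First I would record the closed forms for the regression of $Y$ on the single column $Z_{(j)}$. Because $Y$ and $Z$ are centered by (ii), the fit carries no intercept and
\[
\widehat{\beta}_{j,Y} = \frac{Z_{(j)}^\intercal Y}{Z_{(j)}^\intercal Z_{(j)}}, \qquad
{\rm se}^2(\widehat{\beta}_{j,Y}) = \frac{Y^\intercal Y - (Z_{(j)}^\intercal Y)^2 / (Z_{(j)}^\intercal Z_{(j)})}{(n-1)\, Z_{(j)}^\intercal Z_{(j)}},
\]
where the numerator of the second expression is the residual sum of squares and the $(n-1)$ is the residual degrees of freedom for a single-parameter fit. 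The key step is the algebraic identity obtained by forming $(n-1){\rm se}^2(\widehat{\beta}_{j,Y}) + \widehat{\beta}_{j,Y}^2$: the two terms proportional to $(Z_{(j)}^\intercal Y)^2/(Z_{(j)}^\intercal Z_{(j)})^2$ cancel exactly, leaving
\[
(n-1)\,{\rm se}^2(\widehat{\beta}_{j,Y}) + \widehat{\beta}_{j,Y}^2 = \frac{Y^\intercal Y}{Z_{(j)}^\intercal Z_{(j)}}.
\]

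Solving this for the diagonal entry gives $Z_{(j)}^\intercal Z_{(j)} = Y^\intercal Y / [(n-1){\rm se}^2(\widehat{\beta}_{j,Y}) + \widehat{\beta}_{j,Y}^2]$. Since this holds for every index, including the reference index $1$, I would write $Y^\intercal Y = Z_{(1)}^\intercal Z_{(1)}\cdot[(n-1){\rm se}^2(\widehat{\beta}_{1,Y}) + \widehat{\beta}_{1,Y}^2]$ and substitute, eliminating the common unknown factor $Y^\intercal Y$ and yielding the stated ratio with $c = Z_{(1)}^\intercal Z_{(1)}$; equivalently, the indeterminacy is precisely the missing scale $Y^\intercal Y$, which cannot be recovered from summary data alone. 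Condition (i) enters exactly here: because $Z^\intercal Z$ is diagonal, the full matrix is pinned down by these diagonal entries with all off-diagonal entries zero, so recovering $\{Z_{(j)}^\intercal Z_{(j)}\}_j$ suffices to recover $Z^\intercal Z$.

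Finally I would recover the cross-products by multiplying back: $Z_{(j)}^\intercal Y = \widehat{\beta}_{j,Y}\, Z_{(j)}^\intercal Z_{(j)}$, and, running the analogous no-intercept regression of $D$ on $Z_{(j)}$, $Z_{(j)}^\intercal D = \widehat{\beta}_{j,D}\, Z_{(j)}^\intercal Z_{(j)}$. Substituting the reconstructed $Z_{(j)}^\intercal Z_{(j)}$ into each produces the two remaining display formulas, carrying the same constant $c$. I do not anticipate a genuine obstacle, since the argument is purely algebraic; the only point requiring care is the degrees-of-freedom convention ($n-1$ rather than $n-2$), which is exactly what the centering assumption (ii) delivers by removing the intercept.
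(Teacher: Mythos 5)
Your proof is correct and follows essentially the same route as the paper's: both write down the no-intercept OLS closed forms $\widehat{\beta}_{j,Y} = Z_{(j)}^\intercal Y / Z_{(j)}^\intercal Z_{(j)}$ and ${\rm se}^2(\widehat{\beta}_{j,Y}) = \bigl[Y^\intercal Y/Z_{(j)}^\intercal Z_{(j)} - \widehat{\beta}_{j,Y}^2\bigr]/(n-1)$, combine them into the identity $(n-1)\,{\rm se}^2(\widehat{\beta}_{j,Y}) + \widehat{\beta}_{j,Y}^2 = Y^\intercal Y / Z_{(j)}^\intercal Z_{(j)}$, use index $1$ as the reference to set $c = Z_{(1)}^\intercal Z_{(1)}$ and eliminate the unidentifiable scale $Y^\intercal Y$, and then recover the cross-products $Z_{(j)}^\intercal Y$ and $Z_{(j)}^\intercal D$ by multiplying the regression coefficients back by the reconstructed $Z_{(j)}^\intercal Z_{(j)}$. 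The only cosmetic difference is that you make explicit where condition (i) and the $(n-1)$ degrees-of-freedom convention enter, which the paper leaves implicit.
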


\begin{theorem}[Inference with summarized data] \label{thm:sum_infer} Under the assumptions in Lemma \ref{lem:sumstat},  the selective inference density in \eqref{eq:tsls_cden} is identical, regardless of the value of $c$.
\end{theorem}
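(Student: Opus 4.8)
The plan is to substitute the representations furnished by Lemma \ref{lem:sumstat} into every ingredient of \eqref{eq:tsls_cden} and track homogeneity in the unknown factor $c$. Write $Z^\intercal Z = c\,M_0$, $Z^\intercal Y = c\,a_0$, $Z^\intercal D = c\,b_0$, where $M_0$ (diagonal), $a_0$, $b_0$ are fixed and $c$-free. The conceptual engine is that changing $c$ is \emph{exactly} a common rescaling of the latent data: since $c\propto Y^\intercal Y$ and the summary coefficients $\widehat{\beta}_{j,Y},\widehat{\beta}_{j,D},{\rm se}^2(\cdot)$ are invariant under $(Y,D,Z)\mapsto\sqrt{\kappa}\,(Y,D,Z)$, replacing $c$ by $\kappa c$ coincides with this common rescaling while preserving $M_0,a_0,b_0$. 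The first step is therefore to propagate the degree-one scaling through all derived statistics. Because $P_Z$ depends on $Z$ only through projections and $(Z^\intercal Z)^{-1}=c^{-1}M_0^{-1}$, one gets $D^\intercal P_Z D = c\,b_0^\intercal M_0^{-1} b_0$, $D^\intercal P_Z Y = c\,b_0^\intercal M_0^{-1} a_0$, and — using that $M_0$ is diagonal so $P_{Z_E}$ acts coordinatewise on $E$ — the residual quantities $D^\intercal(P_Z-P_{Z_E})D$ and $Z^\intercal(P_Z-P_{Z_E})D$ are likewise homogeneous of degree one. The rescaling viewpoint also settles $\widehat{\Sigma}_{11}=\tfrac1n(Y-D\beta_0)^\intercal P_{Z^\perp}(Y-D\beta_0)$, which scales by $\kappa$ as well, so $\widehat{\Sigma}_{11}=c\,\sigma_0$ with $\sigma_0$ free of $c$.

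The second step feeds these scalings into the density. The prefactor $\sqrt{\widehat{\Sigma}_{11}/D^\intercal(P_Z-P_{Z_E})D}=\sqrt{\sigma_0/\eta}$ is already $c$-free (here $D^\intercal(P_Z-P_{Z_E})D=c\,\eta$), and it multiplies a degree-one vector, so $\Sigma_{S,T}=c\,\Sigma_{S,T,0}$ and similarly $F=c\,F_0$ with $\Sigma_{S,T,0},F_0$ free of $c$. The crucial cancellation is that the observed studentized statistic $\widehat{T}_{\rm TSLS}$ is \emph{exactly} $c$-invariant: its numerator is degree one and its denominator $\sqrt{\widehat{\Sigma}_{11}}\sqrt{D^\intercal(P_Z-P_{Z_E})D}$ is also degree one, so $c$ cancels in the ratio, and $\Sigma_T=1$ carries no scale. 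Thus the value against which sampled statistics are compared does not move with $c$.

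The third step shows the sampling law of $T_{\rm TSLS}$ is unchanged. Holding the standardized sampling variable $T_{\rm TSLS}$ fixed, the argument of $g$ is the reconstructed randomization $\omega = -\Sigma_{S,T}\,T_{\rm TSLS} + (c\,K_{2,0}+\epsilon I)\!\begin{pmatrix}\alpha\\\beta\end{pmatrix} + \lambda u - F$, where $K_{2,0}=\begin{pmatrix}M_0 & b_0\\ b_0^\intercal & b_0^\intercal M_0^{-1}b_0\end{pmatrix}$. I would then observe that the randomized program \eqref{eq:rsisvive} is scale-equivariant once the tuning parameters $\lambda,\epsilon$ and the randomization scale are set on the data's natural scale (hence move with $c$): under $(Y,D,Z)\mapsto\sqrt{\kappa}\,(Y,D,Z)$ together with $\lambda\mapsto\kappa\lambda$, $\epsilon\mapsto\kappa\epsilon$, $\omega\mapsto\kappa\omega$, the objective scales by $\kappa$, so $(\widehat{\alpha},\widehat{\beta})$ and the subgradient $\widehat{u}$ — and therefore the event $\{{\rm supp}(\alpha)=E,\ {\rm sign}(\alpha_E)=\widehat{s}_E\}$ — are preserved. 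At the density level this amounts to keeping $(\alpha,\beta,u)$ fixed, whereupon $\omega$ and $|\mathcal{J}|$ change only by an overall power of $c$; since $\phi_{(0,1)}(T_{\rm TSLS})$ and the quadrant/box indicator $\mathbb{I}(\mathcal{B})$ are manifestly $c$-free and $g$ absorbs the common factor into its normalization, the density over $T_{\rm TSLS}$ is identical after normalizing, and the resulting p-value and confidence interval for $\beta^*$ are the same for every $c$.

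The main obstacle I expect is precisely the terms that are not homogeneous of degree one — the ridge $\epsilon I$, the $\ell_1$ penalty $\lambda$, and the scale of the randomization density $g$. The clean statement requires recognizing that all of these are chosen relative to the observed, $c$-dependent data scale, so that $\omega=c\,\omega_0$ for the $c=1$ reconstruction $\omega_0$ and the factor $c$ passes out of $g(\cdot)$ and $|\mathcal{J}|$ as a $T_{\rm TSLS}$-independent constant; verifying this compensation exactly (and noting that at the asymptotic, selective-CLT level the $o(1)$ ridge $\epsilon I$ is anyway negligible against the $O(n)$ Gram terms) is the step that needs care. The identical argument applies verbatim to the AR density \eqref{eq:ar_cden} and the TSLS-estimator density, since both are built from the same degree-one quantities and a scale-invariant target.
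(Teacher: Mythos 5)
Your proof is correct and follows essentially the same route as the paper's: both arguments track degree-one homogeneity in $c$ through every component of the density --- the selection event is preserved because the sisVIVE objective rescales uniformly once $\lambda$ (and the randomization scale) move with $c$, the studentized statistic $\widehat{T}_{\rm TSLS}$ and $\Sigma_T$ are exactly $c$-invariant, while $\Sigma_{S,T}$, $F$, and the argument of $g$ all carry a common factor of $c$ that the data-scaled randomization density absorbs. If anything, your explicit flagging and handling of the non-homogeneous terms ($\epsilon I$, $\lambda\|\alpha\|_1$, the scale of $g$) is more careful than the paper's proof, which asserts the same compensations ($\lambda \sim c$, $g$ scaling with $c$) with less justification.
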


\begin{proof}[Proof of Lemma \ref{lem:sumstat}]
First consider the summary statistics between $Y$ and $Z_{(j)}$. The formula for the OLS coefficient and standard error are

\begin{align*}
\hat{\beta}_{i,Y} &= \frac{Z_{(i)}^\intercal Y}{Z^\intercal_{(i)} Z_{(i)}} \\
se(\hat{\beta}_{i,Y}) &= \sqrt{s^2 \cdot (Z_{(i)}^\intercal Z_{(i)})^{-1}} = \sqrt{\frac{1}{n - 1} \frac{Y^\intercal(I-H_{Z_{(i)}})Y}{Z_{(i)}^\intercal Z_{(i)}}} \\
&= \sqrt{\frac{1}{n - 1} \left[\frac{Y^\intercal Y}{Z_{(i)}^\intercal Z_{(i)}} - (\frac{Z_{(i)}^\intercal Y}{Z_{(i)}^\intercal Z_{(i)}})^2 \right]} = \sqrt{\frac{1}{n - 1} (\frac{Y^\intercal Y}{Z_{(i)}^\intercal Z_{(i)}} - \hat{\beta}_{i,Y}^2)}
\end{align*}
where $H_{Z_{(i)}} = \frac{Z_{(i)} Z_{(i)}^\intercal}{Z_{(i)}^\intercal Z_{(i)}}$.
Combine the two to get $Z_{(i)} Y$ and $Y^\intercal Y$ in terms of $Z_{(i)}^\intercal Z_{(i)}$ with $n$ sample size:
\begin{align*}
Z_{(i)}^\intercal Y &= \hat{\beta}_{i,Y} \cdot (Z_{(i)}^\intercal Z_{(i)}) \\
Y^\intercal Y &= \left((n - 1)\cdot se^2 (\hat{\beta}_{i,Y}) + \hat{\beta}_{i,Y}^2\right)\cdot (Z_{(i)}^\intercal Z_{(i)}) = k_i \cdot (Z_{(i)}^\intercal Z_{(i)}) \label{eq:YY}
\end{align*}
Denote $c = Z_{(1)}^\intercal Z_{(1)}$, then $Y^\intercal Y = k_1 \cdot c$, and that $Z_{(i)}^\intercal Z_{(i)} = \frac{k_1}{k_i} \cdot c$, $Z_{(i)}^\intercal Y = \hat{\beta}_i \frac{k_1}{k_i} \cdot c$.

Similarly for $D$ and $Z_i$, we have 
$$
Z_{(i)}^\intercal D = \hat{\beta}_{i, D} \cdot (Z_{(i)}^\intercal Z_{(i)}) = \hat{\beta}_{i, D} \cdot \frac{k_1}{k_i} \cdot c
$$
as the formulae given in the lemma.

\end{proof}

\begin{proof}[Proof of Theorem \ref{thm:sum_infer}]
We are going to show the conditional density in Corollary \ref{thm:tsls_beta_asymp} is the same no matter the value of $c$. Let us consider the several components in the density.

\begin{itemize}
\item Select $E$ from SisVive procedure: \\
\begin{align*}
&\min_{\hat{\alpha}, \hat{\beta}} \frac{1}{2} ||P_Z (Y - Z\alpha - D\beta)||^2_2 + \lambda ||\alpha||_1 \\
=& \frac{1}{2} (Y - Z\alpha - D \beta)^\intercal Z (Z^\intercal Z)^{-1} Z^\intercal (Y - Z \alpha - D\beta) + \lambda ||\alpha||_1 \\
=& \frac{1}{2} (Z^\intercal Y - (Z^\intercal Z)\alpha - Z^\intercal D \beta)^\intercal (Z^\intercal Z)^{-1} (Z^\intercal Y - (Z^\intercal Z)\alpha - Z^\intercal D \beta) + \lambda ||\alpha||_1
\end{align*}

$\lambda \sim c$ based on the way it was chosen \citep{negahban_unified_2012}, and the $\ell_2$ term also has the factor $c$, hence the entire optimization problem is the same up to $c$, and the selection result would be the same, i.e. $E$ will be the same.

Also the selection event $\mathcal{B}$ will also be the same based on the same $E$.

Specifically, the optimization problem can be solved as a Lasso with response $(Z^\intercal Z)^{-\frac{1}{2}} Z^\intercal Y$, the design $\begin{pmatrix}(Z^\intercal Z)^{\frac{1}{2}} & (Z^\intercal Z)^{-\frac{1}{2}} Z^\intercal D\end{pmatrix}$ and coefficients $\begin{pmatrix}\alpha \\ \beta \end{pmatrix}$.

\item Estimate $\widehat{\beta}_{TSLS}$ \\
\begin{align*}
\widehat{\beta}_{TSLS} &= \frac{D^\intercal (P_Z - P_{Z_E}) Y}{D^\intercal (P_Z - P_{Z_E}) D} \\
&= \frac{(Z^\intercal D)^\intercal (Z^\intercal Z)^{-1} (Z^\intercal Y) - (Z_E^\intercal D)^\intercal (Z_E^\intercal Z_E)^{-1} (Z_E^\intercal Y)}{(Z^\intercal D)^\intercal (Z^\intercal Z)^{-1} (Z^\intercal D) - (Z_E^\intercal D)^\intercal (Z_E^\intercal Z_E)^{-1} (Z_E^\intercal D)} \\
&= \frac{\sum_i \frac{Z_i^\intercal D \cdot Z_i^\intercal Y}{Z_i^\intercal Z_i} - \sum_{i\in E} \frac{Z_i^\intercal D \cdot Z_i^\intercal Y}{Z_i^\intercal Z_i}}{\sum_i \frac{(Z_i^\intercal D)^2}{Z_i^\intercal Z_i} - \sum_{i \in E} \frac{(Z_i^\intercal D)^2}{Z_i^\intercal Z_i}}
\end{align*}
where $Z^\intercal Z$ is diagonal and plugging in formulae in lemma \ref{lem:sumstat} we can see that $\widehat{\beta}_{TSLS}$ is unchanged with $c$.

\item Estimate of covariance term $\hat{\Sigma}_{11}$ \\
Consider the reduced-form model
\begin{align*}
Y &= Z_E (\beta^* \gamma_E^* + \alpha_E^*) + Z_{-E} (\gamma^*_{-E} \beta^*)+ \xi_1 \\
D &= Z \gamma^* + \xi_2
\end{align*}
where $(\xi_{i1}, \xi_{i2}) \sim N(0, \Omega^*)$. 
As common in MR, we also assume that $\Omega_{12}^* = 0$, 
and the reduced form model is no different than a pair of OLS regression models in terms of estimating the covariance $\Omega^*$:
$$
\hat{\Omega} = \begin{pmatrix} s^2_{YZ} & 0 \\ 0 & s^2_{DZ}\end{pmatrix}
$$
where 
\begin{align*}
s^2_{YZ} &= \frac{1}{n-p+1} \sum_i (Y_i - \hat{\beta}_{i,Y} Z_i)^2 = \frac{1}{n-p+1} [Y^\intercal Y - \sum_i \frac{(Z_i^\intercal Y)^2}{Z_i^\intercal Z_i}] \\
s^2_{DZ} &= \frac{1}{n-p+1} \sum_i (D_i - \hat{\beta}_{i,D} Z_i)^2 = \frac{1}{n-p+1} [D^\intercal D - \sum_i \frac{(Z_i^\intercal D)^2}{Z_i^\intercal Z_i}] \\
\end{align*}
where from standard OLS regression formulae we know that with orthogonal design matrix $Z$ the individual coefficients of the marginal OLSs are the same as the coefficients in the multivariate OLS with all the $Z_i$'s together.

From the one-to-one mapping and plug in a consistent estimator of $\beta^*$ such as $\widehat{\beta}_{TSLS}$, we can obtain a consistent estimate of $\Sigma^*$
$$
\widehat{\Sigma} = \begin{pmatrix} 1 & -\widehat{\beta}_{TSLS} \\ 0 & 1 \end{pmatrix} \widehat{\Omega} \begin{pmatrix} 1 & 0 \\ -\widehat{\beta}_{TSLS} & 1 \end{pmatrix}
$$

$\widehat{\Sigma}$ has the factor $c$.

\item Covariance terms: \\
The covariance of the asymptotic distribution of TSLS estimator
$$
Cov(\widehat{\beta}_{TSLS}) = \frac{\widehat{\Sigma}_{11}}{D^\intercal (P_Z - P_{Z_E}) D} = \frac{\widehat{\Sigma}_{11}}{D^\intercal Z (Z^\intercal Z)^{-1} Z^\intercal D - D^\intercal Z_E (Z_E^\intercal Z_E)^{-1} Z_E^\intercal D}
$$
the factor $c$ cancels with the numerator and denominator and hence $Cov(\widehat{\beta}_{TSLS})$ does not change with $c$.

\begin{align*}
\Sigma_{S,T} &= \frac{\widehat{\Sigma}_{11}}{D^\intercal (P_Z - P_{Z_E}) D} \begin{pmatrix}Z^\intercal (P_Z -P_{Z_E})D \\ D^\intercal (P_Z - P_{Z_E}) D \end{pmatrix} \\
&= Cov(\widehat{\beta}_{TSLS}) \cdot \begin{pmatrix}Z^\intercal D - Z^\intercal Z_E (Z_E^\intercal Z_E)^{-1} Z_E^\intercal D \\ D^\intercal Z (Z^\intercal Z)^{-1} Z^\intercal D - D^\intercal Z_E (Z_E^\intercal Z_E)^{-1} Z_E^\intercal D \end{pmatrix}
\end{align*}
$\Sigma_{S,T}$ has the factor $c$.

\item Selective density \\
\begin{align*}
&\ell_{\beta_0} (\beta_{TSLS}, \alpha, \beta, u_{-E}) \approx \phi_{\beta_0, \frac{\widehat{\Sigma}_{11}}{D^\intercal (P_Z - P_{Z_E})D}} (\beta_{TSLS}) \cdot \mathbb{I}(\mathcal{B}) \\
&\cdot g\left\{-\Sigma_{S,T}^{obs} \cdot \beta_{TSLS} + \left(\begin{pmatrix}Z^\intercal Z & Z^\intercal D \\ (Z^\intercal D)^\intercal & D^\intercal P_Z D\end{pmatrix}+ \epsilon I\right)^{obs}\begin{pmatrix}\alpha \\ \beta \end{pmatrix} + \lambda \begin{pmatrix} u \\ 0 \end{pmatrix} - \left(\begin{pmatrix}Z^\intercal Y \\ D^\intercal P_Z Y\end{pmatrix} - \Sigma_{S,T} \cdot \beta_{TSLS}\right)^{obs}\right\} \\
\end{align*}
The way we set the randomization $g$ scales with $c$, as well as $\lambda$, and the terms inside $g$ also has the factor $c$, hence overall $g(\cdot)$ and also $\phi (\cdot)$ both have the same value as $c$ changes.

\end{itemize}

\end{proof}

\section{Application: a Developmental Economics Dataset} \label{sec:application}

It is of interest to study the relationship between income and expenditures on goods and services in developmental economics, and specifically the efficient wage hypothesis, which suggests that the increased income would lead to workers being better fed, and eventually result in better worker productivity, especially in developing economies. We are going to apply our method to a real dataset in development economics, as a follow-up and comparison to the analysis in \citet{kang2015simple}, which had been analyzed in \citet{bouis1990agricultural} and \citet{bouis1992estimates}.

The dataset has $n=405$ observations of Philippine farm households, and the outcome $Y_i$ is household food expenditures, the treatment $D_i$ is household's log income. We are interested in analyzing the effect of income $D_i$ on demand for food $Y_i$. There are $L=4$ instrument candidates, namely cultivated area per capita $Z_{i1}$, worth of assets $Z_{i2}$, binary indicator of household electricity $Z_{i3}$, and the house flooring quality $Z_{i4}$. There are also covariates $X_j$'s, namely mother's education, father's education, mother's age, father's age, mother's nutritional knowledge, price of corn, price of rice, population density of the municipality, and number of household members in adult equivalents (c.f. page 82 of \citet{bouis1990agricultural}). 

In terms of the general model, there will be exogenous variable $X_j$'s in the structural model
\begin{align*}
\begin{array}{r@{\mskip\thickmuskip}l}
Y_i &=  Z_{i}^T \alpha^* +  X_i^T \kappa + D_i \beta^* + \delta_i \\
D_i &=  Z_{i}^T \gamma^* + X_i^T \pi + \xi_{i2} 
\end{array}
\end{align*}
we can replace the variables $Y_i, D_i, Z_i$ with the residuals after regressing them on $X$, i.e. replace $Y, D, Z$ by $\tilde{Y} = (I - P_X) Y, \tilde{D} = (I - P_X) D, \tilde{Z} = (I - P_X) Z$ (Wang and Zivot 1998), hence to get rid of the dependence on nuisance parameters $(\kappa, \pi)$ and the resulting sufficient statistic does not involve $X$ for $\beta^*$. The rest of the analysis proceeds exactly the same with $\tilde{Y}, \tilde{D}, \tilde{Z}$.

In the original setup of \citet{kang2015simple}, there is a parameter $U$ as the sensitivity parameter and be provided by the researcher as an upper bound on the number of invalid instruments in the model, i.e. $U > s$, with smaller $U$ representing a larger confidence that most instruments are valid and vice versa. In our analysis, we will also report confidence intervals for different given values of $U$, to better compare with the other analysis.

In below table \ref{tab:selective}, $U=1$ means it is believed to be no invalid IVs and the confidence interval is the same as the classical one; for $U=2$ our procedures always picks $Z_2$ as invalid and the resulting intervals with TSLS or AR statistic are shown; for $U=3$ our procedures always picks $Z_2$, and the first lines are intervals when picking $Z_1$ additionally, and the second lines are intervals otherwise. 
In table \ref{tab:kang}, the intervals from \cite{kang2015simple} are shown with corresponding $U$ values. On the high level, their intervals is constructed to be the union of all the intervals given possible enumerations of the invalid IVs with $s<U$.

Generally speaking, when $U > 1$ meaning there is possibly invalid instruments in the model, the selective confidence intervals tend to be longer than those merely based on TSLS statistics, and the difference tends to be larger when $U$ increases, i.e. when there might be many invalid instruments. Because our approach conditions on the actual selection result, there will be more than one possible confidence interval based on the observed selection result. This could be expected because of the randomized nature of our method, just like the results are also based on the random model chosen by data splitting. Our method is valuable when the researcher trusts in the data-driven method to detect the invalid IVs, and the model will only be based on the selection result. It can also be seen that even with the union of our intervals for a given $U$, our interval is slightly shorter than theirs, possible because of the power increase from our randomized procedure.

\begin{table}[!ht]
\caption {Selective confidence intervals} \label{tab:selective} 
\begin{center}
\begin{tabular} {|c|c|c|c|}
 \hline
 Test & $U=1$ (Naive) & $U=2$ & $U=3$ \\
 \hline
 TSLS & (0.043, 0.053) & $(0.034, 0.058)$ & $(-0.007, 0.012)$ \\
 &&&$(0.037, 0.064)$\\
 \hline
 AR & (0.044, 0.054) & $(0.042, 0.051)$  & $(-0.015, 0.031)$ \\
 &&&$(0.038, 0.063)$\\
\hline
\end{tabular}
\end{center}
\end{table}

\begin{table}[!ht]
\caption {\cite{kang2015simple} confidence intervals} \label{tab:kang} 
\begin{center}
\begin{tabular}{|c|c|c|c|}
\hline
 Test & $U=1$ (Naive) & $U=2$ & $U=3$ \\
 \hline
 TSLS & $(0.043, 0.053)$ & $(0.031, 0.059)$ & $(-0.017, 0.064)$ \\
 \hline
  AR & $(0.044, 0.054)$ & $(0.037, 0.058)$  & $(-0.027, 0.068)$  \\
\hline
\end{tabular}
\end{center}
\end{table}

\begin{remark}
If we directly fit an OLS of $D$ on $Z$, the t-values are $8.163, 8.232, 1.080, 2.774$ and we can see that the most significant one is the 2nd IV which is always picked up first. When it then picks up the 1st IV, the remaining IVs are relatively weakly correlated with $D$, and the resulting confidence intervals contain 0. When it then picks up either the 3rd or the 4th IV, the remaining IVs (including the 1st IV) are more correlated with $D$, and the resulting confidence intervals do not contain 0 and much resemble those for $U=2$ case.
But of course these uncertainly come from the data itself, any conditional inference has to deal with this ambiguity of the validity of instruments. Selective inference is trying to eliminate the selection bias on top of this.
\end{remark}

\bibliographystyle{apa}
\bibliography{selectIV}

\begin{thebibliography}{}

\bibitem[\protect\astroncite{Allard et~al.}{2015}]{allard_mendelian_2015}
Allard, C., Desgagn{\'e}, V., Patenaude, J., Lacroix, M., Guillemette, L.,
  Battista, M., Doyon, M., Menard, J., Ardilouze, J., Perron, P., et~al.
  (2015).
\newblock Mendelian randomization supports causality between maternal
  hyperglycemia and epigenetic regulation of leptin gene in newborns.
\newblock {\em Epigenetics}, 10(4):342--351.

\bibitem[\protect\astroncite{Anderson and
  Rubin}{1949}]{anderson_estimation_1949}
Anderson, T.~W. and Rubin, H. (1949).
\newblock Estimation of the parameters of a single equation in a complete
  system of stochastic equations.
\newblock {\em The Annals of Mathematical Statistics}, pages 46--63.

\bibitem[\protect\astroncite{Andrews and Lu}{2001}]{andrews_consistent_2001}
Andrews, D.~W. and Lu, B. (2001).
\newblock Consistent model and moment selection procedures for gmm estimation
  with application to dynamic panel data models.
\newblock {\em Journal of Econometrics}, 101(1):123--164.

\bibitem[\protect\astroncite{Andrews}{1999}]{andrews_consistent_1999}
Andrews, D. W.~K. (1999).
\newblock Consistent moment selection procedures for generalized method of
  moments estimation.
\newblock {\em Econometrica}, 67(3):543--563.

\bibitem[\protect\astroncite{Andrews et~al.}{2006}]{andrews_optimal_2006}
Andrews, D. W.~K., Moreira, M.~J., and Stock, J.~H. (2006).
\newblock Optimal two-sided invariant similar tests for instrumental variables
  regression.
\newblock {\em Econometrica}, 74(3):715--752.

\bibitem[\protect\astroncite{Angrist
  et~al.}{1996}]{angrist_identification_1996}
Angrist, J.~D., Imbens, G.~W., and Rubin, D.~B. (1996).
\newblock Identification of causal effects using instrumental variables.
\newblock {\em Journal of the American Statistical Association},
  91(434):444--455.

\bibitem[\protect\astroncite{Angrist and
  Krueger}{2001}]{angrist_instrumental_2001}
Angrist, J.~D. and Krueger, A.~B. (2001).
\newblock Instrumental variables and the search for identification: From supply
  and demand to natural experiments.
\newblock {\em The Journal of Economic Perspectives}, 15(4):69--85.

\bibitem[\protect\astroncite{Baiocchi
  et~al.}{2014}]{baiocchi_instrumental_2014}
Baiocchi, M., Cheng, J., and Small, D.~S. (2014).
\newblock Instrumental variable methods for causal inference.
\newblock {\em Statistics in Medicine}, 33(13):2297--2340.

\bibitem[\protect\astroncite{Bao et~al.}{2019}]{bao2019assessing}
Bao, Y., Clarke, P.~S., Smart, M., and Kumari, M. (2019).
\newblock Assessing the robustness of sisvive in a mendelian randomization
  study to estimate the causal effect of body mass index on income using
  multiple snps from understanding society.
\newblock {\em Statistics in medicine}, 38(9):1529--1542.

\bibitem[\protect\astroncite{Bi et~al.}{2017}]{bi_inferactive_2017}
Bi, N., Markovic, J., Xia, L., and Taylor, J. (2017).
\newblock Inferactive data analysis.
\newblock {\em arXiv preprint arXiv:1707.06692}.

\bibitem[\protect\astroncite{Bouis}{1992}]{bouis1992estimates}
Bouis, H. (1992).
\newblock Are estimates of calorie-income elasticities too high? a
  recalibration of the plasuible range vol-39.

\bibitem[\protect\astroncite{Bouis and Haddad}{1990}]{bouis1990agricultural}
Bouis, H.~E. and Haddad, L.~J. (1990).
\newblock {\em Agricultural commercialization, nutrition, and the rural poor}.
\newblock Rienner.

\bibitem[\protect\astroncite{Bowden et~al.}{2015}]{bowden_mendelian_2015}
Bowden, J., Davey~Smith, G., and Burgess, S. (2015).
\newblock Mendelian randomization with invalid instruments: effect estimation
  and bias detection through egger regression.
\newblock {\em International Journal of Epidemiology}, 44(2):512--525.

\bibitem[\protect\astroncite{Bowden et~al.}{2016}]{bowden_consistent_2016}
Bowden, J., Davey~Smith, G., Haycock, P.~C., and Burgess, S. (2016).
\newblock Consistent estimation in mendelian randomization with some invalid
  instruments using a weighted median estimator.
\newblock {\em Genetic Epidemiology}, 40(4):304--314.

\bibitem[\protect\astroncite{Bowden et~al.}{2017a}]{bowden_framework_2017}
Bowden, J., Del~Greco, M., Minelli, C., Davey~Smith, G., Sheehan, N., Thompson,
  J., et~al. (2017a).
\newblock A framework for the investigation of pleiotropy in two-sample summary
  data mendelian randomization.
\newblock {\em Statistics in medicine}, 36(11):1783--1802.

\bibitem[\protect\astroncite{Bowden et~al.}{2017b}]{bowden2017framework}
Bowden, J., Del Greco~M, F., Minelli, C., Davey~Smith, G., Sheehan, N., and
  Thompson, J. (2017b).
\newblock A framework for the investigation of pleiotropy in two-sample summary
  data mendelian randomization.
\newblock {\em Statistics in medicine}, 36(11):1783--1802.

\bibitem[\protect\astroncite{Boyd and Vandenberghe}{2004}]{boyd_convex_2004}
Boyd, S. and Vandenberghe, L. (2004).
\newblock {\em Convex optimization}.
\newblock Cambridge university press.

\bibitem[\protect\astroncite{Brennan}{2004}]{brennan_commentary_2004}
Brennan, P. (2004).
\newblock Commentary: Mendelian randomization and gene--environment
  interaction.
\newblock {\em International Journal of Epidemiology}, 33(1):17--21.

\bibitem[\protect\astroncite{Burgess et~al.}{2016}]{burgess_robust_2016}
Burgess, S., Bowden, J., Dudbridge, F., and Thompson, S.~G. (2016).
\newblock Robust instrumental variable methods using multiple candidate
  instruments with application to mendelian randomization.
\newblock {\em arXiv}.

\bibitem[\protect\astroncite{Burgess et~al.}{2013}]{burgess2013mendelian}
Burgess, S., Butterworth, A., and Thompson, S.~G. (2013).
\newblock Mendelian randomization analysis with multiple genetic variants using
  summarized data.
\newblock {\em Genetic epidemiology}, 37(7):658--665.

\bibitem[\protect\astroncite{Burgess and
  Thompson}{2015}]{burgess_mendelian_2015}
Burgess, S. and Thompson, S.~G. (2015).
\newblock {\em Mendelian randomization: methods for using genetic variants in
  causal estimation}.
\newblock CRC Press.

\bibitem[\protect\astroncite{Burgess et~al.}{2011}]{burgess2011avoiding}
Burgess, S., Thompson, S.~G., and Collaboration, C. C.~G. (2011).
\newblock Avoiding bias from weak instruments in mendelian randomization
  studies.
\newblock {\em International journal of epidemiology}, 40(3):755--764.

\bibitem[\protect\astroncite{Conley et~al.}{2012}]{conley_plausibly_2012}
Conley, T.~G., Hansen, C.~B., and Rossi, P.~E. (2012).
\newblock Plausibly exogenous.
\newblock {\em Review of Economics and Statistics}, 94(1):260--272.

\bibitem[\protect\astroncite{Davey~Smith and
  Ebrahim}{2003}]{davey_smith_mendelian_2003}
Davey~Smith, G. and Ebrahim, S. (2003).
\newblock Mendelian randomization€™: can genetic epidemiology contribute to
  understanding environmental determinants of disease?
\newblock {\em International Journal of Epidemiology}, 32(1):1--22.

\bibitem[\protect\astroncite{Davey~Smith and
  Ebrahim}{2004}]{davey_smith_mendelian_2004}
Davey~Smith, G. and Ebrahim, S. (2004).
\newblock Mendelian randomization: prospects, potentials, and limitations.
\newblock {\em International Journal of Epidemiology}, 33(1):30--42.

\bibitem[\protect\astroncite{Davidson and
  MacKinnon}{1993}]{davidson_estimation_1993}
Davidson, R. and MacKinnon, J.~G. (1993).
\newblock {\em Estimation and Inference in Econometrics}.
\newblock Oxford University Press, New York.

\bibitem[\protect\astroncite{Fithian
  et~al.}{2014}]{fithian_optimal_inference_2014}
Fithian, W., Sun, D., and Taylor, J. (2014).
\newblock Optimal {Inference} {After} {Model} {Selection}.
\newblock {\em arXiv preprint arXiv:1410.2597}.

\bibitem[\protect\astroncite{Gelman and Loken}{2013}]{gelman_garden_2013}
Gelman, A. and Loken, E. (2013).
\newblock The garden of forking paths: Why multiple comparisons can be a
  problem, even when there is no "fishing expedition" or "p-hacking" and the
  research hypothesis was posited ahead of time.
\newblock {\em Department of Statistics, Columbia University}.

\bibitem[\protect\astroncite{Guo et~al.}{2016}]{guo2016causalci}
Guo, Z., Kang, H., Cai, T.~T., and Small, S.~D. (2016).
\newblock Confidence intervals for causal effects with invalid instruments
  using two-stage hard thresholding.
\newblock {\em arXiv preprint arXiv:1603.05224}.

\bibitem[\protect\astroncite{Hartwig et~al.}{2017}]{hartwig2017robust}
Hartwig, F.~P., Davey~Smith, G., and Bowden, J. (2017).
\newblock Robust inference in summary data mendelian randomization via the zero
  modal pleiotropy assumption.
\newblock {\em International journal of epidemiology}, 46(6):1985--1998.

\bibitem[\protect\astroncite{Hemani et~al.}{2018}]{hemani2018mr}
Hemani, G., Zheng, J., Elsworth, B., Wade, K.~H., Haberland, V., Baird, D.,
  Laurin, C., Burgess, S., Bowden, J., Langdon, R., et~al. (2018).
\newblock The mr-base platform supports systematic causal inference across the
  human phenome.
\newblock {\em Elife}, 7:e34408.

\bibitem[\protect\astroncite{Hern{\'a}n and
  Robins}{2006}]{hernan_instruments_2006}
Hern{\'a}n, M.~A. and Robins, J.~M. (2006).
\newblock Instruments for causal inference: An epidemiologist's dream?
\newblock {\em Epidemiology}, 17(4):360--372.

\bibitem[\protect\astroncite{Holland}{1988}]{holland_causal_1988}
Holland, P.~W. (1988).
\newblock Causal inference, path analysis, and recursive structural equations
  models.
\newblock {\em Sociological Methodology}, 18(1):449--484.

\bibitem[\protect\astroncite{Kang et~al.}{2015}]{kang2015simple}
Kang, H., Cai, T.~T., and Small, D.~S. (2015).
\newblock A simple and robust confidence interval for causal effects with
  possibly invalid instruments.
\newblock {\em arXiv preprint arXiv:1504.03718}.

\bibitem[\protect\astroncite{Kang et~al.}{2016}]{kang_instrumental_2016}
Kang, H., Zhang, A., Cai, T.~T., and Small, D.~S. (2016).
\newblock Instrumental variables estimation with some invalid instruments and
  its application to mendelian randomization.
\newblock {\em Journal of the American Statistical Association}, 111:132--144.

\bibitem[\protect\astroncite{Lawlor et~al.}{2008}]{lawlor_mendelian_2008}
Lawlor, D.~A., Harbord, R.~M., Sterne, J.~A., Timpson, N., and Davey~Smith, G.
  (2008).
\newblock Mendelian randomization: using genes as instruments for making causal
  inferences in epidemiology.
\newblock {\em Statistics in medicine}, 27(8):1133--1163.

\bibitem[\protect\astroncite{Lee and
  Taylor}{2014}]{lee_marginal_screening_2014}
Lee, J. and Taylor, J. (2014).
\newblock Exact post model selection inference for marginal screening.
\newblock {\em Advances in Neural Information Processing Systems}.

\bibitem[\protect\astroncite{Lee et~al.}{2016}]{lee_exact_lasso_2016}
Lee, J.~D., Sun, D.~L., Sun, Y., and Taylor, J.~E. (2016).
\newblock Exact post-selection inference with the lasso.
\newblock {\em The Annals of Statistics}.

\bibitem[\protect\astroncite{Li et~al.}{2009}]{li_cumulative_2009}
Li, S., Zhao, J.~H., Luan, J., Luben, R.~N., Rodwell, S.~A., Khaw, K.-T., Ong,
  K.~K., Wareham, N.~J., and Loos, R.~J. (2009).
\newblock Cumulative effects and predictive value of common
  obesity-susceptibility variants identified by genome-wide association
  studies--.
\newblock {\em The American journal of clinical nutrition}, 91(1):184--190.

\bibitem[\protect\astroncite{Little and Khoury}{2003}]{little_mendelian_2003}
Little, J. and Khoury, M.~J. (2003).
\newblock Mendelian randomisation: a new spin or real progress?
\newblock {\em The Lancet}, 362(9388):930--931.

\bibitem[\protect\astroncite{Locke et~al.}{2015}]{locke_genetic_2015}
Locke, A.~E., Kahali, B., Berndt, S.~I., Justice, A.~E., Pers, T.~H., Day,
  F.~R., Powell, C., Vedantam, S., Buchkovich, M.~L., Yang, J., Croteau-Chonka,
  D.~C., Esko, T., Fall, T., Ferreira, T., Gustafsson, S., Kutalik, Z., Luan,
  J., M{\"a}gi, R., Randall, J.~C., Winkler, T.~W., Wood, A.~R., Workalemahu,
  T., Faul, J.~D., Smith, J.~A., Hua~Zhao, J., Zhao, W., Chen, J., Fehrmann,
  R., Hedman, {\AA}.~K., Karjalainen, J., Schmidt, E.~M., Absher, D., Amin, N.,
  Anderson, D., Beekman, M., Bolton, J.~L., Bragg-Gresham, J.~L., Buyske, S.,
  Demirkan, A., Deng, G., Ehret, G.~B., Feenstra, B., Feitosa, M.~F., Fischer,
  K., Goel, A., Gong, J., Jackson, A.~U., Kanoni, S., Kleber, M.~E.,
  Kristiansson, K., Lim, U., Lotay, V., Mangino, M., Mateo~Leach, I.,
  Medina-Gomez, C., Medland, S.~E., Nalls, M.~A., Palmer, C.~D., Pasko, D.,
  Pechlivanis, S., Peters, M.~J., Prokopenko, I., Shungin, D., Stan{\v
  c}{\'a}kov{\'a}, A., Strawbridge, R.~J., Ju~Sung, Y., Tanaka, T., Teumer, A.,
  Trompet, S., van~der Laan, S.~W., van Setten, J., Van Vliet-Ostaptchouk,
  J.~V., Wang, Z., Yengo, L., Zhang, W., Isaacs, A., Albrecht, E.,
  {\"A}rnl{\"o}v, J., Arscott, G.~M., Attwood, A.~P., Bandinelli, S., Barrett,
  A., Bas, I.~N., Bellis, C., Bennett, A.~J., Berne, C., Blagieva, R.,
  Bl{\"u}her, M., B{\"o}hringer, S., Bonnycastle, L.~L., B{\"o}ttcher, Y.,
  Boyd, H.~A., Bruinenberg, M., Caspersen, I.~H., Ida~Chen, Y.-D., Clarke, R.,
  Warwick~Daw, E., de~Craen, A. J.~M., Delgado, G., Dimitriou, M., Doney, A.
  S.~F., Eklund, N., Estrada, K., Eury, E., Folkersen, L., Fraser, R.~M.,
  Garcia, M.~E., Geller, F., Giedraitis, V., Gigante, B., Go, A.~S., Golay, A.,
  Goodall, A.~H., Gordon, S.~D., Gorski, M., Grabe, H.-J., Grallert, H.,
  Grammer, T.~B., Gr{\"a}{\ss}ler, J., Gr{\"o}nberg, H., Groves, C.~J., Gusto,
  G., Haessler, J., Hall, P., Haller, T., Hallmans, G., Hartman, C.~A.,
  Hassinen, M., Hayward, C., Heard-Costa, N.~L., Helmer, Q., Hengstenberg, C.,
  Holmen, O., Hottenga, J.-J., James, A.~L., Jeff, J.~M., Johansson, {\AA}.,
  Jolley, J., Juliusdottir, T., Kinnunen, L., Koenig, W., Koskenvuo, M.,
  Kratzer, W., Laitinen, J., Lamina, C., Leander, K., Lee, N.~R., Lichtner, P.,
  Lind, L., Lindstr{\"o}m, J., Sin~Lo, K., Lobbens, S., Lorbeer, R., Lu, Y.,
  Mach, F., Magnusson, P. K.~E., Mahajan, A., McArdle, W.~L., McLachlan, S.,
  Menni, C., Merger, S., Mihailov, E., Milani, L., Moayyeri, A., Monda, K.~L.,
  Morken, M.~A., Mulas, A., M{\"u}ller, G., M{\"u}ller-Nurasyid, M., Musk,
  A.~W., Nagaraja, R., N{\"o}then, M.~M., Nolte, I.~M., Pilz, S., Rayner,
  N.~W., Renstrom, F., Rettig, R., Ried, J.~S., Ripke, S., Robertson, N.~R.,
  Rose, L.~M., Sanna, S., Scharnagl, H., Scholtens, S., Schumacher, F.~R.,
  Scott, W.~R., Seufferlein, T., Shi, J., Vernon~Smith, A., Smolonska, J.,
  Stanton, A.~V., Steinthorsdottir, V., Stirrups, K., Stringham, H.~M.,
  Sundstr{\"o}m, J., Swertz, M.~A., Swift, A.~J., Syv{\"a}nen, A.-C., Tan,
  S.-T., Tayo, B.~O., Thorand, B., Thorleifsson, G., Tyrer, J.~P., Uh, H.-W.,
  Vandenput, L., Verhulst, F.~C., Vermeulen, S.~H., Verweij, N., Vonk, J.~M.,
  Waite, L.~L., Warren, H.~R., Waterworth, D., Weedon, M.~N., Wilkens, L.~R.,
  Willenborg, C., Wilsgaard, T., Wojczynski, M.~K., Wong, A., Wright, A.~F.,
  Zhang, Q., Study, T. L.~C., Brennan, E.~P., Choi, M., Dastani, Z., Drong,
  A.~W., Eriksson, P., Franco-Cereceda, A., G{\aa}din, J.~R., Gharavi, A.~G.,
  Goddard, M.~E., Handsaker, R.~E., Huang, J., Karpe, F., Kathiresan, S.,
  Keildson, S., Kiryluk, K., Kubo, M., Lee, J.-Y., Liang, L., Lifton, R.~P.,
  Ma, B., McCarroll, S.~A., McKnight, A.~J., Min, J.~L., Moffatt, M.~F.,
  Montgomery, G.~W., Murabito, J.~M., Nicholson, G., Nyholt, D.~R., Okada, Y.,
  Perry, J. R.~B., Dorajoo, R., Reinmaa, E., Salem, R.~M., Sandholm, N., Scott,
  R.~A., Stolk, L., Takahashi, A., Tanaka, T., van't Hooft, F.~M., Vinkhuyzen,
  A. A.~E., Westra, H.-J., Zheng, W., Zondervan, K.~T., Consortium, T.~A.,
  Group, T. A.-B.~W., Consortium, T.~C., Consortium, T.~C., GLGC, T., ICBP, T.,
  Investigators, T.~M., Consortium, T.~M., Consortium, T.~M., Consortium,
  T.~P., Consortium, T.~R., Consortium, T.~G., Consortium, T. I.~E., Heath,
  A.~C., Arveiler, D., Bakker, S. J.~L., Beilby, J., Bergman, R.~N., Blangero,
  J., Bovet, P., Campbell, H., Caulfield, M.~J., Cesana, G., Chakravarti, A.,
  Chasman, D.~I., Chines, P.~S., Collins, F.~S., Crawford, D.~C.,
  Adrienne~Cupples, L., Cusi, D., Danesh, J., de~Faire, U., den Ruijter, H.~M.,
  Dominiczak, A.~F., Erbel, R., Erdmann, J., Eriksson, J.~G., Farrall, M.,
  Felix, S.~B., Ferrannini, E., Ferri{\`e}res, J., Ford, I., Forouhi, N.~G.,
  Forrester, T., Franco, O.~H., Gansevoort, R.~T., Gejman, P.~V., Gieger, C.,
  Gottesman, O., Gudnason, V., Gyllensten, U., Hall, A.~S., Harris, T.~B.,
  Hattersley, A.~T., Hicks, A.~A., Hindorff, L.~A., Hingorani, A.~D., Hofman,
  A., Homuth, G., Kees~Hovingh, G., Humphries, S.~E., Hunt, S.~C.,
  Hypp{\"o}nen, E., Illig, T., Jacobs, K.~B., Jarvelin, M.-R., J{\"o}ckel,
  K.-H., Johansen, B., Jousilahti, P., Wouter~Jukema, J., Jula, A.~M., Kaprio,
  J., Kastelein, J. J.~P., Keinanen-Kiukaanniemi, S.~M., Kiemeney, L.~A.,
  Knekt, P., Kooner, J.~S., Kooperberg, C., Kovacs, P., Kraja, A.~T., Kumari,
  M., Kuusisto, J., Lakka, T.~A., Langenberg, C., Le~Marchand, L.,
  Lehtim{\"a}ki, T., Lyssenko, V., M{\"a}nnist{\"o}, S., Marette, A., Matise,
  T.~C., McKenzie, C.~A., McKnight, B., Moll, F.~L., Morris, A.~D., Morris,
  A.~P., Murray, J.~C., Nelis, M., Ohlsson, C., Oldehinkel, A.~J., Ong, K.~K.,
  Madden, P. A.~F., Pasterkamp, G., Peden, J.~F., Peters, A., Postma, D.~S.,
  Pramstaller, P.~P., Price, J.~F., Qi, L., Raitakari, O.~T., Rankinen, T.,
  Rao, D.~C., Rice, T.~K., Ridker, P.~M., Rioux, J.~D., Ritchie, M.~D., Rudan,
  I., Salomaa, V., Samani, N.~J., Saramies, J., Sarzynski, M.~A., Schunkert,
  H., Schwarz, P. E.~H., Sever, P., Shuldiner, A.~R., Sinisalo, J., Stolk,
  R.~P., Strauch, K., T{\"o}njes, A., Tr{\'e}gou{\"e}t, D.-A., Tremblay, A.,
  Tremoli, E., Virtamo, J., Vohl, M.-C., V{\"o}lker, U., Waeber, G., Willemsen,
  G., Witteman, J.~C., Carola~Zillikens, M., Adair, L.~S., Amouyel, P.,
  Asselbergs, F.~W., Assimes, T.~L., Bochud, M., Boehm, B.~O., Boerwinkle, E.,
  Bornstein, S.~R., Bottinger, E.~P., Bouchard, C., Cauchi, S., Chambers,
  J.~C., Chanock, S.~J., Cooper, R.~S., de~Bakker, P. I.~W., Dedoussis, G.,
  Ferrucci, L., Franks, P.~W., Froguel, P., Groop, L.~C., Haiman, C.~A.,
  Hamsten, A., Hui, J., Hunter, D.~J., Hveem, K., Kaplan, R.~C., Kivimaki, M.,
  Kuh, D., Laakso, M., Liu, Y., Martin, N.~G., M{\"a}rz, W., Melbye, M.,
  Metspalu, A., Moebus, S., Munroe, P.~B., Nj{\o}lstad, I., Oostra, B.~A.,
  Palmer, C. N.~A., Pedersen, N.~L., Perola, M., P{\'e}russe, L., Peters, U.,
  Power, C., Quertermous, T., Rauramaa, R., Rivadeneira, F., Saaristo, T.~E.,
  Saleheen, D., Sattar, N., Schadt, E.~E., Schlessinger, D., Eline~Slagboom,
  P., Snieder, H., Spector, T.~D., Thorsteinsdottir, U., Stumvoll, M.,
  Tuomilehto, J., Uitterlinden, A., Uusitupa, M., van~der Harst, P., Walker,
  M., Wallaschofski, H., Wareham, N.~J., Watkins, H., Weir, D.~R., Wichmann,
  H.-E., Wilson, J.~F., Zanen, P., Borecki, I.~B., Deloukas, P., Fox, C.~S.,
  Heid, I.~M., O'Connell, J.~R., Strachan, D.~P., Stefansson, K., van Duijn,
  C.~M., Abecasis, G.~R., Franke, L., Frayling, T.~M., McCarthy, M.~I.,
  Visscher, P.~M., Scherag, A., Willer, C.~J., Boehnke, M., Mohlke, K.~L.,
  Lindgren, C.~M., Beckmann, J.~S., Barroso, I., North, K.~E., Ingelsson, E.,
  Hirschhorn, J.~N., Loos, R. J.~F., and Speliotes, E.~K. (2015).
\newblock Genetic studies of body mass index yield new insights for obesity
  biology.
\newblock {\em Nature}, 518:197--206.

\bibitem[\protect\astroncite{Loos et~al.}{2008}]{loos_common_2008}
Loos, R. J.~F., Lindgren, C.~M., Li, S., Wheeler, E., Zhao, J.~H., Prokopenko,
  I., Inouye, M., Freathy, R.~M., Attwood, A.~P., Beckmann, J.~S., Berndt,
  S.~I., {Prostate, Lung, Colorectal, and Ovarian (PLCO) Cancer Screening
  Trial}, Jacobs, K.~B., Chanock, S.~J., Hayes, R.~B., Bergmann, S., Bennett,
  A.~J., Bingham, S.~A., Bochud, M., Brown, M., Cauchi, S., Connell, J.~M.,
  Cooper, C., Smith, G.~D., Day, I., Dina, C., De, S., Dermitzakis, E.~T.,
  Doney, A. S.~F., Elliott, K.~S., Elliott, P., Evans, D.~M., Sadaf~Farooqi,
  I., Froguel, P., Ghori, J., Groves, C.~J., Gwilliam, R., Hadley, D., Hall,
  A.~S., Hattersley, A.~T., Hebebrand, J., Heid, I.~M., KORA, Lamina, C.,
  Gieger, C., Illig, T., Meitinger, T., Wichmann, H.-E., Herrera, B., Hinney,
  A., Hunt, S.~E., Jarvelin, M.-R., Johnson, T., Jolley, J. D.~M., Karpe, F.,
  Keniry, A., Khaw, K.-T., Luben, R.~N., Mangino, M., Marchini, J., McArdle,
  W.~L., McGinnis, R., Meyre, D., Munroe, P.~B., Morris, A.~D., Ness, A.~R.,
  Neville, M.~J., Nica, A.~C., Ong, K.~K., O'Rahilly, S., Owen, K.~R., Palmer,
  C. N.~A., Papadakis, K., Potter, S., Pouta, A., Qi, L., Study, N.~H.,
  Randall, J.~C., Rayner, N.~W., Ring, S.~M., Sandhu, M.~S., Scherag, A., Sims,
  M.~A., Song, K., Soranzo, N., Speliotes, E.~K., Initiative, D.~G., Syddall,
  H.~E., Teichmann, S.~A., Timpson, N.~J., Tobias, J.~H., Uda, M., Study, S.,
  Vogel, C. I.~G., Wallace, C., Waterworth, D.~M., Weedon, M.~N., Consortium,
  W. T. C.~C., Willer, C.~J., FUSION, Wraight, Yuan, X., Zeggini, E.,
  Hirschhorn, J.~N., Strachan, D.~P., Ouwehand, W.~H., Caulfield, M.~J.,
  Samani, N.~J., Frayling, T.~M., Vollenweider, P., Waeber, G., Mooser, V.,
  Deloukas, P., McCarthy, M.~I., Wareham, N.~J., Barroso, I., Kraft, P.,
  Hankinson, S.~E., Hunter, D.~J., Hu, F.~B., Lyon, H.~N., Voight, B.~F.,
  Ridderstrale, M., Groop, L., Scheet, P., Sanna, S., Abecasis, G.~R., Albai,
  G., Nagaraja, R., Schlessinger, D., Jackson, A.~U., Tuomilehto, J., Collins,
  F.~S., Boehnke, M., and Mohlke, K.~L. (2008).
\newblock Common variants near mc4r are associated with fat mass, weight and
  risk of obesity.
\newblock {\em Nature genetics}, 40(6):768--775.

\bibitem[\protect\astroncite{Markovic and
  Taylor}{2017}]{markovic_bootstrap_2017}
Markovic, J. and Taylor, J. (2017).
\newblock Bootstrap inference after using multiple queries for model selection.
\newblock {\em arXiv preprint arXiv:1612.07811}.

\bibitem[\protect\astroncite{Murray}{2006}]{murray_avoiding_2006}
Murray, M.~P. (2006).
\newblock Avoiding invalid instruments and coping with weak instruments.
\newblock {\em The Journal of Economic Perspectives}, 20(4):111--132.

\bibitem[\protect\astroncite{Negahban et~al.}{2012}]{negahban_unified_2012}
Negahban, S.~N., Ravikumar, P., Wainwright, M.~J., and Yu, B. (2012).
\newblock A {Unified} {Framework} for {High}-{Dimensional} {Analysis} of
  {MM}-{Estimators} with {Decomposable} {Regularizers}.
\newblock {\em Statistical Science}, 27(4):538--557.

\bibitem[\protect\astroncite{Pierce and Burgess}{2013}]{pierce2013efficient}
Pierce, B.~L. and Burgess, S. (2013).
\newblock Efficient design for mendelian randomization studies: subsample and
  2-sample instrumental variable estimators.
\newblock {\em American journal of epidemiology}, 178(7):1177--1184.

\bibitem[\protect\astroncite{Purcell et~al.}{2007}]{purcell_plink_2007}
Purcell, S., Neale, B., Todd-Brown, K., Thomas, L., Ferreira, M.~A., Bender,
  D., Maller, J., Sklar, P., de~Bakker, P.~I., Daly, M.~J., and Sham, P.~C.
  (2007).
\newblock Plink: A tool set for whole-genome association and population-based
  linkage analyses.
\newblock {\em The American Journal of Human Genetics}, 81(3):559 -- 575.

\bibitem[\protect\astroncite{Simmons et~al.}{2011}]{simmons_false_2011}
Simmons, J.~P., Nelson, L.~D., and Simonsohn, U. (2011).
\newblock False-positive psychology: Undisclosed flexibility in data collection
  and analysis allows presenting anything as significant.
\newblock {\em Psychological science}, 22(11):1359--1366.

\bibitem[\protect\astroncite{Small}{2007}]{small_sensitivity_2007}
Small, D.~S. (2007).
\newblock Sensitivity analysis for instrumental variables regression with
  overidentifying restrictions.
\newblock {\em Journal of the American Statistical Association},
  102(479):1049--1058.

\bibitem[\protect\astroncite{Solovieff
  et~al.}{2013}]{solovieff_pleiotropy_2013}
Solovieff, N., Cotsapas, C., Lee, P.~H., Purcell, S.~M., and Smoller, J.~W.
  (2013).
\newblock Pleiotropy in complex traits: challenges and strategies.
\newblock {\em Nature Reviews Genetics}, 14(7):nrg3461.

\bibitem[\protect\astroncite{Speliotes
  et~al.}{2010}]{speliotes_association_2010}
Speliotes, E.~K., Willer, C.~J., Berndt, S.~I., Monda, K.~L., Thorleifsson, G.,
  Jackson, A.~U., Lango~Allen, H., Lindgren, C.~M., Luan, J., M{\"a}gi, R.,
  Randall, J.~C., Vedantam, S., Winkler, T.~W., Qi, L., Workalemahu, T., Heid,
  I.~M., Steinthorsdottir, V., Stringham, H.~M., Weedon, M.~N., Wheeler, E.,
  Wood, A.~R., Ferreira, T., Weyant, R.~J., Segr{\`e}, A.~V., Estrada, K.,
  Liang, L., Nemesh, J., Park, J.-H., Gustafsson, S., Kilpel{\"a}inen, T.~O.,
  Yang, J., Bouatia-Naji, N., Esko, T., Feitosa, M.~F., Kutalik, Z., Mangino,
  M., Raychaudhuri, S., Scherag, A., Smith, A.~V., Welch, R., Zhao, J.~H.,
  Aben, K.~K., Absher, D.~M., Amin, N., Dixon, A.~L., Fisher, E., Glazer,
  N.~L., Goddard, M.~E., Heard-Costa, N.~L., Hoesel, V., Hottenga, J.-J.,
  Johansson, A., Johnson, T., Ketkar, S., Lamina, C., Li, S., Moffatt, M.~F.,
  Myers, R.~H., Narisu, N., Perry, J. R.~B., Peters, M.~J., Preuss, M.,
  Ripatti, S., Rivadeneira, F., Sandholt, C., Scott, L.~J., Timpson, N.~J.,
  Tyrer, J.~P., van Wingerden, S., Watanabe, R.~M., White, C.~C., Wiklund, F.,
  Barlassina, C., Chasman, D.~I., Cooper, M.~N., Jansson, J.-O., Lawrence,
  R.~W., Pellikka, N., Prokopenko, I., Shi, J., Thiering, E., Alavere, H.,
  Alibrandi, M. T.~S., Almgren, P., Arnold, A.~M., Aspelund, T., Atwood, L.~D.,
  Balkau, B., Balmforth, A.~J., Bennett, A.~J., Ben-Shlomo, Y., Bergman, R.~N.,
  Bergmann, S., Biebermann, H., Blakemore, A. I.~F., Boes, T., Bonnycastle,
  L.~L., Bornstein, S.~R., Brown, M.~J., Buchanan, T.~A., Busonero, F.,
  Campbell, H., Cappuccio, F.~P., Cavalcanti-Proen{\c c}a, C., Chen, Y.-D.~I.,
  Chen, C.-M., Chines, P.~S., Clarke, R., Coin, L., Connell, J., Day, I. N.~M.,
  den Heijer, M., Duan, J., Ebrahim, S., Elliott, P., Elosua, R., Eiriksdottir,
  G., Erdos, M.~R., Eriksson, J.~G., Facheris, M.~F., Felix, S.~B.,
  Fischer-Posovszky, P., Folsom, A.~R., Friedrich, N., Freimer, N.~B., Fu, M.,
  Gaget, S., Gejman, P.~V., Geus, E. J.~C., Gieger, C., Gjesing, A.~P., Goel,
  A., Goyette, P., Grallert, H., Gr{\"a}ssler, J., Greenawalt, D.~M., Groves,
  C.~J., Gudnason, V., Guiducci, C., Hartikainen, A.-L., Hassanali, N., Hall,
  A.~S., Havulinna, A.~S., Hayward, C., Heath, A.~C., Hengstenberg, C., Hicks,
  A.~A., Hinney, A., Hofman, A., Homuth, G., Hui, J., Igl, W., Iribarren, C.,
  Isomaa, B., Jacobs, K.~B., Jarick, I., Jewell, E., John, U., J{\o}rgensen,
  T., Jousilahti, P., Jula, A., Kaakinen, M., Kajantie, E., Kaplan, L.~M.,
  Kathiresan, S., Kettunen, J., Kinnunen, L., Knowles, J.~W., Kolcic, I.,
  K{\"o}nig, I.~R., Koskinen, S., Kovacs, P., Kuusisto, J., Kraft, P.,
  Kval{\o}y, K., Laitinen, J., Lantieri, O., Lanzani, C., Launer, L.~J.,
  Lecoeur, C., Lehtim{\"a}ki, T., Lettre, G., Liu, J., Lokki, M.-L., Lorentzon,
  M., Luben, R.~N., Ludwig, B., MAGIC, Manunta, P., Marek, D., Marre, M.,
  Martin, N.~G., McArdle, W.~L., McCarthy, A., McKnight, B., Meitinger, T.,
  Melander, O., Meyre, D., Midthjell, K., Montgomery, G.~W., Morken, M.~A.,
  Morris, A.~P., Mulic, R., Ngwa, J.~S., Nelis, M., Neville, M.~J., Nyholt,
  D.~R., O'Donnell, C.~J., O'Rahilly, S., Ong, K.~K., Oostra, B., Par{\'e}, G.,
  Parker, A.~N., Perola, M., Pichler, I., Pietil{\"a}inen, K.~H., Platou, C.
  G.~P., Polasek, O., Pouta, A., Rafelt, S., Raitakari, O., Rayner, N.~W.,
  Ridderstr{\aa}le, M., Rief, W., Ruokonen, A., Robertson, N.~R., Rzehak, P.,
  Salomaa, V., Sanders, A.~R., Sandhu, M.~S., Sanna, S., Saramies, J.,
  Savolainen, M.~J., Scherag, S., Schipf, S., Schreiber, S., Schunkert, H.,
  Silander, K., Sinisalo, J., Siscovick, D.~S., Smit, J.~H., Soranzo, N.,
  Sovio, U., Stephens, J., Surakka, I., Swift, A.~J., Tammesoo, M.-L., Tardif,
  J.-C., Teder-Laving, M., Teslovich, T.~M., Thompson, J.~R., Thomson, B.,
  T{\"o}njes, A., Tuomi, T., van Meurs, J. B.~J., van Ommen, G.-J., Vatin, V.,
  Viikari, J., Visvikis-Siest, S., Vitart, V., Vogel, C. I.~G., Voight, B.~F.,
  Waite, L.~L., Wallaschofski, H., Walters, G.~B., Widen, E., Wiegand, S.,
  Wild, S.~H., Willemsen, G., Witte, D.~R., Witteman, J.~C., Xu, J., Zhang, Q.,
  Zgaga, L., Ziegler, A., Zitting, P., Beilby, J.~P., Farooqi, I.~S.,
  Hebebrand, J., Huikuri, H.~V., James, A.~L., K{\"a}h{\"o}nen, M., Levinson,
  D.~F., Macciardi, F., Nieminen, M.~S., Ohlsson, C., Palmer, L.~J., Ridker,
  P.~M., Stumvoll, M., Beckmann, J.~S., Boeing, H., Boerwinkle, E., Boomsma,
  D.~I., Caulfield, M.~J., Chanock, S.~J., Collins, F.~S., Cupples, L.~A.,
  Smith, G.~D., Erdmann, J., Froguel, P., Gr{\"o}nberg, H., Gyllensten, U.,
  Hall, P., Hansen, T., Harris, T.~B., Hattersley, A.~T., Hayes, R.~B.,
  Heinrich, J., Hu, F.~B., Hveem, K., Illig, T., Jarvelin, M.-R., Kaprio, J.,
  Karpe, F., Khaw, K.-T., Kiemeney, L.~A., Krude, H., Laakso, M., Lawlor,
  D.~A., Metspalu, A., Munroe, P.~B., Ouwehand, W.~H., Pedersen, O., Penninx,
  B.~W., Peters, A., Pramstaller, P.~P., Quertermous, T., Reinehr, T.,
  Rissanen, A., Rudan, I., Samani, N.~J., Schwarz, P. E.~H., Shuldiner, A.~R.,
  Spector, T.~D., Tuomilehto, J., Uda, M., Uitterlinden, A., Valle, T.~T.,
  Wabitsch, M., Waeber, G., Wareham, N.~J., Watkins, H., Consortium, P.,
  Wilson, J.~F., Wright, A.~F., Zillikens, M.~C., Chatterjee, N., McCarroll,
  S.~A., Purcell, S., Schadt, E.~E., Visscher, P.~M., Assimes, T.~L., Borecki,
  I.~B., Deloukas, P., Fox, C.~S., Groop, L.~C., Haritunians, T., Hunter,
  D.~J., Kaplan, R.~C., Mohlke, K.~L., O'Connell, J.~R., Peltonen, L.,
  Schlessinger, D., Strachan, D.~P., van Duijn, C.~M., Wichmann, H.-E.,
  Frayling, T.~M., Thorsteinsdottir, U., Abecasis, G.~R., Barroso, I., Boehnke,
  M., Stefansson, K., North, K.~E., McCarthy, M.~I., Hirschhorn, J.~N.,
  Ingelsson, E., and Loos, R. J.~F. (2010).
\newblock Association analyses of 249,796 individuals reveal 18 new loci
  associated with body mass index.
\newblock {\em Nature genetics}, 42(11):937--948.

\bibitem[\protect\astroncite{Staiger and
  Stock}{1997}]{staiger_instrumental_1997}
Staiger, D. and Stock, J.~H. (1997).
\newblock Instrumental variables regression with weak instruments.
\newblock {\em Econometrica}, 65(3):557--586.

\bibitem[\protect\astroncite{Stock et~al.}{2002}]{stock_survey_2002}
Stock, J.~H., Wright, J.~H., and Yogo, M. (2002).
\newblock A survey of weak instruments and weak identification in generalized
  method of moments.
\newblock {\em Journal of Business \& Economic Statistics}, 20(4).

\bibitem[\protect\astroncite{Sudlow et~al.}{2015}]{sudlow_UK_2015}
Sudlow, C., Gallacher, J., Allen, N., Beral, V., Burton, P., Danesh, J.,
  Downey, P., Elliott, P., Green, J., Landray, M., Liu, B., Matthews, P., Ong,
  G., Pell, J., Silman, A., Young, A., Sprosen, T., Peakman, T., and Collins,
  R. (2015).
\newblock Uk biobank: An open access resource for identifying the causes of a
  wide range of complex diseases of middle and old age.
\newblock {\em PLOS Medicine}, 12(3):1--10.

\bibitem[\protect\astroncite{Taylor
  et~al.}{2014}]{taylor_forward_stepwise_2014}
Taylor, J., Lockhart, R., Tibshirani, R., and Tibshirani, R. (2014).
\newblock Exact post-selection inference for forward stepwise and least angle
  regression.
\newblock {\em arXiv preprint arXiv:1401.3889}.

\bibitem[\protect\astroncite{Taylor and
  Tibshirani}{2015}]{taylor_statistical_2015}
Taylor, J. and Tibshirani, R.~J. (2015).
\newblock Statistical learning and selective inference.
\newblock {\em Proceedings of the National Academy of Sciences},
  112(25):7629--7634.

\bibitem[\protect\astroncite{Thomas and Conti}{2004}]{thomas_commentary_2004}
Thomas, D.~C. and Conti, D.~V. (2004).
\newblock Commentary: the concept of ?mendelian randomization?
\newblock {\em International journal of epidemiology}, 33(1):21--25.

\bibitem[\protect\astroncite{Thorleifsson
  et~al.}{2008}]{thorleifsson_genome-wide_2008}
Thorleifsson, G., Walters, G.~B., Gudbjartsson, D.~F., Steinthorsdottir, V.,
  Sulem, P., Helgadottir, A., Styrkarsdottir, U., Gretarsdottir, S.,
  Thorlacius, S., Jonsdottir, I., Jonsdottir, T., Olafsdottir, E.~J.,
  Olafsdottir, G.~H., Jonsson, T., Jonsson, F., Borch-Johnsen, K., Hansen, T.,
  Andersen, G., Jorgensen, T., Lauritzen, T., Aben, K.~K., Verbeek, A.,
  Roeleveld, N., Kampman, E., Yanek, L.~R., Becker, L.~C., Tryggvadottir, L.,
  Rafnar, T., Becker, D.~M., Gulcher, J., Kiemeney, L.~A., Pedersen, O., Kong,
  A., Thorsteinsdottir, U., and Stefansson, K. (2008).
\newblock Genome-wide association yields new sequence variants at seven loci
  that associate with measures of obesity.
\newblock {\em Nature Genetics}, 41:18--24.

\bibitem[\protect\astroncite{Tian et~al.}{2016}]{tian_magic_2016}
Tian, X., Bi, N., and Taylor, J. (2016).
\newblock Magic: a general, powerful and tractable method for selective
  inference.
\newblock {\em arXiv preprint arXiv:1607.02630}.

\bibitem[\protect\astroncite{Tian and Taylor}{2016}]{tian_randomized_2016}
Tian, X. and Taylor, J. (2016).
\newblock Selective inference with a randomized response.
\newblock {\em arXiv preprint arXiv:1507.06739}.

\bibitem[\protect\astroncite{Tibshirani}{1996}]{tibshirani_regression_1996}
Tibshirani, R. (1996).
\newblock Regression shrinkage and selection via the lasso.
\newblock {\em Journal of the Royal Statistical Society. Series B
  (Methodological)}, pages 267--288.

\bibitem[\protect\astroncite{Timpson et~al.}{2009}]{timpson2009does}
Timpson, N.~J., Harbord, R., Davey~Smith, G., Zacho, J., Tybj{\ae}rg-Hansen,
  A., and Nordestgaard, B.~G. (2009).
\newblock Does greater adiposity increase blood pressure and hypertension risk?
  mendelian randomization using the fto/mc4r genotype.
\newblock {\em Hypertension}, 54(1):84--90.

\bibitem[\protect\astroncite{Vimaleswaran
  et~al.}{2013}]{vimaleswaran_causal_2013}
Vimaleswaran, K.~S., Berry, D.~J., Lu, C., Tikkanen, E., Pilz, S., Hiraki,
  L.~T., Cooper, J.~D., Dastani, Z., Li, R., Houston, D.~K., Wood, A.~R.,
  Michaëlsson, K., Vandenput, L., Zgaga, L., Yerges-Armstrong, L.~M.,
  McCarthy, M.~I., Dupuis, J., Kaakinen, M., Kleber, M.~E., Jameson, K., Arden,
  N., Raitakari, O., Viikari, J., Lohman, K.~K., Ferrucci, L., Melhus, H.,
  Ingelsson, E., Byberg, L., Lind, L., Lorentzon, M., Salomaa, V., Campbell,
  H., Dunlop, M., Mitchell, B.~D., Herzig, K.-H., Pouta, A., Hartikainen,
  A.-L., the Genetic Investigation~of Anthropometric Traits (GIANT)~consortium,
  Streeten, E.~A., Theodoratou, E., Jula, A., Wareham, N.~J., Ohlsson, C.,
  Frayling, T.~M., Kritchevsky, S.~B., Spector, T.~D., Richards, J.~B.,
  Lehtimäki, T., Ouwehand, W.~H., Kraft, P., Cooper, C., März, W., Power, C.,
  Loos, R. J.~F., Wang, T.~J., Järvelin, M.-R., Whittaker, J.~C., Hingorani,
  A.~D., and Hyppönen, E. (2013).
\newblock Causal relationship between obesity and vitamin d status:
  Bi-directional mendelian randomization analysis of multiple cohorts.
\newblock {\em PLOS Medicine}, 10(2):1--13.

\bibitem[\protect\astroncite{Voight et~al.}{2012}]{voight_plasma_2012}
Voight, B.~F., Peloso, G.~M., Orho-Melander, M., Frikke-Schmidt, R., Barbalic,
  M., Jensen, M.~K., Hindy, G., H{\'o}lm, H., Ding, E.~L., Johnson, T.,
  Schunkert, H., Samani, N.~J., Clarke, R., Hopewell, J.~C., Thompson, J.~F.,
  Li, M., Thorleifsson, G., Newton-Cheh, C., Musunuru, K., Pirruccello, J.~P.,
  Saleheen, D., Chen, L., Stewart, A.~F., Schillert, A., Thorsteinsdottir, U.,
  Thorgeirsson, G., Anand, S., Engert, J.~C., Morgan, T., Spertus, J., Stoll,
  M., Berger, K., Martinelli, N., Girelli, D., McKeown, P.~P., Patterson,
  C.~C., Epstein, S.~E., Devaney, J., Burnett, M.-S., Mooser, V., Ripatti, S.,
  Surakka, I., Nieminen, M.~S., Sinisalo, J., Lokki, M.-L., Perola, M.,
  Havulinna, A., de~Faire, U., Gigante, B., Ingelsson, E., Zeller, T., Wild,
  P., de~Bakker, P. I.~W., Klungel, O.~H., Maitland-van~der Zee, A.-H., Peters,
  B. J.~M., de~Boer, A., Grobbee, D.~E., Kamphuisen, P.~W., Deneer, V. H.~M.,
  Elbers, C.~C., Onland-Moret, N.~C., Hofker, M.~H., Wijmenga, C., Verschuren,
  W.~M., Boer, J.~M., van~der Schouw, Y.~T., Rasheed, A., Frossard, P.,
  Demissie, S., Willer, C., Do, R., Ordovas, J.~M., Abecasis, G.~R., Boehnke,
  M., Mohlke, K.~L., Daly, M.~J., Guiducci, C., Burtt, N.~P., Surti, A.,
  Gonzalez, E., Purcell, S., Gabriel, S., Marrugat, J., Peden, J., Erdmann, J.,
  Diemert, P., Willenborg, C., K{\"o}nig, I.~R., Fischer, M., Hengstenberg, C.,
  Ziegler, A., Buysschaert, I., Lambrechts, D., Van~de Werf, F., Fox, K.~A.,
  El~Mokhtari, N.~E., Rubin, D., Schrezenmeir, J., Schreiber, S., Sch{\"a}fer,
  A., Danesh, J., Blankenberg, S., Roberts, R., McPherson, R., Watkins, H.,
  Hall, A.~S., Overvad, K., Rimm, E., Boerwinkle, E., Tybjaerg-Hansen, A.,
  Cupples, L.~A., Reilly, M.~P., Melander, O., Mannucci, P.~M., Ardissino, D.,
  Siscovick, D., Elosua, R., Stefansson, K., O'Donnell, C.~J., Salomaa, V.,
  Rader, D.~J., Peltonen, L., Schwartz, S.~M., Altshuler, D., and Kathiresan,
  S. (2012).
\newblock Plasma hdl cholesterol and risk of myocardial infarction: a mendelian
  randomisation study.
\newblock {\em The Lancet}, 380(9841):572--580.

\bibitem[\protect\astroncite{Windmeijer et~al.}{2016}]{windmeijer2016use}
Windmeijer, F., Farbmacher, H., Davies, N., and Davey~Smith, G. (2016).
\newblock On the use of the lasso for instrumental variables estimation with
  some invalid instruments.
\newblock Technical report, Department of Economics, University of Bristol, UK.

\bibitem[\protect\astroncite{Wooldridge}{2010}]{wooldridge_econometrics_2010}
Wooldridge, J.~M. (2010).
\newblock {\em Econometric Analysis of Cross Section and Panel Data}.
\newblock MIT press, 2nd ed. edition.

\bibitem[\protect\astroncite{Zhao et~al.}{2018a}]{zhao2018powerful}
Zhao, Q., Chen, Y., Wang, J., and Small, D.~S. (2018a).
\newblock Powerful three-sample genome-wide design and robust statistical
  inference in summary-data mendelian randomization.
\newblock {\em International journal of epidemiology}.

\bibitem[\protect\astroncite{Zhao et~al.}{2018b}]{zhao2018statistical}
Zhao, Q., Wang, J., Hemani, G., Bowden, J., and Small, D.~S. (2018b).
\newblock Statistical inference in two-sample summary-data mendelian
  randomization using robust adjusted profile score.
\newblock {\em arXiv preprint arXiv:1801.09652}.

\end{thebibliography}

\end{document}